\newenvironment{rem}[1][{}]{\smallbreak \noindent  {\bf Remark #1}\small }
\def\RR{{\mathbb R}}
\def\CC{{\mathbb C}}
\def\HH{{\mathbb H}}
\def\NN{{\mathbb N}}
\def\OO{{\mathbb O}}
\def\starad{\mathrm{ad}^*}
\def\starAd{\mathrm{Ad}^*}
\def\ad{\mathrm{ad}}
\def\Ad{\mathrm{Ad}}
\def\bra{\langle}
\def\ket{\rangle}
\newtheorem{theorem}{Theorem}
\newtheorem{definition}{Definition}
\newtheorem{proposition}{Proposition}
\newtheorem{lemma}{Lemma}
\begin{document}

	\title{Particle models from special Jordan backgrounds and spectral triples}
	\author{F. Besnard$^{a,}$\footnote{fabien.besnard@epf.fr}{~,~}S. Farnsworth$^{b,}$\footnote{shane.farnsworth@aei.mpg.de}}
	
	\maketitle
	 \begin{center}
   $^a$EPF \'Ecole d'ing\'enieurs, France\\
   $^b$Max Planck Institute for Gravitational Physics (Albert Einstein Institute), Germany.
	\end{center} 
 
	\begin{abstract}
		 We put forward a definition for spectral triples and algebraic backgrounds based on Jordan coordinate algebras. We also propose natural and gauge-invariant bosonic configuration spaces of fluctuated Dirac operators and compute them for general, almost-associative, Jordan, coordinate algebras. We emphasize that the theory so obtained is not equivalent with  usual associative noncommutative geometry, even when the coordinate algebra is the self-adjoint part of a $C^*$-algebra. In particular, in the Jordan case, the gauge fields are always unimodular, thus curing a long-standing problem in noncommutative geometry.
	\end{abstract}

\section{Introduction}


One of the major achievements of 20\textsuperscript{th} century fundamental physics 
    was the discovery that elementary particles are subject to internal symmetries, i.e. symmetries which are not associated with the four dimensions of spacetime. These symmetries are described by gauge theory, which 
successfully incorporates  the non-gravitational fundamental forces among the elementary particles, with mass generation accounted for by the presence of scalar fields (also known as Higgs fields). While this description is 
highly successful from a phenomenological perspective,  the situation remains somewhat unsatisfactory, as already at the classical level gravitational and non-gravitational forces are described by different kinds of theories. A common theme has been the development of a quantum theory that unifies gravity with the other fundamental forces associated with internal gauge symmetries.

 In the 1990's noncommutative geometry (NCG) emerged as an intriguing unifying framework\footnote{See \cite{Chamseddine:2020} for a historical survey.}. 
 NCG provides an approach very similar to Kaluza-Klein theory, but in which the extra-dimensions     are noncommutative, which means they are described by a (finite-dimensional) noncommutative  $C^*$-algebra. In short, noncommutative geometry allows one to replace the usual internal Riemannian space of a Kaluza-Klein theory with a more general kind of geometry that avoids all of the stability issues that would otherwise  arise through compactification. Within the NCG framework all of the fundamental forces are unified, with the internal forces corresponding to `gravity' in the internal geometry, with the  Dirac operator taken to be the dynamical variable. Moreover, Higgs fields do not have to be put in by hand, and have a natural interpretation as  the finite components of the Dirac operator. It must be added that NCG  is not only   a  reformulation, but is more restrictive than gauge theory and is predictive\cite{Chamseddine:2012fk,Dungen,BesBrou}. 
 
 Despite the many successful features of the NCG approach to unification, it has suffered from some technical problems (see the introduction of \cite{AB1} for a list). By now most of these problems have  admitted at least partial solutions. For instance, only a Euclidean version of NCG was available at first,   but some aspects have now been extended to general signature. However, the only action principle which is known to work for those extensions does not include gravity\footnote{Some progress on this front may be close, see \cite{VietMichal}.}. Hence the \emph{signature problem} may be said to be partially resolved. On the other hand, the so called \emph{unimodularity problem} remains a puzzle within the NCG framework. In brief, this problem is that  one has to remove a $U(1)$-factor by hand through a \emph{ad hoc} `{unimodularity condition}' in order to recover the correct gauge group for the Standard Model in NCG~\cite{Chamseddine:2007oz}. As was argued in~\cite{Farnsworth:2020ozj}, the unimodularity problem is automatically resolved by replacing associative noncommutative coordinate algebras by Jordan algebras, hence moving from noncommutative to nonassociative geometry of Jordan type\footnote{This fact was first noted, although not published, by Latham Boyle.}. {  One of the key motivations} for the present work is to back this claim with a more formal proof.

 This paper focuses on the description of gauge theories as nonassociative geometries of Jordan type. Several works have already been devoted to non-associative geometry and in particular Jordan geometry (\cite{Farnsworth:2014vva}, \cite{Boyle:2020}, \cite{Boyle:2014wba}), with  one of the original motivations being  a   resolution to the so called `fermion doubling' problem, that arises in the associative NCG construction of the standard model of particle  physics\cite{LMMS}, but which is neatly circumvented in the Jordan setting~\cite{Boyle:2020} (see also Barrett's related solution~\cite{Barrett2007}, and that proposed by Connes~\cite{Connes:2006qv}). Further motivation arises from the geometric construction of $E_6$ grand unified theories, where  a related approach based on the exceptional Jordan  algebra has emerged (\cite{DV:2016}, \cite{DV:2019}, see also \cite{krasnov,boyle2020}) and drawn legitimate attention~\footnote{See also \cite{Furey:2018drh}  for an unrelated but interesting approach to finding standard model representation spaces using nonassociative algebras.}. However, despite this attention, the detailed framework of NCG, including the axioms of  spectral triples, the definition of the fields, and so on,  have not yet been completely formalized in the Jordan setting. The main goal of this paper is to propose a complete translation from the NCG framework. In doing so, we will observe that the Jordan formalism admits a more natural representation of the symmetries of the algebra inside the triple. Moreover, we will show in more explicit detail that the unimodularity problem disappears. Our primary focus will be  on special  Jordan coordinate algebras, although we will also briefly discuss the more general case at the end of the paper. It is important to observe that Jordan geometry is not equivalent to associative NCG even in the special (\emph{i.e.} non-exceptional) case, as the resolution of the unimodularity and fermion doubling problems show.

Let us summarize more precisely the content of this paper. In section \ref{SectionAssoc} we   briefly recall the axioms of  (associative) spectral triples, and how particle models such as the Standard Model can be defined using them. In particular we  recall that unitary elements of the algebra give rise to symmetries of the model thanks to a map $\Upsilon$. Hence a factor $M_3(\CC)$ in the algebra generates a $U(3)$ group of symmetries while we would need a $SU(3)$ for the Standard Model,  thus creating the unimodularity problem. In section \ref{secAB} we  recall the notion of algebraic background, which one of the authors has proposed as an analogue of the background manifold for a noncommutative Kaluza-Klein theory. Algebraic backgrounds permit one to define a configuration space in which the Dirac operator (i.e. the dynamic variable capturing all Bosonic degrees of freedom), lives. In section \ref{secConfigFluct} we recall the definition of this space and explain how fluctuated Dirac operators form a subspace which is invariant under non-gravitational symmetries. Hence, the space of fluctuated Dirac operators can be used as the bosonic configuration space for a particle model in which gravity is ignored. This will end the reminders about associative noncommutative geometry. 

In order to deal with Jordan geometry, we first recall a few facts on Jordan algebras, Jordan-Banach algebras and their derivations in section \ref{prelimJB}. In particular we recall the notions of associative and multiplicative representations. The first kind of representation only exists for special  algebras, while the latter is a generalization which is available for all Jordan algebras.  In section \ref{secJBT}, we propose a definition of Jordan spectral triples and Jordan algebraic backgrounds in the non-exceptional case. 
In particular we observe that a real special Jordan triple or background is always equipped not only with an associative representation $\pi$ and an opposite  action $\pi^o$ (i.e. a bi-representation), but, thanks to the order $0$ condition, also with a multiplicative representation $S=\frac{1}{2}(\pi+\pi^o)$, which turns out to play a key role. It should also to be noted that the module of Jordan 1-forms (replacing noncommutative 1-forms), is both a Lie module and a Jordan module. 

Section \ref{fluctuationsection} is devoted to fluctuations and symmetries, a key  part of the present paper. In particular we show that, at least for \emph{almost-associative algebras}, i.e. algebras $A=\mathcal{C}(M,A_F)$  of continuous functions with values in a finite-dimensional Jordan algebra $A_F$\footnote{Defined in analogy with `almost-commutative' coordinate algebras of noncommutative geometry.}, the Lie algebra of inner derivations  is isomorphic through the derivative of  $\Upsilon$     to $[S(A),S(A)]$. This means that inner derivations, and by exponentiation inner automorphisms, are directly represented on the Hilbert space of the Jordan triple, while in the associative case one had to use unitary elements. This is the reason why Jordan triples naturally implement unimodularity: the elements of  $[S(A),S(A)]$  are traceless. In section \ref{ordersymmetries} we give an interpretation of the order $1$ condition which is peculiar to the Jordan setting and allows us to define a generalized form of fluctuations. Section \ref{AAsec} is a specialization to the case of `almost-associative' Jordan algebras.  We give a motivated definition for the module of Jordan 1-forms in this case, and most importantly we compute the space of fluctuated Dirac operators of an almost-associative triple and put forward a condition (weaker that the order $1$ condition) under which it is automorphism invariant. Sections \ref{BFmodel} and \ref{PSmodel} are applications to the B-L extension of the standard model studied in \cite{Boyle:2020} and the Pati-Salam model {  respectively}. 
In section \ref{secbeyond} {  we discuss the extension of  the results of this paper to the general setting (i.e. including exceptional Jordan coordinate algebras) by} directly using a multiplicative representation $\rho$, without assuming that it has the form $S=\frac{1}{2}\pi$, {  where $\pi$ is an associative representation}. This generalization applies in particular to the exceptional case where no associative representation exists. We show that inner derivations are still directly implemented on the Hilbert space, and we discuss the representation of 1-forms and the fluctuation of Dirac operators in this more general setting.


In the whole paper, the ``Standard Model'' we discuss includes 3 generations of right-handed neutrinos and the see-saw mechanism. We will use the symbol $\dagger$ to denote the involution in a $*$-algebra. In particular, if $T$ is an operator we will write $T^\dagger$ for its adjoint. The symbol $*$ will be used for complex conjugates.  We will generally use the symbol $A$ for a Jordan algebra and $\mathcal{A}$ for an associative one.

\section{Associative spectral triples and the Standard Model}\label{SectionAssoc}

In this section we recall some basics on spectral triples in the usual setting involving $C^*$-algebras, as well as the less familiar notion of algebraic backgrounds and what they are good for. We will talk about \emph{associative} spectral triples in order to distinguish them from the Jordan spectral triples to be defined below. A { real, even, } associative, spectral triple\footnote{We give the definition of a real and even spectral triple. This is the only case we consider in this paper.} is a  multiplet $\mathcal{T}=(\mathcal{A},H,\pi,D,\chi,J)$, where $\mathcal{A}$ is a unital $*$-algebra, $H$ is a Hilbert space, $\pi$ a $*$-representation of $\mathcal{A}$ on $H$ which we suppose to be faithful, $D$ a formally selfadjoint operator on $H$, $\chi$ a bounded selfadjoint operator, and $J$ an anti-unitary operator such that:

\begin{itemize}
\item the norm closure of $\pi(\mathcal{A})$ is a $C^*$-algebra,
\item for each $a\in \mathcal{A}$, $\pi(a)$ is in the domain of the derivation $[D,.]$,
\item $\chi$ commutes with $\pi(\mathcal{A})$ and anticommutes with $D$, while $J$ commutes with $D$,
\item $\chi^2=1$,
\item for each $a,b\in \mathcal{A}$, $[\pi(a),\pi(b)^o]=0$ ($C_0$),
\item $\forall a,b\in \mathcal{A}$, $[\pi(a)^o,[D,\pi(b)]]=0$,\ ($C_1$),
\item $J\chi=\epsilon'' \chi J$, $J^2=\epsilon$, where $\epsilon,\epsilon''=\pm 1$,
\end{itemize}
where we have used the very handy notation $T^o:=JT^\dagger J^{-1}$. Note, that since we are dealing with the even case we can, without loss of generality, choose the real structure such that $JD = DJ$\cite{Farnsworth:2016qbp}.

The idea behind these axioms is that  the elements of $\mathcal{A}$ represent virtual differentiable functions on a noncommutative manifold. We will often refer to the ``manifold case'', which is the paradigmatic example where $\mathcal{A}$ is really the algebra of differentiable functions and $D$ the canonical Dirac operator on a spin manifold\footnote{For the precise role of each object, consult \cite{costarica} or \cite{Walterbook}.}. In order to state a reconstruction theorem \cite{Connes:reconstruction}, additional conditions  having to do with the topology of the   manifold under consideration would be required, but we will not need them in this paper. Note that the order 1 condition ($C_1$), plays an essential role  in the manifold case since it ensures that $D$ is a first-order differential operator. One gains access to more general geometries, however, by dropping $C_1$. As many of these more general geometries are physically interesting, including the B-L extension given in \cite{Bes21}\footnote{See also the B-L extension given in~\cite{Boyle:2020}, which maintains $C_1$ but relies on outer automorphisms to obtain an extended standard model gauge group.} or the Pati-Salam model \cite{Chamseddine:2013uq}, our discussion will focus primarily on the more general setting. We also need to recall the definition of \emph{noncommutative $1$-forms}, since they will soon turn out to play a key role. A noncommutative $1$-form $\omega$ is a finite sum  

\begin{equation}
\omega=\sum_i \pi(a_i)[D,\pi(b_i)]\label{def1forms}
\end{equation}

with $a_i,b_i\in \mathcal{A}$. The space of all such $1$-forms is an associative $\mathcal{A}$-bimodule denoted $\Omega^1_D{\mathcal A}$. In the manifold case a noncommutive $1$-form is a field with values in the space generated by gamma-matrices. It can be understood as a vector field seen as an operator on spinors.

Let us now briefly explain how particle models are  defined using spectral triples. There are three fundamental insights :

\begin{description}
\item[a.] The Dirac operator is a replacement for the manifold metric.
\item[b.]\label{secidea} The symmetry group of a gauge theory coupled to gravity, which is the semi-direct  product of the diffeomorphism group with the gauge symmetry group, can be interpreted (roughly) as the automorphism group of the $*$-algebra $\mathcal{A} = \mathcal{C}^\infty(M)\otimes \mathcal{A}_F$, where $\mathcal{A}_F$ is a finite-dimensional algebra.
\item[c.] The bosonic part of the Standard Model coupled to gravity can thus be rewritten as a unified theory with the Dirac operator as the variable, defined on an almost-commutative manifold, i.e. the tensor product of the canonical triple over a manifold with a finite-dimensional noncommutative triple. As a bonus, we shall recover the Higgs fields as the finite component of the Dirac operator.
\end{description}

Let us now look at the technical implementation of these beautiful ideas. A first difficulty arises with the second one. The gauge group of the Standard Model is $U(1)\times SU(2)\times SU(3)/(\mathbb{Z}_2\times \mathbb{Z}_3)$. Now $SU(2)\times SU(3)/(\mathbb{Z}_2\times \mathbb{Z}_3)$ is the automorphism group of the algebra $M_2(\CC)\oplus M_3(\CC)$, but we cannot make $U(1)$ appear in this way. This problem can be solved by considering unitary elements of the algebra instead of automorphisms (this will be justified below). Then we can consider  the algebra $\mathcal{A}_{SM}=\CC\oplus \HH\oplus M_3(\CC)$ which has unitary group $U(1)\times SU(2)\times U(3)$. We see that  we still need to reduce $U(3)$ to $SU(3)$, which is dealt with by the \emph{unimodularity condition} to be explained below.

\begin{rem}
Let us observe that the algebra $\mathcal{A}_{SM}$ deviates a little bit from the $C^*$-paradigm since it is not a complex but a real algebra. There is a theory of real $C^*$-algebras \cite{Li}, but it has some subtleties, in particular concerning the Gelfand-Naimark duality which is at the root of the noncommutative geometry program. This difficulty, which is seldom emphasized, might be an additional clue hinting towards Jordan algebras, which are better behaved than real $C^*$-algebras in many respects.
\end{rem}

Let us now turn to the gauge fields. Gauge fields appear when we ``fluctuate'' the manifold Dirac operator $D_M$ with an inner automorphism $u$ of the algebra, i.e. we replace $D_M$ with $\pi(u)D_M\pi(u)^{-1}$ where

\begin{equation}
u : x\mapsto u(x)\label{umap}
\end{equation} 

is a smooth map from $M$ to the unitary group of the finite algebra. Rewriting $\pi(u)D_M\pi(u)^{-1}$ as $D_M+\pi(u)[D_M,\pi(u)^{-1}]$, we see the term $\pi(u)[D_M,\pi(u)^{-1}]$ appear, which is a `pure gauge' field with values in the Lie algebra of the gauge group. The idea with the replacement $D_M\mapsto \pi(u)D_M\pi(u)^{-1}$ is that $\pi(u)$ acts as an automorphism of the spectral triple, and that $\pi(u)D_M\pi(u)^{-1}$ is just as good a Dirac operator as $D_M$ is. The problem is that we haven't yet specified what  a spectral triple automorphism is exactly and what  \emph{a} Dirac operator is, as opposed to \emph{the} Dirac operator we start with. Clearly, we need a structure in which the Dirac operator is allowed to vary, which is not the case with a spectral triple. Let us use the following definitions.

\begin{definition}
A \emph{pre-spectral triple} $B = (\mathcal{A},H,\pi,\chi,J)$ is the same thing as a spectral triple with the Dirac operator removed. A  pre-spectral triple automorphism    is a unitary operator $U$ such that:
\begin{enumerate}
\item $U\pi(\mathcal{A})U^{-1}=\pi(\mathcal{A})$,
\item $U$ commutes with $J$ and $\chi$.
\end{enumerate}
 A spectral triple automorphism is a pre-spectral triple automorphism which also commutes with the Dirac operator.
\end{definition}

 Note that pre-spectral triple are called \emph{fermion spaces} in \cite{Barrett2015}.

Let $\mathrm{Aut} (\mathcal{A})$ be the group of   $*$-automorphisms of $\mathcal{A}$  and $\mathrm{Aut}(B)$ the group of pre-spectral triple automorphisms of $B$.  There is a natural homomorphism $\mathrm{Aut}(B)\rightarrow \mathrm{Aut} (\mathcal{A})$ which sends $U$ to the restriction of $\mathrm{Ad}_U$  on $\pi(\mathcal{A})$, and thus defines an automorphism of $\mathcal{A}$ since $\pi$ is faithful. What we would need in order to fully implement  insight {\bf b} above is a section of this homomorphism. This  does not exist in general, however we can always lift the unitary elements of $\mathcal{A}$ thanks to the following map: for any $a \in \mathcal{A}$  define 

\begin{equation}
\Upsilon(a)=\pi(a)J\pi(a)J^{-1}
\end{equation}

Thanks to $C_0$, the map $\Upsilon$   satisfies for all $a,b\in \mathcal{A}$

\begin{enumerate}
\item  $[J,\Upsilon(a)]=0$,
\item $\Upsilon(ab)=\Upsilon(a)\Upsilon(b)$,
\item $\Upsilon(a^\dagger)=\Upsilon(a)^\dagger$.
\end{enumerate}
 
In particular $\Upsilon$ defines a group homomorphism from $U(\mathcal{A})$, the unitary group of $\mathcal{A}$,  to $U(H)$, the group of unitary operators on $H$. Moreover, we immediately see using $C_0$ that for $u\in U(\mathcal{A})$, $\Upsilon(u)$ is a pre-spectral triple automorphism such that $\mathrm{Ad}_{\Upsilon(u)}=\mathrm{Ad}_u$. We thus have the following commutative diagram:
\begin{equation}
\xymatrix{U(\mathcal{A}) \ar[r]^{\Upsilon}\ar[d]_{\mathrm{Id}} &  \mathrm{Aut}(B)\ar[d]_{\mathrm{Ad}}  \cr
U(\mathcal{A})\ar[r]_{\mathrm{Ad}} & \mathrm{Aut}_{\rm Inn}(\mathcal{A})
}
\end{equation}
where $\mathrm{Aut}_{\rm Inn}(\mathcal{A})$ is the group of inner $*$-automorphisms of $\mathcal{A}$.
\begin{rem}
As explained by one of the authors in \cite{Boyle:2014wba}, a beautiful and compact way of understanding pre-spectral triples and their  automorphisms is in terms of  `Eilenberg' algebras. That is, a pre-spectral triple can be thought of as a graded, involutive algebra $B$\footnote{We will use the same notation `$B$' when describing a  pre-spectral triple as an Eilenberg algebra, as the two descriptions are  equivalent.} with vector space $\mathcal{A} \oplus H$. The algebra product between elements in $\mathcal{A}$ is provided by the usual product on $\mathcal{A}$, while the product between elements in $\mathcal{A}$ and $H$ is provided by the representation of $\mathcal{A}$ on $H$, and the product between elements in $H$ is always equal to zero. The involution on elements in $\mathcal{A}$ is simply inherited from $\mathcal{A}$ itself, while the involution on elements in $H$ is provided by $J$. The operator $\chi$ provides a grading on $H$. As indicated in the above diagram, the unitary elements $\Upsilon(u)$ for $u\in \mathcal{A}$, then satisfy the defining properties of automorphisms on the `Eilenberg' algebra $B$, and in particular:
\begin{align}
\Upsilon(u)(\pi(a)h) &= \pi(\Ad_{u}a)\Upsilon(u)(h)
\end{align}
for all $a\in \mathcal{A}$ and $h\in H$. The observation that Upsilon should be viewed as an automorphism on $B$ is key to generalizing to the non-special Jordan setting, as well as  to the non-associative setting more generally~\cite{Farnsworth:2014vva}.




\end{rem}

Let us see how $\Upsilon$ works in the example of the Standard Model. Since a unitary element of $\mathcal{C}(M,\mathcal{A}_{\mathrm{SM}})$ is a function with values in $U(\mathcal{A}_{\mathrm{SM}})$, we only need to examine the finite part of the model.  The finite Hilbert space is then 

\begin{equation}
H_F=\CC^2_{\mathrm{weak}}\otimes\CC^4_{\rm colour}\otimes \CC^4_\chi\otimes \CC^N\label{SMHilbertSpace}
\end{equation}

where $\CC^2_{\mathrm{weak}}$ is the weak isospin space, the \emph{colour space} $\CC^4_{\rm color}$ is $\CC^4=\CC_{\ell}\oplus\CC^3_{rgb}$, where ``leptonicity'' is seen as a fourth colour, and  the \emph{chirality space} $\CC_\chi$  is generated by the letters $R,L, \bar R,\bar L$. The last factor $\CC^N$ is the generation space, where $N$ is usually equals to $3$.  The  finite representation $\pi_F : \mathcal{A}_{\rm SM}\rightarrow\mathrm{End}(H_F)$ is 

\begin{equation}
\pi_F(\lambda,q,m)=[ \begin{pmatrix}
\lambda&0\cr 0&\lambda^*
\end{pmatrix}\otimes 1_4,q\otimes 1_4,\lambda 1_2\oplus 1_2\otimes m,\lambda 1_2\oplus 1_2\otimes m]\otimes 1_N\label{reppif}
\end{equation}

where $[A,B,C,D]$ denotes a matrix which is block-diagonal with respect to the chirality space.   The real structure $J_F$ acts trivially on $\CC^2_{\mathrm{weak}}$ and $\CC^4_{\mathrm{colour}}$, and it acts as $(R,L,\bar R,\bar L)\mapsto(\bar R,\bar L,R,L)$ on the basis of $\CC^4_\chi$. Hence $J_F[A,B,C,D]J_F^{-1}=[C^*,D^*,A^*,B^*]$. Now if $u=(e^{i\theta},q,m)$ is a unitary element of $\mathcal{A}_{\mathrm{SM}}$, with $q\in SU(2)$ and $m\in U(3)$, we have

\begin{equation}
\Upsilon(u)=[\begin{pmatrix}
1& 0 \cr 0&e^{-2i\theta}
\end{pmatrix}\oplus \begin{pmatrix}
e^{i\theta}&0\cr 0&e^{-i\theta}
\end{pmatrix}\otimes m^*,qe^{-i\theta}\oplus q\otimes m^*,c.c.,c.c.]\otimes 1_N\label{UpsilonSM}
\end{equation} 

where c.c. means that the last two blocks are the complex conjugate of the first two. To get rid of the extra $U(1)$ we impose the \emph{unimodularity condition} $\det(\pi_F(u))=1$. This is equivalent (up to a finite abelian group, which we neglect here) to ask $m$ to be of the form $m=e^{-i\theta/3}g$, with $g\in SU(3)$, and  \eqref{UpsilonSM} becomes

\begin{equation}
\Upsilon(u)=[\begin{pmatrix}
1& 0 \cr 0&e^{-2i\theta}
\end{pmatrix}\oplus \begin{pmatrix}
e^{4i\theta/3}&0\cr 0&e^{-2i\theta/3}
\end{pmatrix}\otimes g^*,qe^{-i\theta}\oplus qe^{i\theta/3}\otimes g^*,c.c.,c.c.]\otimes 1_N \label{UpsilonSMunimod}
\end{equation}
where $g^*$ is a generic element of $SU(3)$. We see that this yields the correct representation of the gauge group, which is remarkable. Nonetheless, we would prefer to obtain the unimodularity condition in a natural way rather than by simply imposing it by hand. This would be possible if the finite algebra $\mathcal{A}_F$ was  complex and $\pi_F$ a complex representation. Indeed, we would have $\Upsilon(e^{i\theta }u)=e^{i\theta}\pi(u)Je^{i\theta}\pi(u)J^{-1}=\Upsilon(u)$, so that $\Upsilon$ would yield a representation of $U(\mathcal{A}_F)/U(1)$, which can be identified up to a finite center with $SU(\mathcal{A}_F)$. However, even ignoring the problem of the algebra $\HH$, we could not have $\CC$ act differently on up and down right-handed leptons in \eqref{reppif}, and thus we would not end up with the correct hypercharges. It thus seem that, as far as this problem is concerned, we have come to a dead end\footnote{In the case  of the Standard Model, the unimodularity condition is equivalent to the cancellation of quantum anomalies. We   could thus argue that the extra $U(1)$ is an artefact of the classical theory. Unfortunately this equivalence does not hold for all particle models.}.

\section{Algebraic backgrounds}\label{secAB}

Pre-spectral triples have been in use implicitly in noncommutative geometry since the first particle models were defined, but have been investigated explicitly only   recently \cite{AB1}. A number of problems have been found. To understand them, let us consider the case of  pure gravity on a manifold. Since we have a Hilbert space of spinors, we will consider tetradic gravity on a spin manifold. Pre-spectral triple automorphisms should correspond to the symmetries of this theory, i.e. to diffeomorphisms coupled with lifts of the spin group (sometimes called local Lorentz transformations in physics). However, they do not: as soon as $\dim(M)\ge 6$ there are extra local automorphisms \cite{AB1}. It means that we are  missing a background structure, left invariant by the true automorphisms. We can understand what it is by considering the algebraic definition of the spin group. Let $g$ be a metric (of possibly indefinite signature) on $\RR^n$, with $n$ even. Let us embed $\RR^n$ in the complex Clifford algebra $\CC l(\RR^n,g)$. The latter is equipped with a natural bilinear form $\tilde g$ extending $g$, a real structure $J$ and a chirality $\chi$ (see \cite{SSTpart1} for details). Moreover the spin group $\mathrm{Spin}(g)$ precisely contains the elements $u$ of $\CC l(\RR^n,g)$ such that :

\begin{enumerate}
\item $u$ is unitary,
\item $u$ commutes with $J$ and $\chi$,
\item $\mathrm{Ad}_u$ leaves $\RR^n$ invariant.
\end{enumerate}

If we translate these conditions in the case of the canonical spectral triple over a spin manifold, we obtain a pre-spectral triple automorphism with an additional property: it leaves the bimodule of $1$-forms $\Omega^1_D{\mathcal A}$ invariant. Despite the notation and equation \eqref{def1forms}, the latter does not really depend on $D$, but rather on the differential and spin structures\footnote{On a Riemannian manifold it is possible  to define spin structures independently of the metric \cite{LudwikSpin}. An even more straightforward solution is available (in any signature) if the manifold is parallelizable \cite{AB1}.}. This motivates us to define a new structure, falling in between a pre-spectral triple and a spectral triple, and which we call an \emph{algebraic background}. 

\begin{definition}
An \emph{algebraic background} (or background for short) is   a pre-spectral triple equipped with an $\mathcal{A}$-bimodule $\Omega^1$ which is odd\footnote{It must also have a non-trivial configuration space, see  definition \ref{defconfspace} below.} ({  i.e.} it anticommutes with $\chi$). A \emph{background automorphism} is a pre-spectral triple automorphism $U$ such that $U\Omega^1U^{-1}=\Omega^1$.
\end{definition}

With this definition,  automorphisms  of the canonical background over a manifold bijectively correspond to  products of diffeomorphisms with local Lorentz transformations, as expected. Moreover, in the case of the Standard Model background, we find in addition to the latter the gauge transformations as well as gauged B-L symmetries \cite{AB2}\cite[$\S 4.2$]{ShaneThesis}. There are {  many} more pre-spectral triple automorphisms in this case, notably flavour changing symmetries. Note that $\Upsilon(u)$ is   a background automorphism provided 

\begin{equation}
\pi(u)^o\Omega^1\pi(u^{-1})^o=\Omega^1\label{weakC1}
\end{equation}

which  is a weaker condition than $C_1$, so we call it \emph{weak $C_1$}.
An example of a model in which  $C_1$ fails  while \emph{weak $C_1$} continues to hold can be found in  \cite{weakC1}, in which  a $B-L$ extension of the standard model was obtained by enlarging the standard model coordinate algebra. We will discuss the relationship between $C_1$ and symmetries in more detail in section \ref{fluctuationsection}.

Let us summarize this section so far with the following analogy: just as GR is not defined on a Riemannian manifold, but on the configuration space of all metrics compatible with the differential structure of the background manifold, a generalized gravity model in NCG cannot be defined on a spectral triple, but instead on the configuration space of all Dirac operators compatible with a more primitive structure not including the Dirac operator. There are two options (as of yet): pre-spectral triples and algebraic backgrounds. We prefer the latter since it gives better results as far as automorphisms are concerned. Hence we will view algebraic backgrounds as the noncommutative equivalent of the background manifold with its differential (and spin) structure. It should be noted that no reconstruction theorem in the spirit of \cite{Connes:reconstruction} has been proven to date for this structure\footnote{However, in the commutative \emph{and finite} case, an algebraic background encodes a finite graph \cite{BesnardCorfu} while a pre-spectral triple only encodes a set of points, and it is known that giving a differential structure to a finite set is equivalent to turning it into a graph \cite{Dimakis}.}.

\section{Configuration space and fluctuations}\label{secConfigFluct}

Algebraic backgrounds, or, for that matter, pre-spectral triples, permit  one to define a configuration space for a  generalized gravity theory in a natural way.

\begin{definition}\label{defconfspace} The \emph{configuration space} of an algebraic background is the space of all operators $D$ satisfying the conditions in the definition of a spectral triple and such that $\Omega^1_D{\mathcal A}\subset \Omega^1$. If $D$ is such that $\Omega^1_D{\mathcal A}=\Omega^1$, we say that $D$ is \emph{regular}. If there exists a regular $D$ on a background, we say that the background is regular.
\end{definition}

The configuration space of a pre-spectral triple can be similarly defined, except that the condition $\Omega^1_D{\mathcal A}\subset \Omega^1$ is replaced with $C_1$. It is worthy of note that the two definitions give the same result in the case of a manifold (compare \cite{costarica}, chapter 11 and \cite{AB1}). 

The configuration space of a background or pre-sprectral triple is immediately checked to be invariant under the relevant automorphisms. In particular, if $D$ is a Dirac operator and $u$ is a unitary element of the algebra, then $\Upsilon(u)D\Upsilon(u)^{-1}$ is a Dirac operator. Let us write it in the form $D+F_{u}$ where

\begin{equation}
F_u:=\Upsilon(u)D\Upsilon(u)^{-1}-D\label{AssocInnerFluct1}
\end{equation}
 is called an \emph{inner fluctuation} of $D$, since it is associated to an inner automorphism of the algebra, while a Dirac operator like $D+F_u$ is called a \emph{fluctuated Dirac}. It is from such fluctuations that non-gravitational bosonic fields arise in noncommutative geometry.  Physically, fluctuations like \eqref{AssocInnerFluct1} only yield pure gauge fields, so we need a larger space. The standard choice is $D+\mathcal{F}_D$, where 

\begin{equation}
\mathcal{F}_D=\{\omega+\omega^o|\omega\in\Omega^1, \omega^\dagger=\omega\}.\label{DefFluctAssoc}
\end{equation}
An element of $\mathcal{F}_D$ is a called a (general)  \emph{fluctuation}. If we neglect gravity, it is enough to define the physical theory on a   subspace of the configuration space containing all fluctuated Dirac operators $D+F$, where $F\in \mathcal{F}_D$, and with a fixed $D$ (corresponding for instance to the Minkowski metric). We can justify the choice given in Eq.~\eqref{DefFluctAssoc}, by observing that fluctuations given in \eqref{AssocInnerFluct1}  are indeed of the form $\omega+\omega^o$, with $\omega=\pi(u)[D,\pi(u)^{-1}]$, and that $D+\mathcal{F}_D$ is an affine space invariant under the relevant  symmetries, namely the background automorphisms fixing $D$. Also observe that if $u=\exp(a)$, with $a^\dagger=-a$, then $\mathcal{F}_D$ contains $[D,\pi(a)]+[D,\pi(a)]^o$ which is the infinitesimal version of \eqref{AssocInnerFluct1}. Note that $\mathrm{ad}_a$ is an inner derivation of $\mathcal{A}$ which commutes with the $*$-operation, and that such derivations exponentiate to $*$-automorphisms. There is thus the following commutative diagram:

\begin{equation}
\xymatrix{{\mathrm Aut}_{\mathrm Inn}( \mathcal{A})& U(\mathcal{A})\ar[l]_{\mathrm{Ad}}\ar[r]^{\Upsilon} &  {\mathcal U}(\mathcal{A}) \cr
	{\mathrm Der}_{\mathrm Inn}(\mathcal{A})\ar[u]^{\exp}&\ar[l]_{\mathrm{ad}} \mathrm{Skew}(\mathcal{A})\ar[u]^{\exp} \ar[r]^{d_e\Upsilon}& {\mathcal G}(\mathcal{A})\ar[u]^{\exp}
}\label{assocliftdiag}
\end{equation}

where have used the following notations: $\mathrm{Skew}(\mathcal{A})$ is the space of anti-selfadjoint elements of $\mathcal{A}$, $\mathcal{U}(\mathcal{A})$ is $\Upsilon(U(\mathcal{A}))$ (and is a subgroup of $\mathrm{Aut}(S)$ which we can call the subgroup of gauge transformations), $d_e\Upsilon(a)=a-a^o$ is the differential of $\Upsilon$ at the identity, and  $\mathcal{G}(\mathcal{A}):=d_e\Upsilon(\mathrm{Skew}(\mathcal{A}))$ is the space of infinitesimal gauge transformations. Note that both $\mathcal{U}(\mathcal{A})$ and $\mathcal{G}(\mathcal{A})$ act on $\mathcal{F}_D$ thanks to the bimodule structure of $\Omega^1_D$. However, since  the arrows on the left cannot be inversed (because of central elements), we cannot make the ${\mathrm Aut}_{\mathrm Inn}( \mathcal{A})$ or ${\mathrm Der}_{\mathrm Inn}(\mathcal{A})$ directly act on fluctuations.

\section{Preliminaries on Jordan algebras}\label{prelimJB}
\subsection{Generalities}
 A (real) Jordan algebra\footnote{In this paper we will only consider real Jordan algebras.} is a real vector space $A$ equipped with a bilinear, commutative product $\circ$ satisfying the Jordan identity \cite{AlfsenShultz}:

\begin{equation}
\forall a,b\in A,\ (a^2\circ b)\circ a=a^2\circ(b\circ a)\label{JordanIdentity}
\end{equation}

Note that while Jordan algebras are in general not associative,  every Jordan algebra is power associative, i.e. $a^n$ has an unambiguous meaning for all $n\in\mathbb{N}$. All the Jordan algebras considered in this paper will be unital.

Let $\mathcal{A}$ be an associative algebra equipped with the product 
\begin{equation}
a\circ b=\frac{1}{2}(ab+ba)\label{circprod}
\end{equation}
and let $A$ be a real subspace of $\mathcal{A}$ stable under $\circ$ (an important example is when $\mathcal{A}$ is a $*$-algebra and $A$ is the space of selfadjoint elements). Then $(A,\circ)$ is a Jordan algebra. A Jordan algebra isomorphic to one of this kind is called \emph{special}, otherwise it is called  \emph{exceptional}. We will be primarily interested in the special case in this paper.

Let us now introduce some notations and examples. For $\mathbb{K}=\RR,\CC,\HH$ and $n\in \NN^*$, we denote by $H_n(\mathbb{K})$ the Jordan algebra of selfadjoint elements of $M_n(\mathbb{K})$. We similarly define $H_3(\mathbb{O})$ where $\mathbb{O}$ is the non-associative algebra of octonions. It turns out that $H_3(\mathbb{O})$ is a Jordan algebra (called the Albert algebra), which is exceptional.   Let us also introduce the \emph{spin factor} $\mathrm{JSpin}(n)$, which is the sub-Jordan algebra generated by $\RR^n$ in the Clifford algebra $\mathrm{Cl}(\RR^n,g)$, where $g$ is the canonical scalar product, equipped with \eqref{circprod}. It is thus generated by vectors of $\RR^n$ with the product $u\circ v=g(u,v)$. 

Let us now introduce \emph{Jordan-Banach} algebras (or JB-algebras).

\begin{definition} A JB algebra is a normed Jordan algebra $A$ which is complete in the norm and satisfies, for all $a,b\in A$,
\begin{enumerate}
\item $\|a\circ b\|\le \|a\|\|b\|$,
\item $\|a^2\|=\|a\|^2$,
\item $\|a^2\|\le \|a^2+b^2\|$.
\end{enumerate}
\end{definition}

The main example of a JB alebra is the selfadjoint part of a $C^*$-algebra equipped with the symmetrized product given in Eq.~\eqref{circprod}. Just as with $C^*$-algebras, JB algebras admit a continuous functional calculus. Moreover, just as commutative $C^*$-algebras correspond to algebras of complex functions on compact Hausdorff spaces by the Gelfand transform, the same is true of associative JB algebras. More precisely,   a JB algebra is associative iff it is isomorphic to the algebra of real functions on a compact Hausdorff space (see~\cite{Farnsworth:2020ozj} and references therein for a more in depth discussion of the relationship between JB algebras and $C^*$-algebras). Finite-dimensional JB algebras can be completely classified.

\begin{theorem}\label{ClassificationTheorem}
Every finite-dimensional JB-algebra is a direct sum of ones on this list:
\begin{enumerate}
\item $H_n(\mathbb{K})$ for $\mathbb{K}=\RR,\CC$ or $\HH$,
\item $\mathrm{JSpin}(n)$,
\item $H_3(\OO)$.
\end{enumerate}
\end{theorem}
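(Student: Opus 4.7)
The plan is to follow the classical Jordan--von Neumann--Wigner strategy, which reduces the classification to (i) identifying a ``coordinate'' composition algebra inside $A$ via a Peirce decomposition, and (ii) invoking Hurwitz's theorem to restrict the possibilities to $\RR,\CC,\HH,\OO$.

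First I would reduce to the simple case. The axioms $\|a^2\|=\|a\|^2$ and $\|a^2\|\le\|a^2+b^2\|$ immediately imply that $A$ is \emph{formally real}: $\sum a_i^2=0\Rightarrow a_i=0$. In finite dimension this is equivalent to the existence of a positive-definite associative trace form, and standard arguments (using the center $Z(A)$, which is an associative JB-algebra and hence isomorphic to $\RR^k$ by the Gelfand-type theorem stated just above) show that $A$ decomposes uniquely as a direct sum of simple ideals, each of which is again a JB-algebra. So I may assume $A$ simple.

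Next I would deploy the Peirce decomposition. Pick a maximal family of mutually orthogonal primitive idempotents $e_1,\dots,e_r$ (they sum to $1$; the existence uses the continuous functional calculus and the formally real property to split idempotents). Setting
\begin{equation*}
A_{ii}=\{a\in A\mid e_i\circ a=a\},\qquad A_{ij}=\{a\in A\mid e_i\circ a=e_j\circ a=\tfrac12 a\}\ (i<j),
\end{equation*}
one gets $A=\bigoplus_{i\le j}A_{ij}$ with $A_{ii}=\RR e_i$ by primitivity. The Jordan identity forces all the off-diagonal pieces $A_{ij}$ ($i<j$) to be isomorphic as vector spaces to a single space $\mathbb{K}$, and the composition $A_{ij}\times A_{jk}\to A_{ik}$ given by $2(x\circ y)$ (for $i,j,k$ distinct) endows $\mathbb{K}$ with a unital bilinear product, while the trace form restricts to a positive-definite quadratic norm satisfying $N(xy)=N(x)N(y)$. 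Thus $\mathbb{K}$ is a real normed composition algebra, and Hurwitz's theorem yields $\mathbb{K}\in\{\RR,\CC,\HH,\OO\}$.

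Finally I would do the case analysis on the rank $r$. For $r=1$, $A=\RR$, absorbed into several entries of the list. For $r=2$, the only off-diagonal piece carries no triple product constraint and the algebra reconstructs as a spin factor $\mathrm{JSpin}(n)$ with $n=\dim A_{12}+1$. For $r\ge 3$ a direct computation shows that $A$ is isomorphic to $H_r(\mathbb{K})$ with the symmetrized matrix product, provided the hermitian matrix algebra over $\mathbb{K}$ is actually Jordan. The main obstacle, and the deepest point of the proof, is showing that this forces $\mathbb{K}$ to be \emph{associative} as soon as $r\ge 4$: applying the Jordan identity \eqref{JordanIdentity} to elements supported in four distinct Peirce blocks $A_{ij},A_{jk},A_{k\ell}$ yields the associator identity in $\mathbb{K}$, ruling out $\OO$. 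Only when $r=3$ is this obstruction absent, allowing the exceptional Albert algebra $H_3(\OO)$ to survive. Combining everything produces exactly the list in the statement.
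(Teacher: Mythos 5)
The paper does not prove this theorem at all: it is the classical Jordan--von Neumann--Wigner classification of finite-dimensional formally real Jordan algebras (1934), quoted as known (the remark following the statement makes the equivalence with formal reality explicit, and the relevant references are Alfsen--Shultz and Jacobson's book from the bibliography). Your outline is precisely the canonical proof of that classical theorem, and it is sound as a sketch: the norm axioms do give formal reality (note that passing from two squares to $k$ squares uses that a sum of squares is itself a square, via the positive cone and functional calculus); the positive-definite associative trace form gives semisimplicity and the reduction to the simple case; and the Peirce decomposition, coordinatization, Hurwitz step, and rank case analysis are the standard route. Three points deserve flagging. First, it is simplicity (connectivity of the frame of primitive idempotents), not the Jordan identity by itself, that forces all the off-diagonal spaces $A_{ij}$ to be isomorphic and nonzero; the Jordan identity enters later, in the Peirce multiplication rules. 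Second, your phrase ``four distinct Peirce blocks $A_{ij},A_{jk},A_{k\ell}$'' names three blocks indexed by four distinct idempotents --- that is the correct configuration (it is exactly why the obstruction is vacuous at $r=3$), but the wording should be fixed. Third, the clause ``a direct computation shows that $A$ is isomorphic to $H_r(\mathbb{K})$'' conceals the genuinely heavy part of the argument, namely the Jordan coordinatization theorem (one must build compatible identifications of the $A_{ij}$ via connecting elements $u_{ij}$ with $u_{ij}^2=e_i+e_j$, verify that the resulting product on $\mathbb{K}$ is alternative with the standard involution, and check that $H_3(\OO)$ really is Jordan); as a blind sketch this is acceptable to invoke, but it is where most of the actual work lives.
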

\begin{rem} Note that the same theorem holds for finite-dimensional formally real Jordan algebra, i.e. Jordan algebras such that $a_1^2+\ldots+a_k^2=0\Rightarrow a_1=\ldots=a_k=0$ for a any sum of squares.
\end{rem}

The following definitions can be found\footnote{Except for  a reversal in the products  due to a  different  convention   for map composition.} in \cite{jacobson1968structure}, chap. 2.

\begin{definition} Let $A$ be a Jordan algebra and $\mathcal{A}$ be an associative algebra. A linear map $\sigma : A\rightarrow \mathcal{A}$ is called
\begin{enumerate}
\item an \emph{associative specialization} of $A$ in $\mathcal{A}$ if for all $a,b\in A$,

\begin{equation}
\sigma(a\circ b)=\sigma(a)\circ \sigma(b)\label{assocrep}
\end{equation}
\item a \emph{multiplicative specialization} of $A$ in $\mathcal{A}$ if for all $a,b\in A$,

\begin{eqnarray}
[\sigma(a),\sigma(a^2)]&=&0,\cr
2\sigma(a)\sigma(b)\sigma(a)+\sigma(a^2\circ b)&=&2\sigma(a\circ b)\sigma(a)+\sigma(a^2)\sigma(b)\label{multrep}
\end{eqnarray}
\end{enumerate}
\end{definition}

\begin{theorem}\label{JacTh14}
Let $\sigma_1$ and $\sigma_2$ be two associative specializations of $A$ in $\mathcal{A}$ such that $[\sigma_1(A),\sigma_2(A)]=0$. Then $\rho=\frac{1}{2}(\sigma_1+\sigma_2)$ is a multiplicative specialization of $A$ in $\mathcal{A}$.
\end{theorem}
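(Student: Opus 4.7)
I will denote $u := \sigma_1(a)$, $v := \sigma_2(a)$, $p := \sigma_1(b)$, $q := \sigma_2(b)$. Since each $\sigma_i$ is an associative specialization, $\sigma_i(a^2) = \sigma_i(a)^2$. The hypothesis $[\sigma_1(A), \sigma_2(A)] = 0$ yields the four commutations $[u, v] = [u, q] = [v, p] = [p, q] = 0$, while $[u, p]$ and $[v, q]$ are generally nonzero; this last point is the whole crux, since an associative specialization by itself need not be a multiplicative specialization.

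The first identity $[\rho(a), \rho(a^2)] = 0$ is immediate, because $4[\rho(a), \rho(a^2)] = [u+v,\, u^2+v^2] = [u, v^2] + [v, u^2] = 0$ using $[u,v]=0$.

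For the second identity, my plan is to first derive an ``associativity defect'' formula for $\rho$, valid for arbitrary $c, d \in A$:
\begin{equation*}
\rho(c \circ d) = \rho(c)\rho(d) + \rho(d)\rho(c) - \tfrac{1}{2}\bigl(\sigma_1(c)\sigma_2(d) + \sigma_2(c)\sigma_1(d)\bigr),
\end{equation*}
together with the auxiliary relation $\rho(a^2) = 2\rho(a)^2 - uv$. Both follow from direct expansion using only the four allowed commutations (the cross-terms $\sigma_1(c)\sigma_2(d)$ and $\sigma_2(c)\sigma_1(d)$ are automatically symmetric under $c \leftrightarrow d$ by the hypothesis). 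Substituting these identities, applied to $(c,d) = (a,b)$ and $(c,d) = (a^2,b)$, into the difference of the two sides of the second multiplicative identity, the $2\rho(a)\rho(b)\rho(a)$ terms cancel outright, and the $\rho(a^2)\rho(b)$ terms partially combine with the auxiliary relation. The identity then reduces to verifying
\begin{equation*}
-\rho(b)\,uv + (uq + vp)\rho(a) - \tfrac{1}{2}(u^2 q + v^2 p) = 0.
\end{equation*}
This I would check by expanding the products and normalizing each monomial using the four commutations; the key simplifications are $uqu = u^2 q$, $vpv = v^2 p$, $uqv = uvq$, and $vpu = puv$, after which every remaining monomial appears once with each sign and cancels.

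The main technical obstacle is purely combinatorial bookkeeping: one must scrupulously avoid invoking $[u, p] = 0$ or $[v, q] = 0$, which are not hypotheses. The identity works precisely because no monomial in the residual expression ever requires transposing $u$ past $p$ or $v$ past $q$; all simplifications go through the four ``cross'' commutations granted by the hypothesis. This is essentially Jacobson's original argument, see \cite{jacobson1968structure}, Chap.~2.
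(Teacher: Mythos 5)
Your overall strategy is sound, and it is worth noting that the paper itself gives no proof of this theorem: it is quoted from Jacobson's book (the label of the theorem points there), so a self-contained verification like yours is exactly what one would want to supply. I checked your intermediate claims. The defect formula
\begin{equation*}
\rho(c\circ d)=\rho(c)\rho(d)+\rho(d)\rho(c)-\tfrac{1}{2}\bigl(\sigma_1(c)\sigma_2(d)+\sigma_2(c)\sigma_1(d)\bigr)
\end{equation*}
is correct (the cross terms symmetrize under $c\leftrightarrow d$ thanks to the hypothesis, as you say), the auxiliary relation $\rho(a^2)=2\rho(a)^2-uv$ is correct, and substituting them with $(c,d)=(a,b)$ and $(c,d)=(a^2,b)$ into the second condition of a multiplicative specialization does make the $2\rho(a)\rho(b)\rho(a)$ terms cancel and reduces everything to your residual $R:=-\rho(b)uv+(uq+vp)\rho(a)-\tfrac{1}{2}(u^2q+v^2p)$. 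Expanding, $2R=-puv-quv+uqu+uqv+vpu+vpv-u^2q-v^2p$. The first condition $[\rho(a),\rho(a^2)]=0$ is also verified correctly.

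However, one of your four ``key simplifications'' is false as written: $uqv\neq uvq$ in general. Passing from $uqv$ to $uvq$ transposes $q$ past $v$, i.e.\ silently invokes $[v,q]=0$ --- precisely the forbidden move you warn against in your own closing paragraph. And the error is not harmless at face value: with $uqv=uvq$ the final cancellation would fail, since $uvq-quv=[v,q]u$ need not vanish. The correct manipulation moves only $u$, which commutes with both $q$ and $v$ by hypothesis: $uqv=quv=qvu$. With this repair the residual cancels monomial by monomial --- $uqu=u^2q$ kills $-u^2q$, $vpv=v^2p$ kills $-v^2p$, $vpu=puv$ kills $-puv$ (this one you stated correctly), and $uqv=quv$ kills $-quv$ --- so $R=0$ and the theorem follows. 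In short: the architecture of your proof is correct and complete, but fix the one-letter slip $uvq\mapsto quv$ before it goes anywhere near print.
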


An immediate corollary is that $\frac{1}{2}$ times an associative specialization is automatically a multiplicative specialization, so that the latter are more or less a generalization of the former. A  specialization in  ${\rm End}(H)$, will be called a \emph{representation}. The reason for such a terminology is the relationship between multiplicative representations and Jordan modules. There is a general theory for modules over non-associative algebras \cite{eilenberg}. In the Jordan case this boils down to the following definition.

\begin{definition}\label{defjordanmodule} Let $A$ be a Jordan algebra, $H$ be a vector space and $S : A\rightarrow {\rm End}(H)$ be a linear map. Let $\circ$ be the  bilinear product on $A\oplus H$ extending $\circ$ on $A$ and such that $a\circ h=h\circ a=S_ah$, $h\circ h'=0$, for all $a\in A$, $h,h'\in H$. We say that $(H,S)$ is a \emph{Jordan module}  if $(A\oplus H,\circ)$ is a Jordan algebra.
\end{definition}

\begin{theorem} With the notations of definition \ref{defjordanmodule}, $(H,S)$ is a Jordan module iff $S$ is a multiplicative representation.
\end{theorem}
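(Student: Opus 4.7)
The plan is to verify the Jordan identity $(x^2 \circ y)\circ x = x^2 \circ (y\circ x)$ directly on $A\oplus H$. Commutativity of $\circ$ on $A\oplus H$ is immediate (it is given on $A$, built into the definition $a\circ h = h\circ a = S_a h$, and trivial because $H\circ H = 0$), so the Jordan identity is all that remains. First I would set $x = a + h$, $y = b + k$ with $a,b\in A$ and $h,k\in H$; using $h\circ h = 0$ one obtains $x^2 = a^2 + 2 S_a h$, and I would then expand both sides of the identity, repeatedly using $H\circ H = 0$ to kill all would-be terms involving two or three $H$-factors.

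The expansion splits naturally into an $A$-component (terms with neither $h$ nor $k$) and an $H$-component (terms linear in $h$ or in $k$; no $h$--$k$ cross term survives). The $A$-component collapses to $(a^2\circ b)\circ a = a^2\circ (a\circ b)$, which holds by the Jordan identity inside $A$. The $H$-component is linear in the free vectors $h$ and $k$, so I would equate their coefficients independently. The coefficient of $k$ produces $[S_a, S_{a^2}] = 0$, which is the first defining condition of a multiplicative specialization; the coefficient of $h$ produces
\[
S_{a^2\circ b} + 2 S_a S_b S_a \;=\; S_{a^2} S_b + 2 S_{a\circ b} S_a,
\]
which is the second. Both directions of the \emph{iff} fall out of this single computation, so one calculation settles the equivalence.

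The only real difficulty is careful bookkeeping of the expansion. The annihilation $H\circ H = 0$ trims the surviving terms sharply, so no ingenuity is required; the main care point is to keep $h$ and $k$ as independent free vectors, so that the two multiplicative-specialization conditions emerge as separate equations rather than being conflated into a single strictly weaker relation.
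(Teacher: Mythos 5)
Your proof is correct: expanding the Jordan identity on $A\oplus H$ with $x=a+h$, $y=b+k$, the quadratic rule $H\circ H=0$ gives $x^2=a^2+2S_ah$, the $A$-component reduces to the Jordan identity in $A$, the coefficient of $k$ gives $[S_a,S_{a^2}]=0$, and the coefficient of $h$ gives the second identity of \eqref{multrep}, with both directions of the equivalence read off from the same computation. The paper states this theorem without proof (it is a standard fact, implicit in the cited module theory of Jacobson and Eilenberg), and your direct verification is exactly the expected argument, so there is nothing further to compare.
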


An equivalent requirement is \cite{carotenuto2019}:
\begin{eqnarray}
[S_a,S_{b\circ c}]+[S_c,S_{a\circ b}]
+[S_b,S_{c\circ a}]&=&0,\cr
S_aS_bS_c+S_cS_bS_a + S_{(ac)b}&=&S_aS_{bc}+S_{b}S_{ac}+S_{c}S_{ab},\label{equivmultrep}
\end{eqnarray}


which is   obtained by linearization of \eqref{multrep}. Summing all the cyclic permutations of the second equation in \eqref{equivmultrep} and using Jacobi's identity also entails 
\begin{align}
[[S_x,S_y],S_z]=S_{[y,z,x]},\label{order0}
\end{align}
where \emph{the associator} is defined to be $[x,y,z]:=(xy)z-x(yz)$.  We will make extensive use of \eqref{order0} in what follows.  Note also the following natural properties.
 
\begin{proposition}\label{multamod} Let ${A}$ be a Jordan algebra and ${ A'}$ be a subalgebra. Then ${ A}$ is an ${A'}$-module for the action $L_ab=a\circ b$.
\end{proposition}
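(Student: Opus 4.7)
My plan is to prove Proposition~\ref{multamod} by directly verifying the two defining identities of a multiplicative representation given in equation~\eqref{multrep} for the map $S = L : A' \to \mathrm{End}(A)$, $a \mapsto L_a$, and then invoking the theorem that identifies Jordan modules with multiplicative representations. Since $A'$ is a subalgebra of $A$, for any $a \in A'$ the operator $L_a : A \to A$ is well-defined, so this setup makes sense.

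For the first identity $[L_a, L_{a^2}] = 0$, observe that applying the left-hand side to an arbitrary $b\in A$ gives $a\circ(a^2\circ b) - a^2\circ(a\circ b)$, which vanishes by commutativity of $\circ$ together with the Jordan identity~\eqref{JordanIdentity}. So this identity is in fact equivalent to the Jordan identity itself and holds for any $a$ in the whole algebra $A$, in particular for $a\in A'$.

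The second identity $2 L_a L_b L_a + L_{a^2\circ b} = 2 L_{a\circ b} L_a + L_{a^2} L_b$ I would obtain by linearizing the Jordan identity. Concretely, substituting $a \mapsto a + tc$ in $(a^2\circ b)\circ a = a^2\circ(b\circ a)$ and reading off the coefficient of $t$ yields
\begin{equation}
(a^2\circ b)\circ c + 2((a\circ c)\circ b)\circ a = 2(a\circ c)\circ(b\circ a) + a^2\circ(b\circ c),
\end{equation}
which, since $c\in A$ is arbitrary and since the three terms involving $c$ as an innermost factor rewrite as $L_{a^2\circ b} c$, $L_a L_b L_a c$, $L_{a\circ b} L_a c$ and $L_{a^2} L_b c$ respectively, gives exactly the desired operator identity on $A$. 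Restricting to $a,b \in A'$ then shows that $L : A' \to \mathrm{End}(A)$ is a multiplicative representation, and the theorem immediately before Proposition~\ref{multamod} concludes that $(A, L)$ is a Jordan $A'$-module.

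I expect no real obstacle here: the content is essentially that the Jordan identity and its first linearization are precisely the two multiplicative-representation identities, and that both are hereditary to subalgebras. The only point deserving a brief remark is that $L_a(A) \subset A$ (so that $L_a \in \mathrm{End}(A)$) is automatic since $A$ is itself a Jordan algebra, and no subalgebra property of $A'$ beyond closure under $\circ$ (ensuring $a^2, a\circ b \in A'$ for $a,b\in A'$) is needed in the verification.
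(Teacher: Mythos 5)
Your proof is correct. Note that the paper states Proposition~\ref{multamod} without any proof at all (it is treated as a standard fact from the theory of Jordan modules), so there is no argument of the paper's to compare against; what you have written is the standard verification that fills this gap. Concretely: your check that $[L_a,L_{a^2}]=0$ is, as you say, just the Jordan identity \eqref{JordanIdentity} combined with commutativity of $\circ$, and your linearization is computed correctly --- the coefficient of $t$ in $((a+tc)^2\circ b)\circ(a+tc)=(a+tc)^2\circ(b\circ(a+tc))$ is
\begin{equation}
(a^2\circ b)\circ c + 2((a\circ c)\circ b)\circ a = 2(a\circ c)\circ(b\circ a) + a^2\circ(b\circ c),
\end{equation}
which, read as operators applied to the arbitrary element $c$, is exactly $2L_aL_bL_a+L_{a^2\circ b}=2L_{a\circ b}L_a+L_{a^2}L_b$, i.e.\ the second identity of \eqref{multrep} for $S=L$. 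Your closing remark is also the right point to flag: the only role of the subalgebra hypothesis is that $a^2$, $a\circ b$ and $a^2\circ b$ remain in $A'$, so that the operators $L_{a^2}$, $L_{a\circ b}$, $L_{a^2\circ b}$ appearing in \eqref{multrep} are defined on the domain of the representation; the identities themselves hold in all of $A$. With that, the unnumbered theorem following Definition~\ref{defjordanmodule} (Jordan modules $\Leftrightarrow$ multiplicative representations) indeed concludes the proof, exactly as you invoke it.
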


A Jordan algebra is thus a Jordan module over itself. 

\begin{proposition} Let $A,{A'}$ be   Jordan algebras and $(H,S)$ be a ${A'}$-module. Let $\phi : A\rightarrow {A'}$ be a homomorphism. Then $(H,S\circ \phi)$ is an $A$-module.
\end{proposition}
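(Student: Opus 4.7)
My plan is to reduce the statement to the characterization of Jordan modules via multiplicative representations established just above in the excerpt. By the theorem following Definition \ref{defjordanmodule}, $(H, S\circ\phi)$ is an $A$-module if and only if $S\circ\phi$ is a multiplicative representation of $A$ in $\mathrm{End}(H)$. So the task reduces to verifying the two defining identities \eqref{multrep} for $\sigma := S\circ\phi$, using the fact that $S$ already satisfies them over $A'$ and that $\phi$ is a Jordan homomorphism.

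First, I would record the two identities $\phi(a^2)=\phi(a)^2$ and $\phi(a\circ b)=\phi(a)\circ\phi(b)$, which follow from $\phi$ being a Jordan homomorphism (the first is a consequence of the second by polarization or directly from preservation of $\circ$). Then, for the commutation axiom, I apply $S$ to the elements $\phi(a),\phi(a^2)=\phi(a)^2 \in A'$ and use $[S(x), S(x^2)] = 0$ for $x = \phi(a)$; this gives $[(S\circ\phi)(a), (S\circ\phi)(a^2)] = 0$ immediately.

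For the second, nonlinear identity, I would substitute $x = \phi(a)$, $y = \phi(b)$ into the multiplicative axiom satisfied by $S$:
\begin{equation*}
2S(x)S(y)S(x) + S(x^2\circ y) = 2S(x\circ y)S(x) + S(x^2)S(y),
\end{equation*}
and then rewrite using the homomorphism property: $x^2 = \phi(a^2)$, $x\circ y = \phi(a\circ b)$, and $x^2\circ y = \phi(a^2\circ b)$. Replacing these inside the arguments of $S$ converts the identity into the corresponding identity for $\sigma = S\circ\phi$ on $a,b \in A$.

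There is no real obstacle here: the whole proof is a one-line chase, since Jordan homomorphisms by definition intertwine all polynomial Jordan expressions in $a$ and $b$, and both sides of the multiplicative specialization axioms are such polynomial expressions. If one preferred a more conceptual route, one could instead build the split extension $A\oplus H$ directly: the bilinear product on $A\oplus H$ coming from $S\circ\phi$ is the pullback along $\phi\oplus\mathrm{id}_H : A\oplus H \to A'\oplus H$ of the Jordan product on $A'\oplus H$, and since $\phi\oplus\mathrm{id}_H$ is a linear map intertwining the two products, the Jordan identity on $A'\oplus H$ pulls back to the Jordan identity on $A\oplus H$. Either way, the proposition follows with essentially no work.
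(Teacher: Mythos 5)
Your main argument is correct and complete: the paper states this proposition without proof, and the verification you give---substituting $x=\phi(a)$, $y=\phi(b)$ into the identities \eqref{multrep} satisfied by $S$ and using $\phi(a^2)=\phi(a)^2$, $\phi(a\circ b)=\phi(a)\circ\phi(b)$, $\phi(a^2\circ b)=\phi(a)^2\circ\phi(b)$---is exactly the routine check the authors leave implicit, combined (in both directions) with the theorem identifying Jordan modules with multiplicative representations.

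One caveat on your alternative ``conceptual'' route: the map $\Phi=\phi\oplus\mathrm{id}_H$ is indeed a homomorphism of the two split extensions, but a homomorphism only transports identities \emph{forward}, giving $\Phi(\mathrm{LHS})=\Phi(\mathrm{RHS})$; the Jordan identity does not literally ``pull back'' along $\Phi$ unless $\phi$ is injective, which is not assumed. The gap is easily repaired componentwise: the $A$-component of the Jordan identity in $A\oplus H$ holds because $A$ is itself a Jordan algebra, while the $H$-component of each side, written out using $h\circ h'=0$, involves $a,b$ only through $S_{\phi(a)}$, $S_{\phi(b)}$, $S_{\phi(a^2)}=S_{\phi(a)^2}$, $S_{\phi(a\circ b)}=S_{\phi(a)\circ\phi(b)}$ and $S_{\phi(a^2\circ b)}=S_{\phi(a)^2\circ\phi(b)}$, hence coincides with the $H$-component of the identity in $A'\oplus H$ evaluated at $(\phi(a),h)$, $(\phi(b),h')$, on which $\Phi$ restricts to the identity of $H$. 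With that sentence added, the second route is also sound; your first route needs no repair.
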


It follows from this that if $\sigma$ is an associative representation of $A$ on $H$, then $B(H)$ is an $A$-module for the action $S_aT=\sigma(a)\circ T$, with $T\in B(H)$. The reason is that $(B(H),\circ)$ is a Jordan algebra, hence a Jordan module over itself.

 Let $(H,S)$ and $(H',S')$ be two modules over $A$ and $f : H\rightarrow H'$ be a linear map. Then $f$ is defined to be a \emph{module homomorphism}, or module map, iff $f(S_a h)=S_a'f(h)$ for all $(a,h)\in A\times H$. This is equivalent to requiring $\mathrm{Id}\oplus f$ to be a   homomorphism from $A\oplus H$ to $A\oplus H'$ seen as Jordan algebras. The image and kernel of module maps are submodules, the quotient of a module by a submodule is a module: all of these results extend without change from the associative to the Jordan case. However care must be taken with the tensor product {in the non-associative setting}. For instance, let $M$ be a vector space. We want to define the \emph{free $A$-module generated by $M$} to be an $A$-module $\bra M\ket_A$ with a linear inclusion map $\iota : M\rightarrow \bra M\ket_A$ satisfying the following universal property: for all $A$-modules $H$ and linear maps $f : M\rightarrow H$, there exists a module map $\tilde f : \bra M\ket_A\rightarrow H$ extending $f$, i.e. such that the following diagram commutes:

\begin{equation}
\xymatrix{M\ar[d]^\iota\ar[dr]^f & \cr
\bra M\ket_A\ar[r]^{\tilde f}&H}\label{UPfreemod}
\end{equation}
 
For an associative algebra $\mathcal{A}$, such an object exists and can be taken to be $\mathcal{A}\otimes_\RR M$, with inclusion map $m\mapsto 1\otimes m$. The extension of $f$ is defined by $\tilde f(a\otimes m)=af(m)$. However, when $\mathcal{A}$ is not associative this fails to be a module map. We do not know if there exists an object with the universal property \eqref{UPfreemod} in all generality in the Jordan category, however we provide below a solution in the special case, which will be useful in section \ref{secJBT} when defining universal $1$-forms. Let $A$ be special Jordan algebra embedded in an associative algebra $\mathcal{A}$. If there is no natural choice for  $\mathcal{A}$ in the context at hand, one can always consider the universal associative enveloppe of $A$ \cite{jacobson1968structure}.   A \emph{special Jordan module} for $A$ will  be an $A$-module $(H,S)$ which is also an $\mathcal{A}$-bimodule with $S_a h=\frac{1}{2}(a\cdot h+h\cdot a)$ for all $(a,h)\in A\times H$ (see also  \cite{jacobson1968structure}, p 100). Now for any vector space $M$, let $\bra M\ket_{\mathcal A}:=\mathcal{A}\otimes \mathcal{A}^{opp}\otimes M$, where the superscript $opp$ denotes the opposite algebra. We have the inclusion map $m\mapsto 1\otimes 1\otimes m$, and it is easy to see that  $\bra M\ket_{\mathcal A}$ is canonically an $\mathcal{A}$-bimodule with the following universal property: for any linear map $f : M\rightarrow H$, with $H$ an $\mathcal{A}$-bimodule, there exists a bimodule map $\tilde f$ such that 
\begin{equation}
\xymatrix{M\ar[d]^\iota\ar[dr]^f & \cr
\bra M\ket_{\mathcal A}\ar[r]^{\tilde f}&H}\label{UPfreebimod}
\end{equation}
commute. Being an $\mathcal{A}$-bimodule, $\bra M\ket_{\mathcal A}$ is also an $A$-module containing $M$. We define $\bra M\ket_{A}$ to be   the sub-$A$-module of $\bra M\ket_{\mathcal A}$ generated by $M$. We call it the \emph{special free $A$-module generated by $M$}. Since $\tilde f$ satisfies, for all $a\in A$ and $m\in M$:
\begin{eqnarray}
\tilde f(S_a m)&=&\frac{1}{2}\tilde f(a\cdot m+m\cdot a)\cr
&=&\frac{1}{2}(\tilde{f}(a)m+m\tilde{f}(a))\cr
&=&S_a\tilde{f}(m)
\end{eqnarray}
it is an $A$-module map. Hence $\bra M\ket_{A}$ satisfies the universal property \eqref{UPfreemod} for all special Jordan module $H$ and linear map $f$.  Its universal property  then makes it unique up to isomorphism, as is usual.

\subsection{Derivations and automorphisms of JB algebras}

We start by recalling some facts about order-derivations. The following definition was introduced in \cite{Connes:1974} (see also \cite{AlfsenShultz}).

\begin{definition} A bounded linear operator $\delta$ on a JB-algebra $A$ is called an \emph{order derivation} if for all $t\in\RR$, $e^{t\delta}(A^+)\subset A^+$, where $A^+$ is the subset of positive elements of $A$.
\end{definition}

Note that in that case, $e^{t\delta}$ will be an \emph{order automorphism}, i.e. a bijective linear map $\phi$ such that $\phi$ and $\phi^{-1}$ preserve  order.  From proposition \ref{multamod}, every Jordan algebra acts as a module over itself. For every $x$ in a Jordan algebra $A$,  we write $L_x$  for the Jordan multiplication by $x$, i.e. $L_xy = x\circ y$ for $x,y\in A$. In this case, $L_x$ is an order-derivation.

\begin{definition} An order-derivation on a unital JB-algebra $A$ is called self-adjoint iff it is of the form $L_a$, for some $a\in A$, and it is called \emph{skew}, if $\delta(1)=0$.
\end{definition}

\begin{proposition} The set $\mathrm{Der}(A)$ of order-derivations of a JB-algebra $A$ is norm closed and  closed under the  Lie bracket. Moreover, it is the direct sum $\mathrm{Der}(A)_{sa}\oplus \mathrm{Der}(A)_{skew}$ of the real subpaces of selfadjoint and skew order-derivations. 
\end{proposition}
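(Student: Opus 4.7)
The plan is to establish the three assertions—norm closedness, closure under the Lie bracket, and the direct-sum decomposition—separately, relying on two underlying technical facts: that the positive cone $A^+$ of a JB-algebra is norm closed, and that $t\mapsto e^{t\delta}$ is norm continuous (even real-analytic) in the bounded operator $\delta$.

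First I would handle norm closedness. Suppose $\delta_n\to\delta$ in operator norm. Then for each fixed $t\in\RR$, $e^{t\delta_n}\to e^{t\delta}$ in norm, since the exponential series converges uniformly on norm-bounded sets. Each $e^{t\delta_n}$ sends $A^+$ into $A^+$; because $A^+$ is norm closed (elements of $A^+$ are characterized, e.g., by $\|t\cdot 1-a\|\le t$ for all $t\ge\|a\|$), the limit also satisfies $e^{t\delta}(A^+)\subset A^+$. Hence $\delta\in\mathrm{Der}(A)$.

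Next I would show $\mathrm{Der}(A)$ is a real vector space and then deduce closure under the Lie bracket. Scalar multiplication is immediate from the definition. For sums, the Trotter product formula
\begin{equation*}
e^{t(\delta_1+\delta_2)}=\lim_{n\to\infty}\bigl(e^{t\delta_1/n}\,e^{t\delta_2/n}\bigr)^n,
\end{equation*}
which holds in norm for bounded operators, exhibits $e^{t(\delta_1+\delta_2)}$ as a norm limit of compositions of maps that preserve $A^+$, and norm closedness of $A^+$ yields $\delta_1+\delta_2\in\mathrm{Der}(A)$. Now observe that if $\phi$ is an order automorphism and $\delta\in\mathrm{Der}(A)$, the identity $e^{s\,\phi\delta\phi^{-1}}=\phi\, e^{s\delta}\,\phi^{-1}$ shows $\phi\delta\phi^{-1}\in\mathrm{Der}(A)$. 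Applying this with $\phi=e^{t\delta_1}$, the map $t\mapsto e^{t\delta_1}\delta_2\, e^{-t\delta_1}$ takes values in $\mathrm{Der}(A)$; since $\mathrm{Der}(A)$ is a real vector space, so does each difference quotient
\begin{equation*}
\frac{1}{t}\bigl(e^{t\delta_1}\delta_2\,e^{-t\delta_1}-\delta_2\bigr).
\end{equation*}
This converges in operator norm to $[\delta_1,\delta_2]$ as $t\to 0$, and by the first part the limit belongs to $\mathrm{Der}(A)$.

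Finally, for the direct-sum decomposition, given $\delta\in\mathrm{Der}(A)$ I would set $a=\delta(1)\in A$ and write $\delta=L_a+(\delta-L_a)$. The operator $L_a$ is selfadjoint by definition; and $\delta-L_a$ lies in $\mathrm{Der}(A)$ because that space is a real vector space, while $(\delta-L_a)(1)=a-a\circ 1=0$, so $\delta-L_a\in\mathrm{Der}(A)_{skew}$. Uniqueness is immediate: if $L_a+\delta_1=L_b+\delta_2$ with $\delta_1,\delta_2$ skew, evaluating at $1$ gives $a=b$, hence $\delta_1=\delta_2$; in particular $\mathrm{Der}(A)_{sa}\cap\mathrm{Der}(A)_{skew}=\{0\}$.

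The main obstacle is really the middle step: one must know that $\mathrm{Der}(A)$ is closed under addition before the difference-quotient argument for the Lie bracket has any meaning, and this requires Trotter's formula together with norm closedness of $A^+$. The rest of the argument, including the decomposition, is then essentially bookkeeping once these two facts and the identity $e^{s\phi\delta\phi^{-1}}=\phi e^{s\delta}\phi^{-1}$ are in place.
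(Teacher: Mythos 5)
Your proof is correct. Note, however, that the paper does not actually prove this proposition: its ``proof'' is simply a citation to Alfsen--Shultz, Propositions 1.59 and 1.60, so what you have written is a reconstruction of the standard textbook argument rather than an alternative to anything in the paper. Your route matches that standard proof in its essentials: norm closedness of $\mathrm{Der}(A)$ from norm closedness of $A^+$ together with norm continuity of $\delta\mapsto e^{t\delta}$, the vector space structure from the Lie--Trotter product formula, and the decomposition $\delta=L_{\delta(1)}+(\delta-L_{\delta(1)})$ with uniqueness by evaluation at $1$. The one place where you deviate slightly is the Lie bracket: Alfsen--Shultz derive it from the group-commutator product formula $e^{t[\delta_1,\delta_2]}=\lim_n\bigl(e^{(t/n)\delta_1}e^{(1/n)\delta_2}e^{-(t/n)\delta_1}e^{-(1/n)\delta_2}\bigr)^{n^2}$, whereas you instead observe that $\mathrm{Der}(A)$ is invariant under conjugation by order automorphisms and then realize $[\delta_1,\delta_2]$ as a norm limit of difference quotients $\frac{1}{t}(e^{t\delta_1}\delta_2 e^{-t\delta_1}-\delta_2)$, invoking the norm closedness you proved first. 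This is a clean and arguably more economical variant, since it reuses the closedness result and avoids a second product formula; the cost is that your argument genuinely depends on establishing closedness and the vector space structure beforehand, a dependency you correctly flag. One small point worth making explicit: the decomposition requires $L_a\in\mathrm{Der}(A)$ for arbitrary $a\in A$ (e.g.\ via $e^{tL_a}=U_{e^{ta/2}}$, where $U$ is the quadratic map), which you tacitly assume; the paper asserts this fact just before the proposition, so it is legitimately available to you, but in a self-contained write-up it deserves a citation or a line of proof.
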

\begin{proof} See \cite{AlfsenShultz} prop. 1.59 and 1.60.
\end{proof}

Using the fact that unital order automorphisms are Jordan automorphisms (see \cite{AlfsenShultz}, Th. 2.80), it is possible to give the following characterization of skew derivations:

\begin{proposition} Let $A$ be JB-algebra and $\delta$ an order derivation on $A$. The following are equivalent\label{propaut}:
\begin{enumerate}
\item $\delta$ is skew,
\item $\forall t\in\RR$, $e^{t\delta}$ is a Jordan automorphism,
\item $\delta$ is a Jordan derivation,
\end{enumerate}
where a Jordan automorphism on a Jordan algebra $A$ is an invertible linear map $\alpha:A\rightarrow A$, which respects the product on $A$:
\begin{align}
\alpha(ab) = \alpha(a)\alpha(b),
\end{align}
and a Jordan  derivation is a linear map $\delta:A\rightarrow A$, which satisfies the Leibniz rule:
\begin{align}
\delta(ab) = \delta(a)b + a\delta(b).
\end{align}

\end{proposition}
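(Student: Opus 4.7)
The plan is to prove the three implications (1) $\Rightarrow$ (2) $\Rightarrow$ (3) $\Rightarrow$ (1), since the statement already assumes $\delta$ to be an order derivation on the JB-algebra $A$.

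First, for (1) $\Rightarrow$ (2): if $\delta$ is skew, then $\delta(1)=0$, so the flow $e^{t\delta}$ fixes the unit (differentiate $e^{t\delta}(1)$ in $t$ and use uniqueness of solutions). Moreover, since $\delta$ is an order derivation, $e^{t\delta}$ is by definition an order automorphism. I would then invoke the cited Alfsen--Shultz Theorem 2.80, which asserts that any \emph{unital} order automorphism of a JB-algebra is automatically a Jordan automorphism. Applying it for every $t\in\RR$ gives (2).

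For (2) $\Rightarrow$ (3): assuming each $e^{t\delta}$ is a Jordan automorphism, differentiate the identity
\begin{equation}
e^{t\delta}(a\circ b)=e^{t\delta}(a)\circ e^{t\delta}(b)
\end{equation}
at $t=0$. Using bilinearity of $\circ$, the derivative of the right-hand side at $t=0$ is $\delta(a)\circ b+a\circ\delta(b)$, and of the left-hand side is $\delta(a\circ b)$. Hence $\delta$ satisfies the Leibniz rule.

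For (3) $\Rightarrow$ (1): if $\delta$ is a Jordan derivation, then applying Leibniz to $1=1\circ 1$ gives $\delta(1)=\delta(1)\circ 1+1\circ \delta(1)=2\delta(1)$, so $\delta(1)=0$, meaning $\delta$ is skew.

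I expect none of the steps to be a serious obstacle: the nontrivial content is packaged in the cited Alfsen--Shultz theorem (unital order automorphism $\Rightarrow$ Jordan automorphism), which is what makes (1) $\Rightarrow$ (2) work without any further analytic input. The remaining two implications are elementary differentiations and Leibniz computations that only use the assumption that $\delta$ is bounded and linear on $A$.
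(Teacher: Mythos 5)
Your proof is correct and follows exactly the route the paper intends: the paper states this proposition without a written-out proof, presenting it as a consequence of the cited Alfsen--Shultz Theorem 2.80 (unital order automorphisms are Jordan automorphisms), which is precisely the key ingredient in your implication (1) $\Rightarrow$ (2). The remaining steps you supply --- $e^{t\delta}(1)=1$ via the ODE argument, differentiation at $t=0$ for (2) $\Rightarrow$ (3), and the Leibniz computation $\delta(1)=2\delta(1)$ for (3) $\Rightarrow$ (1) --- are the routine details the paper leaves implicit, and they are all valid given that order derivations are by definition bounded.
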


\begin{definition} An \emph{inner derivation} of $A$ is a Jordan derivation of the form
$$\delta=\sum_i [L_{a_i},L_{b_i}].$$
The Lie algebra of inner derivation will be denoted by $\mathrm{Der}_{Inn}(A)$.\label{innerdeff}
\end{definition}

The above definition for inner derivations is standard, and generalizes to all Jordan algebras.  Similarly, we have the following definition.

\begin{definition} The   group ${\mathrm Aut}_{\mathrm Inn}(A)$  of inner automorphisms of $A$ is by definition the subgroup of order automorphisms of $A$ generated by exponentionals of inner derivations. \label{iinnerauts}
\end{definition}

The next definition can be found in \cite{Schafer}.

\begin{definition} The \emph{Lie multiplication algebra} $M(A)$ is the sub-Lie algebra of $\mathrm{End}(A)$ generated by the operators $L_a$, $a\in A$. 
\end{definition}

The Lie multiplication is equal to the direct sum $\mathrm{Der}(A)_{sa}\oplus \mathrm{Der}_{Inn}(A)$.  If $A$ is  finite-dimensional (or more generally if it is a JBW algebra, see \cite{AlfsenShultz} for the   definition), then $M(A)=\mathrm{Der}(A)$.

We will now apply the previous concepts to the case where $A$ has an  associative representation in a Hilbert space.

\begin{definition}
Let  $\pi : A\rightarrow B(H)$ be an associative representation of the Jordan algebra $A$. Then we define $\mathrm{Lie}_\pi(A)$ to be the Lie algebra generated by $\pi(A)$ in $B(H)$.
\end{definition}

To prove the next proposition we need to recall that for $x,y\in\pi(A)$, 

\begin{equation}
[L_x,L_y]=\frac{1}{4}ad_{[x,y]}\label{LxLy}
\end{equation}

\begin{proposition}\label{propertiesofLiePi} $\mathrm{Lie}_\pi(A)=\pi(A)\oplus[\pi(A),\pi(A)]$. Moreover  $[\pi(A),[\pi(A),\pi(A)]]\subset \pi(A)$ and $[\pi(A),\pi(A)]$ is a Lie subalgebra.
\end{proposition}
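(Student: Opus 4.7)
The plan is to set $V:=\pi(A)$ and $W:=[\pi(A),\pi(A)]$ and prove, in order: (i) $[V,W]\subset V$, (ii) $[W,W]\subset W$, (iii) $\mathrm{Lie}_\pi(A)=V+W$, and finally (iv) that the sum is direct. The only real content is (i); identity \eqref{LxLy} carries the argument, after which everything else is a formal manipulation.

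For (i), I would take $x=\pi(a)$, $y=\pi(b)$, $z=\pi(c)$ and evaluate $[L_x,L_y]z$ in two different ways, where $L_x, L_y$ denote Jordan multiplication operators acting on $B(H)$ equipped with the symmetrized product. On the one hand, \eqref{LxLy} gives $[L_x,L_y]z=\tfrac14[[x,y],z]$. On the other hand, since $\pi$ is an associative specialization, the Jordan subalgebra $\pi(A)\subset B(H)$ is stable under $L_{\pi(a)}$ and $L_{\pi(a)}\pi(c)=\pi(a\circ c)$; iterating, $[L_x,L_y]z=\pi([L_a,L_b]c)\in V$. Comparing yields $[[x,y],z]\in V$, which by linearity gives $[V,W]\subset V$. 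Step (ii) is then immediate from the Jacobi identity: for $x,y,z,w\in V$,
\begin{equation*}
[[x,y],[z,w]]=[[[x,y],z],w]+[z,[[x,y],w]],
\end{equation*}
and each inner bracket $[[x,y],z],[[x,y],w]$ lies in $V$ by (i), so the outer brackets lie in $[V,V]=W$.

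For (iii), steps (i) and (ii) combined with $[V,V]=W$ show that $V+W$ is closed under the commutator bracket. Since it contains $\pi(A)$, it contains $\mathrm{Lie}_\pi(A)$; the reverse inclusion is trivial since $V\subset\mathrm{Lie}_\pi(A)$ and $W=[V,V]\subset\mathrm{Lie}_\pi(A)$. Hence $\mathrm{Lie}_\pi(A)=V+W$, and $W$ is in particular a Lie subalgebra.

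The only slightly subtle step is (iv), the directness. It relies on the implicit $*$-structure of the setting: in the paper's framework, $\pi$ is a self-adjoint representation of the real Jordan algebra $A$, so $\pi(A)\subset B(H)_{\mathrm{sa}}$ and any commutator $[\pi(a),\pi(b)]$ is skew-adjoint, i.e.\ lies in $iB(H)_{\mathrm{sa}}$. The two real subspaces $B(H)_{\mathrm{sa}}$ and $iB(H)_{\mathrm{sa}}$ of $B(H)$ intersect trivially, so $V\cap W=\{0\}$. I expect this to be the main conceptual point to flag, since nothing in the bare definition of an associative specialization would force it; everything else is pure Lie-algebraic bookkeeping once \eqref{LxLy} is in hand.
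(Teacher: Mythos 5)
Your proof is correct and takes essentially the same route as the paper: your steps (i)--(iii) match the paper's argument, which deduces $[\pi(A),[\pi(A),\pi(A)]]\subset\pi(A)$ from the identity \eqref{LxLy} (the paper writes $[x,[y,z]]=-4[S_y,S_z]x=-4y\circ(z\circ x)+4z\circ(y\circ x)\in\pi(A)$) and then obtains $[[x,y],[z,t]]\in[\pi(A),\pi(A)]$ via Jacobi, concluding by linearity. Your step (iv) is the one addition: the paper asserts the directness of the sum without comment, and your justification --- $\pi(A)$ consists of self-adjoint operators while commutators of such are skew-adjoint --- is precisely the fact the paper uses implicitly elsewhere (e.g.\ in the proof of proposition \ref{invertibility}), so flagging it is a legitimate, if minor, improvement.
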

\begin{proof}
Let $x,y,z,t\in \pi(A)$. Then $[x,[y,z]]=-4[S_y,S_z]x$ by \eqref{LxLy}. Hence $[x,[y,z]]=-4y\circ(z\circ x)+4z\circ (y\circ x)\in \pi(A)$ since $\pi(A)$ is a Jordan algebra. Moreover $[[x,y],[z,t]]=[[[x,y],z],t]+[z,[[x,y],t]]$ by Jacobi's identity. Now $[[x,y],z]$ and $[[x,y],t]$ are both in $\pi(A)$ by the above, hence  $[[x,y],[z,t]]\in[\pi(A),\pi(A)]$. This proves the result by linearity.
\end{proof}

For all $a,g,x\in B(H)$   let us define $\starad_a(x)$ and $\starAd_g(x)$ respectively by

\begin{equation}
\starad_a(x)=ax+xa^\dagger,\ \starAd_g(x)=gxg^\dagger
\end{equation}

Note that $\starAd_g$ is the natural action of $g$ on $B(H)$ if elements of $B(H)$ are interpreted as sesquilinear forms through $x\mapsto \langle .,x.\rangle$, rather than operators on $H$. 

\begin{proposition} $\starad_a$ defines a Lie homomorphism from $\mathrm{Lie}_\pi(A)$ onto $M(\pi(A))\subset\mathrm{Der}(\pi(A))$. 
\end{proposition}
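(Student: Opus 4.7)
The plan is to explicitly decompose elements of $\mathrm{Lie}_\pi(A)$ using the splitting of Proposition \ref{propertiesofLiePi}, identify what $\starad$ does on each summand, then verify the Lie homomorphism property by a short direct calculation. Throughout I'll use that $\pi$ is a $*$-representation, so $\pi(a)^\dagger = \pi(a)$ for $a \in A$; hence any $s \in \pi(A)$ satisfies $s^\dagger = s$, and any commutator $k = [x,y]$ with $x,y \in \pi(A)$ satisfies $k^\dagger = -k$.

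First I would write an arbitrary $a \in \mathrm{Lie}_\pi(A) = \pi(A) \oplus [\pi(A),\pi(A)]$ as $a = s + k$ with $s^\dagger = s$ and $k^\dagger = -k$, so that $a^\dagger = s - k$. A short computation then gives
\begin{equation*}
\starad_a(x) = (s+k)x + x(s-k) = (sx + xs) + (kx - xk) = 2 L_s\, x + \ad_k x.
\end{equation*}
The first summand is manifestly $2L_s$ with $s\in \pi(A)$, a self-adjoint order derivation. For the second, Proposition \ref{propertiesofLiePi} gives $[\pi(A),[\pi(A),\pi(A)]] \subset \pi(A)$, so $\ad_k$ stabilizes $\pi(A)$, and identity \eqref{LxLy} expresses $\ad_{[x,y]} = 4[L_x,L_y]$, an inner (skew) Jordan derivation. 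Thus $\starad_a$ lands in $M(\pi(A)) = \mathrm{Der}(\pi(A))_{sa} \oplus \mathrm{Der}_{Inn}(\pi(A))$.

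Next I would establish the Lie homomorphism property by direct expansion. Using $[a,b]^\dagger = [b^\dagger,a^\dagger]$ one finds
\begin{equation*}
[\starad_a,\starad_b](x) = (ab-ba)x + x(b^\dagger a^\dagger - a^\dagger b^\dagger) = [a,b]x + x[a,b]^\dagger = \starad_{[a,b]}(x),
\end{equation*}
which is defined on all of $B(H)$ and hence a fortiori on $\pi(A)$. Surjectivity onto $M(\pi(A))$ then follows by noting that for $x \in \pi(A)$ one has $\starad_x = 2L_x$, so every generator of $M(\pi(A))$ is already in the image of $\starad|_{\pi(A)}$; since the image is a Lie subalgebra of $\mathrm{End}(\pi(A))$ containing all these $L_x$, it contains the Lie subalgebra they generate, namely $M(\pi(A))$. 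Conversely, the decomposition $\starad_{s+k} = 2L_s + \ad_k = 2L_s + 4[L_x,L_y]$ shows the image is contained in $M(\pi(A))$.

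There is no real obstacle in this proof; the only subtlety is keeping track of the involution when rewriting $\starad_a$ on the two summands of $\mathrm{Lie}_\pi(A)$ and, on the skew part, invoking Proposition \ref{propertiesofLiePi} to be sure that $\ad_k$ really preserves the Jordan subalgebra $\pi(A)$ so that it defines an element of $\mathrm{End}(\pi(A))$ rather than merely of $\mathrm{End}(B(H))$.
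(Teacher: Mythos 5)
Your proof is correct and takes essentially the same route as the paper's: you split $a=s+k$ into its selfadjoint and skew parts, observe that $\starad_s=2L_s$ and $\starad_k=\ad_k$ land in $M(\pi(A))$ (using \eqref{LxLy} and the stability $[\pi(A),[\pi(A),\pi(A)]]\subset\pi(A)$), and verify $[\starad_a,\starad_b]=\starad_{[a,b]}$ by the identical direct computation. The only difference is that you make the surjectivity onto $M(\pi(A))$ explicit, which the paper leaves implicit in the remark immediately following the proposition.
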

\begin{proof}
Note that if $a=a^\dagger$, $\starad_a=2L_a$ and if $a=-a^\dagger$, $\starad_a=\ad_a$, so that $\starad_a$ is always an order derivation of $\pi(A)$. A direct computation shows that  $\starad_{[a,b]}x=[a,b]x+x[a,b]^\dagger=[\starad_a,\starad_b]x$.
\end{proof}

Note that by definition,  the image of $\pi(A)$ is   $\mathrm{Der}(\pi(A))_{sa}$ and the image of $[\pi(A),\pi(A)]$ is   $\mathrm{Der}_{inn}(\pi(A))$. In the cases of interest, $\pi$ will be faithful so that $\starad$ can be thought of as a surjective morphism from $\mathrm{L}ie_\pi(A)$ to $M(A)$, sending an operator on $H$ to an operator on $A$. In the case that interests us the most, it is 1-1.

\begin{proposition}\label{invertibility} Let $A$ be finite-dimensional or of the form $\mathcal{C}(M,A_F)$ where $A_F$ is finite-dimensional and $M$ is a Hausdorff space. Then $\starad : \mathrm{L}ie_\pi(A)\rightarrow M(\pi(A))$ is an isomorphism. In particular, $\ad :  [\pi(A),\pi(A)]\rightarrow \mathrm{Der}_\mathrm{Inn}(\pi(A))$ is an isomorphism, whose inverse will be denoted by $a$.
\end{proposition}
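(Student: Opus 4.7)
By Proposition \ref{propertiesofLiePi} one has $\mathrm{Lie}_\pi(A)=\pi(A)\oplus[\pi(A),\pi(A)]$, and from the statement recalled just before about $M(\pi(A))$ one has $M(\pi(A))=\mathrm{Der}(\pi(A))_{sa}\oplus\mathrm{Der}_{\mathrm{Inn}}(\pi(A))$. Under $\starad$ these decompositions match: its restriction to $\pi(A)$ equals $2L$ and lands in $\mathrm{Der}(\pi(A))_{sa}$, while its restriction to $[\pi(A),\pi(A)]$ equals $\ad$ and lands in $\mathrm{Der}_{\mathrm{Inn}}(\pi(A))$. Since surjectivity of $\starad$ is already known, the isomorphism claim reduces to showing that each of these two restrictions has trivial kernel. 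In particular, the second half of the statement---that $\ad$ is bijective onto $\mathrm{Der}_{\mathrm{Inn}}(\pi(A))$---is exactly the non-trivial half of the reduction.

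The selfadjoint piece is immediate: if $L_a(y)=a\circ y=0$ for every $y\in\pi(A)$, then setting $y=1$ forces $a=0$ (note that any associative representation of a unital Jordan algebra satisfies $\pi(1)^2=\pi(1)$ and, being in the selfadjoint part, $\pi(1)=1$ on the essential subspace).

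The content of the proposition is therefore the injectivity of $\ad$ on $[\pi(A),\pi(A)]$. The plan is as follows: suppose $\alpha=\sum_j[\pi(x_j),\pi(y_j)]$ commutes with every element of $\pi(A)$; then $\alpha$ commutes with the associative enveloping subalgebra $\mathcal{A}\subset B(H)$ generated by $\pi(A)$, so $\alpha\in Z(\mathcal{A})$. When $A$ is finite-dimensional, the universal associative enveloppe of $A$ is finite-dimensional (standard fact in the special case), so $\mathcal{A}$ is finite-dimensional and, being generated by a selfadjoint set, closed under the adjoint. Finite-dimensional real $*$-subalgebras of $B(H)$ are semisimple, so Artin--Wedderburn gives $\mathcal{A}\simeq\bigoplus_i M_{n_i}(D_i)$ with $D_i\in\{\RR,\CC,\HH\}$. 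Project $\alpha$ to each simple summand: its image is still a sum of commutators and is central, hence a scalar from $Z(D_i)$; since the reduced trace vanishes on any commutator but equals $n_i\lambda$ on the scalar $\lambda\cdot I_{n_i}$, each projection vanishes, so $\alpha=0$.

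To cover $A=\mathcal{C}(M,A_F)$ the plan is to reduce pointwise. The representation $\pi$ acts fibrewise via a representation $\pi_F$ of $A_F$ at each $p\in M$, so the operator $\alpha\in[\pi(A),\pi(A)]$ has a fibrewise value $\alpha_p\in[\pi_F(A_F),\pi_F(A_F)]$ which commutes with $\pi_F(A_F)$ (using constant functions in $A$ to reach every element of $A_F$ at $p$). Applying the finite-dimensional case fibrewise gives $\alpha_p=0$ for every $p$, whence $\alpha=0$. The main obstacle is the central argument in the finite-dimensional step: one must verify that $\mathcal{A}$ is genuinely semisimple, distinguish the centres of the three real division algebras $\RR$, $\CC$, $\HH$ correctly, and invoke the tracelessness of commutators via reduced traces---after which the rest of the argument is essentially formal.
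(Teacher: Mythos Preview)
Your argument is correct and follows essentially the same route as the paper. The only differences are in presentation: the paper observes in one line that any $z\in\ker\starad$ is skew (from $\starad_z(1)=z+z^\dagger=0$), thereby bypassing your separate treatment of the selfadjoint piece; and for the finite-dimensional step the paper works with the complex $C^*$-algebra generated by $\pi(A)$ and its irreducible representations, whereas you use the real $*$-subalgebra and Artin--Wedderburn over $\RR,\CC,\HH$ with reduced traces. Both versions land on the same punchline---a central sum of commutators is a trace-zero scalar in each simple factor, hence vanishes---and the pointwise reduction for $\mathcal{C}(M,A_F)$ via constant functions is identical to the paper's.
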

\begin{proof}
First let us observe than an element $z\in \ker\starad$ is skew, since $\starad_z(1)=z+z^\dagger=0$. Now if $A$ is finite-dimensional, and $z=\sum_i[x_i,y_i]\in\ker(\mathrm{ad})$, then $z$ will commute with the $C^*$-algebra generated by $\pi(A)$ and hence $\theta(z)$ will be a multiple of the identity for every irreducible representation $\theta$ of $A$. Since $\mathrm{Tr}(\theta(z))=0$, we have $\theta(z)=0$ for all $\theta$ irreducible, hence $z=0$.

In the second case, if $g\in\ker\starad$, then $g$ is skew for the same reason as above. Hence $g\in\mathcal{C}(M,[\pi(A_F),\pi(A_F)])$ such that for every $f\in A$ and every $x\in M$, $[g(x),f(x)]=0$. In particular if $f$ is constant, we see that $g(x)$ commutes with $\pi(A_F)$, so that the above argument yields $g(x)=0$.
\end{proof}

\begin{rem}
In the general infinite-dimensional case we can say the following : if $[x,y]\in\ker\mathrm{ad}$ then $[[x,y],x]=0$ which implies that $[x,y]$ is quasi-nilpotent (Kleinecke-Shirokov theorem), hence $[x,y]=0$ since $i[x,y]$ is selfadjoint. We do not know however if this result extends to sums of commutators.
\end{rem}

%
%
%
%
%
%
%
%
%
%

Now let us define the group  $\mathcal{L}\mathrm{ie}_\pi(A)$, which is generated by $\exp(\mathrm{Lie}_\pi(A))$, and its subgroup  $U(A)$ which is also a subgroup of $U(H)$ and is  generated by $\exp([\pi(A),\pi(A)])$. Then the following diagram commutes:

\begin{equation}
\xymatrix{{\mathrm Aut}_{order}(\pi(A)) & \mathcal{L}\mathrm{ie}_\pi(A) \ar[l]_{\starAd}  \cr
M(\pi(A))\ar[u]^{\exp} & \mathrm{Lie}_\pi(A) \ar[l]_{\starad}\ar[u]^{\exp}
}\label{diagramstarAd}
\end{equation}

Over the subalgebra $[\pi(A),\pi(A)]$, \eqref{diagramstarAd} reduces to:

\begin{equation}
\xymatrix{{\mathrm Aut}_{\mathrm Inn}(\pi(A)) & U(A)\ar[l]_{\mathrm{Ad}}  \cr
{\mathrm Der}_{\mathrm Inn}(\pi(A))\ar[u]^{\exp} & [\pi(A),\pi(A)]\ar[l]_{\mathrm{ad}}\ar[u]^{\exp}
}\label{diagraminnder}
\end{equation}

When $\mathrm{ad}$ is $1-1$, since $\exp$ also is near the identity, we can infer that $\mathrm{Ad}$ will be invertible near $1$. In this case we obtain the following commutative diagram, where the wriggling arrow is defined near the identity.
 
\begin{equation}\label{diagraminversion}
\xymatrix{{\mathrm Aut}_{\mathrm Inn}(\pi(A))\ar@{~>}[r] & U(A)  \cr
{\mathrm Der}_{\mathrm Inn}(\pi(A))\ar[u]^{\exp}\ar[r]^{a} & [\pi(A),\pi(A)]\ar[u]^{\exp}}
\end{equation}

\section{Jordan background and   triples}\label{secJBT}

After these preliminaries we come to the main course. We begin in this section with special Jordan triples, {  leaving the discussion of the generalization to the exceptional case for section \ref{secbeyond}}.

\begin{definition} A (real, even) \emph{special Jordan triple} is gadget $\mathcal{T}=(A,H,\pi,D,\chi,J)$ such that
\begin{enumerate}
\item $A$ is a special Jordan algebra,
\item $H$ is a Hilbert space,
\item $\pi$ is a  faithful associative representation  of $A$,
\item $\overline{\pi(A)}$ is a JB algebra,
\item $D,J$ and $\chi$ are the usual things, with the usual conditions (in particular C0, but not C1 yet),
\item For all $a\in A$, $ [D,\pi(a)]$ is well-defined and bounded.
\end{enumerate}
\end{definition}

\begin{rem}  We need to be careful  here as $A$ (whose elements plays the role of differentiable functions)  is not a JB-algebra as in section \ref{prelimJB}. However, it can readily be checked that the main result of this section, namely proposition \ref{invertibility}, still holds with an   algebra of the form $\mathcal{C}^\infty(M,A_F)$, with an unchanged proof.
\end{rem}

We now turn to $1$-forms. There is a standard way to 
construct a graded algebra of differential forms $\Omega_d A$ from an algebra $A$, which continues to make sense in the Jordan setting~\cite{carotenuto2019}.   In this paper we will not consider forms of degree higher than one.  We obviously have $\Omega^0_dA = A$. We want the space $\Omega^1_dA$ of one-forms to be generated, as a Jordan $A$-module, by symbols  $d[a]$, $a \in A$, with relations
\begin{align}
d[a\circ b] &= d[a]\circ b + a\circ d[b] ,&  \forall a, b& \in A,\label{oneform1}\\
d[\alpha a + \beta b] &= \alpha d[a] + \beta d[b] ,& \forall a, b &\in A ,~ \alpha,\beta\in\mathbb{R},
\end{align}
where the product `$\circ$' is symmetric and  satisfies the Jordan identity. That is, we want $d : \Omega^0_d A\rightarrow \Omega^1_dA$ to be a derivation.  A generic element $\omega\in \Omega^1_d A$ is a finite sum of the form
\begin{align}
\omega = \sum a_1\circ (a_2\circ (...(a_{n-1}\circ d[a_n])))\label{generic1form}
\end{align}
with $a_i\in \Omega^0_d A$. Here is an explicit construction  of $\Omega^1_dA$. We first consider the vector space $dA:=A/\RR.  1_A $ and call $d$ the quotient map. We now make use of the fact that   $A$ is a special Jordan algebra embedded in $\mathrm{End}(H)$  and consider the special free module $\bra dA\ket_A$ defined in section \ref{prelimJB}. We let $R$ be the sub-module of  $\bra dA\ket_A$ generated by $d(a\circ b)-d(a)\circ b - a\circ d(b)$ for all $a,b\in A$, and finally we set $\Omega^1_dA:=\bra dA\ket_A/R$. By construction it is an $A$-module and $d$ has been extended to a derivation of $A$ into it. Moreover, it satisfies the following universal property.

\begin{theorem} Let $\mathcal{M}$ be any special $A$-module and $\delta : A\rightarrow \mathcal{M}$ be a derivation. Then there exists an $A$-module map $p : \Omega^1_dA\rightarrow \mathcal{M}$ such that the following diagram commute:
\begin{equation}
\xymatrix{A\ar[d]^d\ar[dr]^\delta & \cr
\Omega^1_dA\ar[r]^{p}&\mathcal{M}}\label{UPomega1}
\end{equation}

\end{theorem}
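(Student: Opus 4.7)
The plan is to exploit, in order, three facts: (i) any derivation $\delta$ necessarily kills the unit, so it descends to the quotient $dA = A/\RR\cdot 1_A$; (ii) the special free module $\bra dA\ket_A$ has exactly the universal property needed to extend a linear map on $dA$ to an $A$-module map; and (iii) the Leibniz rule forces the extension to vanish on the submodule $R$, so it descends further to $\Omega^1_dA$.

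First I would verify that $\delta(1)=0$: writing $\delta(1)=\delta(1\circ 1)=2\cdot 1\circ\delta(1)=2\delta(1)$, where the middle equality uses that $1\circ m = m$ in any unital Jordan module, gives $\delta(1)=0$. Hence $\delta$ factors through the quotient $dA=A/\RR\cdot 1_A$ as a linear map $\bar\delta : dA\rightarrow\mathcal{M}$ with $\bar\delta\circ d=\delta$. Next, since $\mathcal{M}$ is by hypothesis a special $A$-module, the universal property \eqref{UPfreemod} established for the special free module $\bra dA\ket_A$ produces a unique $A$-module map $\tilde\delta : \bra dA\ket_A\rightarrow \mathcal{M}$ that extends $\bar\delta$ along the canonical inclusion $\iota : dA\hookrightarrow\bra dA\ket_A$.

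Now I would check that $\tilde\delta$ annihilates the submodule $R$. On the distinguished generators this is a direct computation using that $\delta$ is a derivation:
\begin{align*}
\tilde\delta\bigl(d(a\circ b)-d(a)\circ b - a\circ d(b)\bigr) &= \delta(a\circ b)-\delta(a)\circ b - a\circ \delta(b) = 0 .
\end{align*}
Because $\tilde\delta$ is an $A$-module map and $R$ is the sub-$A$-module generated by these elements, it follows that $R\subset\ker\tilde\delta$. Therefore $\tilde\delta$ descends to an $A$-module map $p : \Omega^1_dA=\bra dA\ket_A/R\rightarrow\mathcal{M}$, and by construction $p(d a)=\delta(a)$ for all $a\in A$, which is precisely the commutativity of diagram \eqref{UPomega1}.

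The only delicate point is (ii): one must genuinely invoke the special module hypothesis on $\mathcal{M}$, since in the Jordan setting the existence of a free module extension is not automatic and was only established above in the special case via the auxiliary associative enveloping bimodule $\bra dA\ket_{\mathcal A}=\mathcal{A}\otimes\mathcal{A}^{opp}\otimes dA$. Given that result, every other step—the vanishing $\delta(1)=0$, the Leibniz check on generators, and the passage to the quotient by $R$—is formal. Uniqueness of $p$, if desired, follows because $\Omega^1_dA$ is generated as an $A$-module by the image of $d$, on which $p$ is prescribed by $\delta$.
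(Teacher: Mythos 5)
Your proof is correct and takes essentially the same route as the paper's: define $p(da):=\delta(a)$ on $dA$, extend to an $A$-module map via the universal property \eqref{UPfreemod} of the special free module $\bra dA\ket_A$, and pass to the quotient by $R$, which lies in the kernel by the Leibniz rule. Your extra verifications --- the explicit check that $\delta(1)=0$, the computation on the generators of $R$, and the uniqueness remark --- simply make explicit details the paper's proof leaves implicit.
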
 
\begin{proof}
For all $a\in A$ we set $p(da):=\delta(a)$. This is well-defined because $d$ annihilates constants. Since $\delta$ and $d$ are linear, $p$ also is, and we thus have a well-defined linear map $p : dA\rightarrow \mathcal{M}$. By the universal property \eqref{UPfreemod} of $\bra dA\ket_A$, we obtain a module map which we still call $p$ from $\bra dA\ket_A$ to $\mathcal{M}$. Since $\delta$ is a derivation, $R$ is in the kernel of $p$ which thus goes to the quotient, ending the proof.
\end{proof}

For this reason, elements of $\Omega^1_dA$ will be called \emph{(special) universal $1$-forms.} Note that a module of universal forms in the above sense has been constructed in \cite{carotenuto2019} for the Albert algebra (using a different approach), so that this work is complementary to ours in this respect. Now, we can make use of the previous theorem to define two natural representations of $1$-forms on $H$. First, we observe that since $\pi$ is an associative representation, the map
\begin{equation}
\delta(a):=[D,\pi(a)]\label{someq}
\end{equation}
is a derivation of $A$ into $\mathrm{End}(H)$, so that by the universal property of $\Omega^1_d A$ there exists a well-defined module map $p$ such that

\begin{align}
	p(da) &= [D,\pi(a)],& a&\in A.\label{da}
\end{align}
We will write $\pi(da)$ instead of $p(da)$ since no confusion should arise. We let $\Omega_D^1 A$ be the image of this map $\pi$. In  other words, $\Omega_D^1 A$ is given by the Jordan  $A$-submodule of $\mathrm{End}(H)$ generated by $\{[D,\pi(a)], a\in A\}$. Following the definition of an associative representation given in Eq.~\eqref{assocrep}, a general element in $\Omega_D^1 A$ is then  given by
\begin{align}
\pi( a_1\circ (a_2\circ (...(a_{n-1}\circ d[a_n])))) =  \sum \pi(a_1)\circ (\pi(a_2)\circ (...(\pi(a_{n-1})\circ[D,\pi(a_n)])))
\end{align}  

For the case in which $\Omega^0 A = A$ is a special Jordan algebra, we will continue to refer to  the corresponding differential, graded, Jordan algebras $\Omega = \Omega^0 A\oplus \Omega^1 A \oplus...$ as `special'. 
  Notice that the above definition collapses to the standard and familiar form when dealing with associative, Jordan algebras (e.g. in the canonical setting of a Riemannian manifold). Note, furthermore, that because  $\Omega_D^1 A$ acts as a Jordan module over $A$, it will also act naturally as a Lie module over $Der_{Inn} (A)$ for  the exact same reasons that $H$ acts as a Lie module over  $Der_{Inn} (A)$.

{Furthermore}, there is another perfectly natural representation of universal $1$-forms on $H$ based on changing the embedding of $A$ in $\mathrm{End}(H)$ from $\pi$ to $\pi^o$. This will yield a module map $p$ such that $p(da)= [D,\pi^o(a)]$. We will also write it $\pi^o$ instead of $p$, but this time we need to be careful since $\pi^o(da)=[D^o,\pi(a)^o]=-[D,\pi(a)]^o=-\pi(da)^o$, so that $\pi^o$  is not the composition of $\pi$ with $o$ on $1$-forms.

\begin{definition} A (real, even)  \emph{special Jordan background} is a gadget $\mathcal{B}=(A,H,\pi,\Omega^1,\chi,J)$ such that
\begin{enumerate}
\item $A$ is a special Jordan algebra,
\item $H$ is a Hilbert space,
\item $\pi$ is a {faithful} associative representation  of $A$,
\item $\overline{\pi(A)}$ is a JB algebra,
\item $J$ and $\chi$ are the usual things, with the usual conditions (in particular $C_0$),
\item $\Omega^1$ is a $\pi(A)$-module for the Jordan multiplication,  whose elements anticommute with $\chi$.
\end{enumerate}
\end{definition}

Just as in section \ref{SectionAssoc}, a Dirac operator for $\mathcal{B}$ is an operator $D$ such that $(A,H,\pi,D,\chi,J)$ is a Jordan triple and $\Omega^1_D A\subset \Omega^1$. The configuration space and automorphisms are also defined exactly as in the associative case. 

Condition $C_0$ has a very important interpretation in the Jordan setting. 
\begin{proposition}\label{symact}   The `symmetrized' action $S 
=\frac{1}{2}(\pi + \pi^\circ)$ is  a multiplicative representation of $A$ on $H$.
\end{proposition}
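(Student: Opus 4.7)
The plan is to invoke Theorem \ref{JacTh14}, which asserts that the half-sum of two mutually commuting associative specializations is automatically a multiplicative one. So we only need to check two things: first, that $\pi^o$ is itself an associative specialization of $A$ in $\mathrm{End}(H)$; and second, that $[\pi(A),\pi^o(A)]=0$, which is exactly condition $C_0$ and is part of the data.

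To carry out step one, I would fix $a,b\in A$ and compute. Since $\pi$ is associative,
\begin{equation}
\pi(a\circ b)=\tfrac{1}{2}\bigl(\pi(a)\pi(b)+\pi(b)\pi(a)\bigr).
\end{equation}
Taking the adjoint (which reverses order and is compatible with the symmetrized product), one gets $\pi(a\circ b)^\dagger=\pi(a)^\dagger\circ\pi(b)^\dagger$. Now because $J$ is anti-unitary, conjugation $T\mapsto JTJ^{-1}$ sends linear operators to linear operators and satisfies $J(XY)J^{-1}=(JXJ^{-1})(JYJ^{-1})$ for all $X,Y\in B(H)$; in particular it respects the symmetrized product. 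Applying this to $\pi(a)^\dagger\circ\pi(b)^\dagger$ yields
\begin{equation}
\pi^o(a\circ b)=J\pi(a\circ b)^\dagger J^{-1}=(J\pi(a)^\dagger J^{-1})\circ(J\pi(b)^\dagger J^{-1})=\pi^o(a)\circ\pi^o(b),
\end{equation}
so $\pi^o$ is indeed an associative specialization in the sense of \eqref{assocrep}.

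For step two, condition $C_0$ is precisely $[\pi(a),\pi(b)^o]=0$ for all $a,b\in A$, which with our notation reads $[\pi(A),\pi^o(A)]=0$. The hypotheses of Theorem \ref{JacTh14} are therefore met with $\sigma_1=\pi$ and $\sigma_2=\pi^o$, and one concludes at once that $S=\tfrac{1}{2}(\pi+\pi^o)$ is a multiplicative specialization, i.e.\ a multiplicative representation of $A$ on $H$.

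There is essentially no obstacle here: the content of the proposition is nothing more than a repackaging of $C_0$ via the general Jacobson theorem. The only small point to be careful about is that anti-unitarity of $J$ does preserve the order of operator products (unlike the $\dagger$ involution), which is what allows $\pi^o$ to remain a genuine associative specialization rather than an anti-specialization; this is the one line in the verification that uses something beyond definitions.
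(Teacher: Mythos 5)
Your proposal is correct and follows essentially the same route as the paper: both verify that $\pi^o$ is an associative specialization and then apply Theorem \ref{JacTh14} with the commutation hypothesis supplied by $C_0$. The paper compresses your two-step verification (adjoint reverses products, $J$-conjugation preserves them) into the single observation that $T\mapsto T^o$ is an anti-homomorphism, so that $(\pi(a)\circ\pi(b))^o=\pi^o(b)\circ\pi^o(a)=\pi^o(a)\circ\pi^o(b)$ by commutativity of $\circ$, but the content is identical.
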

\begin{proof}
	We already know that $\pi$ is an associative representation. Let us prove that $\pi^o$ also is: for all $a,b\in A$ we have $\pi^o(a\circ b)=\pi(a\circ b)^o=(\pi(a)\circ \pi(b))^o=\pi^o(b)\circ \pi^o(a)=\pi^o(a)\circ\pi^o(b)$ since $\circ$ is commutative. Now $\pi$ and $\pi^o$ satisfy the hypotheses of Theorem \ref{JacTh14} by $C_0$, and the proposition follows.
\end{proof}

Condition $C_1$ is  generalized without change, it reads $[\pi(a),\pi(\omega)^o]=0$ or equivalently $[\pi(a)^o,\pi(\omega)]=0$ for all $a\in A$ and $\omega\in \Omega^1$. Unless specified otherwise, we will generally prefer to use the infinitesimal version of condition weak $C_1$:  

\begin{equation}\tag{weak $C_1$}
[[\pi(A),\pi(A)]^o,\Omega^1]\subset\Omega^1, 
\end{equation}

We will see in proposition \ref{Usub} below that {weak $C_1$} keeps the same meaning as in the associative case.  

We will have more to say  about the manifestations of the order $0$ and $1$ conditions in the Jordan setting in section \ref{ordersymmetries}, but for now let us take a closer look at the manifold case. We suppose $M$ to be parallelizable and we define the canonical Jordan background over $M$ to be the same as in the associative setting, which makes sense since the algebra $A=\mathcal{C}^\infty(M)$ is associative. Let us quickly recall the construction. We pick a moving frame $(e_a)$. It defines a metric $g_0$ and a spin structure at the same time. In particular it defines the Clifford mapping $\gamma$. Note that $g_0$ is not a background structure since it is not fixed by the automorphisms. The definition of $J_M$, $\chi_M$ and $\Omega^1_M$ are the same as usual \cite{AB1}. In particular

\begin{equation}
\Omega^1_M=\{ i\gamma(\alpha)|\alpha\text{ is a real smooth 1-form on }M\}.\label{Omega1man}
\end{equation}

and  $D_0$, the canonical Dirac operator associated with $e_0$ is a regular Dirac operator.  Note that, since $J_M$ anticommutes with $i$ and with Clifford elements of odd degree, it commutes with the elements of $\Omega^1_M$, and since those are anti-selfadjoint, they are self-opposite, a fact which will be useful in section \ref{AAsec}.

%
%
%
%
%
%
%

\section{Lifted inner automorphisms and minimal fluctuations}\label{fluctuationsection}

In this section we seek to define a fluctuation space ${\cal F}_D$ for Jordan spectral triples. In order to do this, we first guess the correct algebraic structure of this fluctuation space, basing ourselves on 3 first principles. As we have recalled in section \ref{SectionAssoc}, in the associative case the fluctuation space is a certain affine space containing pure gauge flucutations \eqref{AssocInnerFluct1} and which is stable under automorphisms. Pure gauge fluctuations are associated to inner automorphisms of the algebra, though not in a 1-1 way: this is the reason why the left arrows in diagram \eqref{assocliftdiag} go in the wrong direction.
%
In the Jordan case, we can actually do better thanks to diagram \eqref{diagraminversion} and obtain the following:

\begin{equation}
\xymatrix{{\mathrm Aut}_{\mathrm Inn}(\pi(A))\ar@{~>}[r]& U(A)\ar[r]^{\Upsilon} &  {\mathcal U}(A) \cr
{\mathrm Der}_{\mathrm Inn}(\pi(A))\ar[u]^{\exp}\ar[r]^{a}& [\pi(A),\pi(A)]\ar[u]^{\exp} \ar[r]^{d_e\Upsilon}& {\mathcal G}(A)\ar[u]^{\exp}
}\label{liftdiag}
\end{equation}

The notations are the same as in section \ref{SectionAssoc}, namely  $\mathcal{U}(A):=\Upsilon(U(A))$ and $\mathcal{G}(A):=d_e\Upsilon([\pi(A),\pi(A)])$. For the sake of clarity, let us follow the bottom line in detail. We start with    $\delta\in\mathrm{Der}_{\mathrm{Inn}}(\pi(A))$. It is of the form $\delta=\mathrm{ad}(a_\delta)$ where

\begin{equation}
a_\delta= \frac{1}{4}\sum_i[\pi(a_i),\pi(b_i)]\label{extderiv}
\end{equation}

is a uniquely defined element of $[\pi(A),\pi(A)]$ thanks to proposition \ref{invertibility}. Then $d_e\Upsilon(a_\delta)=a_\delta+Ja_\delta J^{-1}=a_\delta-a_\delta^o$. Now observe that $a_\delta-a_\delta^o=\sum_i [S(a_i),S(b_i)]$ by $C_0$. Hence we obtain:

\begin{proposition}\label{GS} One has $\mathcal{G}(A)=[S(A),S(A)]$.
\end{proposition}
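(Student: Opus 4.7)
My plan is to verify the equality by a direct algebraic computation, following the template already sketched in the paragraph preceding the proposition. The two key ingredients are (i) the explicit form of $d_e\Upsilon$ at an anti-selfadjoint element, and (ii) the order zero condition $C_0$, which annihilates the cross-terms that appear when expanding a commutator of sums $\pi + \pi^o$. Because $\mathrm{ad} : [\pi(A),\pi(A)] \to \mathrm{Der}_{\mathrm{Inn}}(\pi(A))$ is an isomorphism by proposition \ref{invertibility}, it suffices to describe $d_e\Upsilon$ on a generic generator $[\pi(a),\pi(b)]$ with $a,b\in A$, and then take linear combinations.

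First I would establish the key identity
\begin{equation}
d_e\Upsilon\bigl([\pi(a),\pi(b)]\bigr) = 4\,[S(a),S(b)], \qquad a,b\in A. \label{keyid}
\end{equation}
Differentiating $\Upsilon(u) = \pi(u) J \pi(u) J^{-1}$ at the identity gives $d_e\Upsilon(x) = x + J x J^{-1}$. Since $\pi$ sends selfadjoint elements to selfadjoint operators, any $x \in [\pi(A),\pi(A)]$ is anti-selfadjoint, so $x^o = J x^\dagger J^{-1} = - J x J^{-1}$, hence $d_e\Upsilon(x) = x - x^o$. On the other hand, expanding
\[
4[S(a),S(b)] = [\pi(a)+\pi^o(a),\,\pi(b)+\pi^o(b)]
\]
and using $C_0$ to eliminate the two mixed commutators, only $[\pi(a),\pi(b)]$ and $[\pi^o(a),\pi^o(b)]$ survive. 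The anti-homomorphism property $(XY)^o = Y^o X^o$ yields $[\pi^o(a),\pi^o(b)] = -[\pi(a),\pi(b)]^o$, whence $4[S(a),S(b)] = [\pi(a),\pi(b)] - [\pi(a),\pi(b)]^o$, which matches $d_e\Upsilon([\pi(a),\pi(b)])$, proving \eqref{keyid}.

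With \eqref{keyid} in hand, both inclusions are immediate. An arbitrary element of $\mathcal{G}(A)$ is of the form $d_e\Upsilon\bigl(\tfrac{1}{4}\sum_i[\pi(a_i),\pi(b_i)]\bigr) = \sum_i [S(a_i),S(b_i)] \in [S(A),S(A)]$, and conversely any $\sum_i[S(a_i),S(b_i)] \in [S(A),S(A)]$ equals $d_e\Upsilon\bigl(\tfrac{1}{4}\sum_i[\pi(a_i),\pi(b_i)]\bigr) \in \mathcal{G}(A)$.

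I do not anticipate a genuine obstacle: the statement is really a restatement of the paragraph that precedes it, and all nontrivial content has been absorbed into proposition \ref{invertibility} (for the identification of inner derivations with $[\pi(A),\pi(A)]$) and condition $C_0$ (for the vanishing of the mixed commutators). The only point demanding a little care is the sign bookkeeping in $d_e\Upsilon(x) = x - x^o$ versus $x + JxJ^{-1}$, which hinges on the anti-selfadjointness of generators of $[\pi(A),\pi(A)]$.
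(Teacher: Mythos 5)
Your proof is correct and follows essentially the same route as the paper, whose own justification is exactly the paragraph preceding the proposition: $d_e\Upsilon(x)=x+JxJ^{-1}=x-x^o$ on the anti-selfadjoint space $[\pi(A),\pi(A)]$, with $C_0$ killing the mixed commutators so that $d_e\Upsilon\bigl([\pi(a),\pi(b)]\bigr)=4[S(a),S(b)]$. The only (harmless) difference is that you invoke proposition \ref{invertibility} to reduce to generators, whereas the paper needs it only for the uniqueness of $a_\delta$ in diagram \eqref{liftdiag}; for the set equality $\mathcal{G}(A)=[S(A),S(A)]$, linearity of $d_e\Upsilon$ alone suffices.
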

 

\begin{proposition}\label{Usub} Under weak $C_1$, $\mathcal{U}(A)$ is a subgroup of the automorphism group of $\mathcal{B}$. 
\end{proposition}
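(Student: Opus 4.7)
The plan is to prove the claim in two stages: first, that $\Upsilon\colon u\mapsto u\cdot JuJ^{-1}$ is a group homomorphism $U(A)\to U(H)$, so that $\mathcal{U}(A)$ is automatically a subgroup of $U(H)$; second, that every element of its image lies in $\mathrm{Aut}(\mathcal{B})$. For the homomorphism property, the key point is to show that $u$ and $JvJ^{-1}$ commute for all $u,v\in U(A)$. Since $U(A)$ is generated by $\exp([\pi(A),\pi(A)])$ and the antilinear conjugation turns $[\pi(A),\pi(A)]$ into $[\pi(A)^o,\pi(A)^o]$, this reduces to the commutativity $[[\pi(A),\pi(A)],[\pi(A)^o,\pi(A)^o]]=0$, which is immediate from $C_0$.

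For the automorphism property, fix $U=\Upsilon(u)=u\cdot(JuJ^{-1})$ and work through the axioms. Unitarity is clear because both factors are unitary and commute. Commutation with $J$ uses the centrality of $J^2$ together with the commutation of the two factors:
\begin{equation*}
JUJ^{-1}=(JuJ^{-1})(J^{2}uJ^{-2})=(JuJ^{-1})\cdot u=U.
\end{equation*}
Commutation with $\chi$ follows because $\chi$ commutes with $\pi(A)$ and hence with $u$, while the relation $\chi J=\epsilon''J\chi$ forces $\chi$ to commute with $JuJ^{-1}$ as well. For the preservation of $\pi(A)$, I note that $JuJ^{-1}$ commutes with $\pi(A)$ by $C_0$, so $\mathrm{Ad}_U|_{\pi(A)}=\mathrm{Ad}_u|_{\pi(A)}$; writing $u$ as a product of exponentials $\exp(X)$ with $X\in[\pi(A),\pi(A)]$ and using the inclusion $[X,\pi(A)]\subseteq\pi(A)$ from Proposition~\ref{propertiesofLiePi}, exponentiating the series gives $\mathrm{Ad}_u(\pi(A))\subseteq\pi(A)$.

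The substantive step, and the only one where weak $C_1$ intervenes, is showing that $U$ normalizes $\Omega^1$. I treat the two factors of $U$ separately. For $u$: for each generator $\exp(X)$ of $U(A)$ with $X=\sum_i[\pi(a_i),\pi(b_i)]$, formula \eqref{LxLy} gives $\mathrm{ad}_X\omega=4\sum_i[L_{\pi(a_i)},L_{\pi(b_i)}]\omega$ for $\omega\in\Omega^1$, which lies in $\Omega^1$ because $\Omega^1$ is a Jordan $\pi(A)$-module; exponentiating, $\mathrm{Ad}_u$ preserves $\Omega^1$. For $JuJ^{-1}$: antilinear conjugation sends $X$ to $JXJ^{-1}=\sum_i[\pi(a_i)^o,\pi(b_i)^o]\in[\pi(A),\pi(A)]^o$ (using that the $\pi(a_i),\pi(b_i)$ are self-adjoint), and weak $C_1$ says precisely that $\mathrm{ad}$ of any such element preserves $\Omega^1$; hence $\mathrm{Ad}_{JuJ^{-1}}$ preserves $\Omega^1$ as well. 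The main point of care, and the step I expect to be the most delicate, is the antilinear computation of $JXJ^{-1}$, where a mishandled sign would break the identification with $[\pi(A),\pi(A)]^o$; once that is correct, combining the two factors yields $U\Omega^1U^{-1}=\Omega^1$, completing the verification.
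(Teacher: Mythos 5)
Your proof is correct and takes essentially the same route as the paper's: you factor $\Upsilon(u)=u\cdot(JuJ^{-1})$ into two commuting factors (commuting by $C_0$), handle $\mathrm{Ad}_u$ on $\Omega^1$ via the Jordan $\pi(A)$-module structure together with \eqref{LxLy}, and handle $\mathrm{Ad}_{JuJ^{-1}}=\mathrm{Ad}_{(u^{-1})^o}$ via weak $C_1$, exactly as in the paper. The only difference is that you spell out the routine verifications (the homomorphism property of $\Upsilon$, commutation with $J$ and $\chi$, and the sign bookkeeping showing $JXJ^{-1}\in[\pi(A),\pi(A)]^o$) which the paper dismisses as clear.
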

\begin{proof}
Let $\Upsilon(u)\in \mathcal{U}(A)$. It is clear that it is unitary and commutes with $\chi$ and $J$. Moreover, $\mathrm{Ad}_{\Upsilon(u)}(\pi(A))=\mathrm{Ad}(u)(\pi(A))=\pi(A)$ by $C_0$. Finally, let us write $u=\exp(x_k)\ldots\exp(x_1)$, $x_1,\ldots,x_k\in [\pi(A),\pi(A)]$, and let $\omega\in\Omega^1$. Then $\mathrm{Ad}_{\Upsilon(u)}(\omega)=\mathrm{Ad}_{(u^{-1})^o}(\mathrm{Ad}_u(\omega))$ by $C_0$. Now
\begin{eqnarray}
\mathrm{Ad}_u(\omega)&=&\mathrm{Ad}_{\exp(x_k)}(\ldots(\mathrm{Ad}_{\exp(x_1)}(\omega)\ldots)\cr
&=&\exp(\mathrm{ad}_{x_k}(\ldots(\exp(\mathrm{ad}_{x_1}(\omega)\ldots)
\end{eqnarray}
Since $\Omega^1$ is stable under the adjoint action of $[\pi(A),\pi(A)]\subset\mathrm{Lie}_\pi(A)$ by definition, we have $\mathrm{Ad}_u(\omega)\in\Omega^1$. By weak $C_1$, $\Omega^1$ is also a $[\pi(A),\pi(A)]^o$-module, and we can repeat the same proof to show that $\mathrm{Ad}_{(u^{-1})^o}(\mathrm{Ad}_u(\omega))\in \Omega^1$.
\end{proof}


As can be seen from   proposition \ref{Usub}, diagram \eqref{liftdiag} can be completed as the commutative cube:
\begin{equation}
\xymatrix{
& {\mathrm Aut}_{\mathrm Inn}(\pi(A)) 
& & {\mathcal U}(A)\ar[ll]^{\Ad}
\\
{\mathrm Aut}_{\mathrm Inn}(\pi(A))\ar[ur]^{\mathrm{Id}}  \ar@{~>}[rr]
& & U(A) \ar[ur]_{\Upsilon}  
\\
& {\mathrm Der}_{\mathrm Inn}(\pi(A))\ar@{-->}[uu]^(0.3)\exp
& & [S(A),S(A)]\ar@{-->}[ll]_{\ad}\ar[uu]^(0.3)\exp
\\
{\mathrm Der}_{\mathrm Inn}(\pi(A))\ar[uu]_(0.3){\exp} \ar[rr]^{a}\ar@{-->}[ur]^{\mathrm{Id}}& & [\pi(A),\pi(A)] \ar[uu]^(0.3){\exp}\ar[ur]^{d_e\Upsilon}
}
\label{liftdiagcompleted}
\end{equation}
For instance, the path followed by the inner derivation $[L_{\pi(a)},L_{\pi(b)}]$ is: 
\begin{equation}
[L_{\pi(a)},L_{\pi(b)}] \mapsto \frac{1}{4}[\pi(a),\pi(b)]\mapsto \frac{1}{4}[\pi(a),\pi(b)]- \frac{1}{4}[\pi(a),\pi(b)]^o\mapsto\ad_{ \frac{1}{4}[\pi(a),\pi(b)]- \frac{1}{4}[\pi(a),\pi(b)]^o}=[L_{\pi(a)},L_{\pi(b)}]\label{path}
\end{equation}
 
Note that $\Ad$ and $\ad$ could as well be replaced with $\Ad^*$ and $\ad^*$.

\begin{rem} When proposition  \ref{invertibility} holds, all the arrows in the bottom face of the cube \eqref{liftdiagcompleted} are isomorphisms, in particular $d_e\Upsilon$. It means that there cannot be non-trivial self-opposite elements in $[\pi(A),\pi(A)]$. Note that the situation for associative backgrounds is much more involved (see \cite{AB2}).
\end{rem}

Now let us come to the fluctuation space ${\mathcal F}_D$. We want to guess what this space is basing ourselves on the following postulates:

\begin{enumerate}
\item It is a real vector space.
\item It contains the pure gauge fluctuations $UDU^{-1}-D$ for all $U\in {\mathcal U}(A)$.
\item\label{fluct3} For all $F\in {\mathcal F}_D$, ${\mathcal F}_{D+F}\subset {\mathcal F}_D$.
\end{enumerate}

The third postulate says that ``a fluctuation of a fluctuated Dirac is a fluctuation of the original Dirac''. This property holds in the associative case and seems desirable to maintain. However we could  think of requiring instead  a minimality property like the following one:

\begin{itemize}
\item[3'.] ${\mathcal F}_D$ is generated as a vector space by 2.
\end{itemize} 

Later on we will see that the system $1,2,3'$ is actually stronger than $1,2,3$. But first we prove the following result:

\begin{proposition}\label{propo7} If ${\mathcal F}_D$ satisfies either $1,2,3$ or $1,2,3'$, then it satisfies
\begin{itemize}
\item[4.] for all $F\in\mathcal{F}_D$ and $U\in\mathcal{U}(A)$, $Ad_U(F)\in\mathcal{F}_D$.
\end{itemize} 
\end{proposition}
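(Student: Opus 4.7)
My plan is to handle the two hypotheses $\{1,2,3\}$ and $\{1,2,3'\}$ separately, since the argument is structurally different in each case, although both rely on the same identity obtained by conjugating a pure-gauge fluctuation. The key observation, used in both, is that for any $U,V\in\mathcal{U}(A)$ one has
\begin{equation}
\mathrm{Ad}_U(VDV^{-1}-D)=(UV)D(UV)^{-1}-UDU^{-1},
\end{equation}
and since $\mathcal{U}(A)=\Upsilon(U(A))$ is a group, the right-hand side is a difference of two pure-gauge fluctuations of $D$.

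For the system $\{1,2,3'\}$, this is essentially the whole argument. By $3'$, every element of $\mathcal{F}_D$ is a real linear combination of pure-gauge fluctuations $F_V:=VDV^{-1}-D$ with $V\in\mathcal{U}(A)$. The identity above rewrites as $\mathrm{Ad}_U(F_V)=F_{UV}-F_U$, and by postulate $2$ both terms lie in $\mathcal{F}_D$, hence so does their difference by $1$. Since $\mathrm{Ad}_U$ is $\mathbb{R}$-linear, it maps the spanning set into $\mathcal{F}_D$ and therefore preserves $\mathcal{F}_D$ itself.

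For the system $\{1,2,3\}$, I would instead use postulate $3$ to transport $F$ through the fluctuated Dirac operator $D':=D+F$. The pure-gauge fluctuation of $D'$ by $U$ is
\begin{equation}
UD'U^{-1}-D' = (UDU^{-1}-D) + (\mathrm{Ad}_U(F)-F),
\end{equation}
which by postulate $2$ applied at $D'$ lies in $\mathcal{F}_{D'}$, and then by postulate $3$ lies in $\mathcal{F}_D$. Since $UDU^{-1}-D\in\mathcal{F}_D$ by postulate $2$ applied at $D$, and $F\in\mathcal{F}_D$ by assumption, postulate $1$ gives $\mathrm{Ad}_U(F)\in\mathcal{F}_D$, as desired.

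I do not foresee any serious obstacle: both derivations are one-line algebraic identities combined with the stated closure properties, and the only point that requires a brief comment is that $\mathcal{U}(A)$ is indeed a group (so that $UV$ appearing in the first argument makes sense), which follows from proposition \ref{Usub} (or from the fact that $\Upsilon$ is a group homomorphism). It is also worth remarking, as a sanity check of the hint in the text, that this same calculation makes the inclusion $\{1,2,3'\}\subset\{1,2,3\}$ plausible, since in the $3'$ case one can rewrite $\mathcal{F}_{D+F}$ through the same spanning argument and check containment in $\mathcal{F}_D$ directly.
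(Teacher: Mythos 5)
Your proposal is correct and follows essentially the same route as the paper's own proof: for $\{1,2,3'\}$ you use the identity $\mathrm{Ad}_U(VDV^{-1}-D)=F_{UV}-F_U$ together with postulates $1$, $2$ and linearity, and for $\{1,2,3\}$ you apply postulate $2$ at the fluctuated operator $D+F$ and transport back with postulate $3$, exactly as the paper does. No gaps; your side remark that $\mathcal{U}(A)=\Upsilon(U(A))$ is a group is a fair point the paper leaves implicit.
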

\begin{proof}
By 2, $Ad_{UV}(D)-D$ and $Ad_U(D)-D$ are both in ${\mathcal F}_D$, hence their difference $Ad_U(Ad_V(D)-D)\in\mathcal{F}_D$ by 1. By linearity this shows that $1,2,3'$ implies $4$.

Now let $F\in \mathcal{F}_D$. Then   $Ad_U(D+F)-(D+F)\in \mathcal{F}_{D+F}$ by 2. If $3$ holds, it is also in $\mathcal{F}_D$. Thus $(Ad_U(D)-D)+Ad_U(F)-F\in \mathcal{F}_D$. But $Ad_U(D)-D\in \mathcal{F}_D$ by $2$ and $F\in\mathcal{F}_D$ by hypothesis. Hence $Ad_U(F)\in\mathcal{F}_D$ by $1$.
\end{proof}

\begin{proposition} $1,2,3'\Rightarrow 3$.
\end{proposition}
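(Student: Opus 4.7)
The plan is to use Proposition \ref{propo7} to turn condition $3'$ into automorphism-invariance (condition $4$), and then to expand a pure gauge fluctuation of $D+F$ as the sum of a pure gauge fluctuation of $D$ plus an automorphism-shift of $F$. Since $3'$ characterizes $\mathcal{F}_{D+F}$ as the linear span of such pure gauge fluctuations, this will suffice.

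More concretely, I would first fix an arbitrary $F\in \mathcal{F}_D$, and note that by Proposition \ref{propo7} (which already assumes $1,2,3'$) condition $4$ holds, so $\mathrm{Ad}_U(F)\in \mathcal{F}_D$ for every $U\in \mathcal{U}(A)$. Next, for any $U\in \mathcal{U}(A)$ I would compute the pure gauge fluctuation of the shifted Dirac operator:
\begin{equation}
U(D+F)U^{-1}-(D+F)=\bigl(UDU^{-1}-D\bigr)+\bigl(\mathrm{Ad}_U(F)-F\bigr).
\end{equation}
The first summand is in $\mathcal{F}_D$ by postulate $2$, and the second is in $\mathcal{F}_D$ because both $\mathrm{Ad}_U(F)$ and $F$ are in $\mathcal{F}_D$, which is a real vector space by postulate $1$. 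Hence every pure gauge fluctuation of $D+F$ lies in $\mathcal{F}_D$.

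To finish, I would apply $3'$ to the Dirac operator $D+F$: this says that $\mathcal{F}_{D+F}$ is the real linear span of the elements $U(D+F)U^{-1}-(D+F)$ with $U\in \mathcal{U}(A)$. Since $\mathcal{F}_D$ is a real vector space containing all these generators, we obtain $\mathcal{F}_{D+F}\subset \mathcal{F}_D$, which is exactly postulate $3$.

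There is no genuine obstacle here; the only subtlety is that one must be sure Proposition \ref{propo7} is available \emph{before} invoking it (which is fine, as its statement is a hypothesis of this proposition), and that $3'$ is understood to apply uniformly to every Dirac operator in the configuration space, so that it also characterizes $\mathcal{F}_{D+F}$. With that reading, the implication is a direct two-line computation combined with the vector-space closure of $\mathcal{F}_D$.
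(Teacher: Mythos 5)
Your proof is correct and follows essentially the same route as the paper: the same decomposition $U(D+F)U^{-1}-(D+F)=(UDU^{-1}-D)+(\mathrm{Ad}_U(F)-F)$, the same appeal to Proposition \ref{propo7} for $\mathrm{Ad}_U(F)\in\mathcal{F}_D$, and the same use of $3'$ applied to $D+F$ together with linearity (postulate $1$) to conclude. Nothing is missing.
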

\begin{proof}
Let $F\in\mathcal{F}_D$. By $3'$ a general element of $\mathcal{F}_{D+F}$ is a linear combination of $Ad_{U_i}(D+F)-(D+F)=(Ad_{U_i}(D)-D)+Ad_{U_i}(F)-F$. The first summand is in $\mathcal{F}_D$ by $2$, the third one by hypothesis, and the middle one by the previous proposition. Hence by $1$, $\mathcal{F}_{D+F}\subset\mathcal{F}_D$.
\end{proof}

In this section we will investigate a minimal fluctuation space consistent  with the stronger set of axioms  $1,2,3'$. In the next section we will consider more general fluctuations consistent with $1,2,3$ for the case in which $C_1$ holds. Proposition \ref{propo7} tells us that  ${\mathcal F}_D$ is an invariant space for the adjoint representation of $U(A)$ on $B(H)$. If $\mathcal{F}_D$ is closed we obtain by  differentiation  that  ${\mathcal F}_D$ is a module over the Lie algebra $\mathcal{G}(A)=[S(A),S(A)]$ for the action $\mathrm{ad}$. Moreover it contains $ad_h(D)$ for all $h\in [S(A),S(A)]$. All of this suggest the following possible definition:

\begin{definition}\label{qwak}   The minimal fluctuation space $\mathcal{F}_D$ is the Lie module over  $\mathcal{G}(A)$ generated by elements of the form $[\delta,D]$, for $\delta\in \mathcal{G}(A)$.

\end{definition}



\begin{rem}
Since $\pi$ is faithful, the fluctuation space $\mathcal{F}_D$ is also a Lie module over inner derivations of $A$. 
\end{rem}

Conversely, it is easy to see that with   definition \ref{qwak}, 
 $1,2,3$ hold.   First, property 1 holds by definition. Let us prove 2. If $U=\exp(h)$ with $h\in\mathcal{G}(A)$, then $UDU^{-1}-D=\sum_{k=1}^\infty\frac{1}{k!}\mathrm{ad}^k_h(D)\in \mathcal{F}_D$. Suppose we have proved the result for any $U$ which is a product of $n-1$ exponentials and let $V=\exp(h)U$. Then 
\begin{eqnarray}
VDV^{-1}-D&=&Ad_{\exp(h)}(UDU^{-1}-D)+Ad_{\exp(h)}(D)-D\cr
&=&\sum_{k=1}^\infty \frac{1}{k!}\mathrm{ad}^k_h(UDU^{-1}-D)+UDU^{-1}-D+Ad_{\exp(h)}(D)-D\in \mathcal{F}_D
\end{eqnarray}
Property 2 follows by induction. Now let us prove 3. We let $F\in\mathcal{F}_D$. Then $ad_h(D+F)=ad_h(D)+ad_h(F)\in\mathcal{F}_D$. It follows that the submodule generated by $D+F$ is a subset of $\mathcal{F}_D$.


Now that we have a well-motivated definition for the minimal fluctuation space, let us take a closer look  at the form they take. It will be a linear combination of terms $\delta_k\cdot\ldots\cdot \delta_1\cdot D$, for $k\ge 1$, where $\cdot$ means the adjoint action and $\delta_j=T_j-T_j^o$ with $T_j\in [\pi(A),\pi(A)]$.  

\begin{proposition}\label{subspaceconfig} If weak $C_1$ holds,  $D+\mathcal{F}_D$ is a subspace of the configuration space.
\end{proposition}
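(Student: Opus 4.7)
The plan is to verify each condition in the definition of configuration space for $D' := D + F$, where $F \in \mathcal{F}_D$. By linearity it suffices to treat a single generator $X_k := [\delta_k, [\delta_{k-1}, \ldots, [\delta_1, D] \ldots]]$ with $\delta_j = T_j - T_j^o \in \mathcal{G}(A) = [S(A), S(A)]$ and $T_j \in [\pi(A), \pi(A)]$. I must check that $D'$ is formally self-adjoint, commutes with $J$, anticommutes with $\chi$, has bounded commutator with each $\pi(a)$, and satisfies $\Omega^1_{D'} A \subset \Omega^1$.

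The first three ``kinematic'' axioms follow by induction on $k$ from three elementary facts about $\mathcal{G}(A)$: every $\delta \in \mathcal{G}(A)$ is anti-self-adjoint (since each $S(a)$ is self-adjoint), commutes with $J$ (since $J\pi(a)J^{-1} = \pi(a)^o$ gives $JS(a)J^{-1} = S(a)$), and commutes with $\chi$ (since both $\pi(a)$ and $\pi(a)^o$ do). A direct bracket calculation shows that $\mathrm{ad}_\delta$ preserves each of the properties ``self-adjoint'', ``$J$-commuting'', ``$\chi$-anticommuting'', so starting from $D$ these features propagate up to every $X_k$. The remaining two axioms both hinge on the key stability
\[
[\mathcal{G}(A), \Omega^1] \subset \Omega^1.
\]
The non-trivial half $[[\pi(A), \pi(A)], \Omega^1] \subset \Omega^1$ follows from the identity
\[
[[\pi(a), \pi(b)], \omega] \;=\; 4\bigl(S_{\pi(a)} S_{\pi(b)} - S_{\pi(b)} S_{\pi(a)}\bigr)(\omega),
\]
obtained by expanding both sides inside the associative envelope; the right-hand side lies in $\Omega^1$ because $(\Omega^1, S)$ is a Jordan $\pi(A)$-module and is therefore closed under iterated $S$-action. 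The complementary half $[[\pi(A), \pi(A)]^o, \Omega^1] \subset \Omega^1$ is precisely weak $C_1$.

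With this stability in hand, I would prove simultaneously by induction on $k$ that $[X_k, \pi(a)] \in \Omega^1$ and is bounded, for every $a \in A$. The base case $k = 0$ is $[D, \pi(a)] \in \Omega^1_D A \subset \Omega^1$, bounded by the Jordan triple axiom. For the step, Jacobi gives
\[
[X_{k+1}, \pi(a)] \;=\; [\delta_{k+1}, [X_k, \pi(a)]] - [X_k, [\delta_{k+1}, \pi(a)]].
\]
The first summand lies in $\Omega^1$ by the inductive hypothesis together with $[\mathcal{G}(A), \Omega^1] \subset \Omega^1$, and is bounded because $\delta_{k+1}$ is. For the second summand, condition $C_0$ kills $[T_{k+1}^o, \pi(a)]$, while Proposition \ref{propertiesofLiePi} gives $[T_{k+1}, \pi(a)] \in \pi(A)$; hence $[\delta_{k+1}, \pi(a)] = \pi(c)$ for a unique $c \in A$, and the inductive hypothesis applies once more. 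Therefore $[D', \pi(a)] \in \Omega^1$ is bounded, and since $\Omega^1$ is already a Jordan $\pi(A)$-module the submodule $\Omega^1_{D'} A$ it generates is contained in $\Omega^1$. The main obstacle lies entirely in the preliminary stability lemma $[[\pi(A), \pi(A)], \Omega^1] \subset \Omega^1$, where the Jordan module structure of $\Omega^1$ and weak $C_1$ do the real work; the rest is Jacobi bookkeeping.
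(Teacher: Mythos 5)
Your proof is correct and follows essentially the same route as the paper's: an induction on the bracket length via Jacobi's identity, with $C_0$ annihilating the $[\pi(A),\pi(A)]^o$-parts of the $\delta_j$ against $\pi(a)$, the inclusion $[\pi(A),[\pi(A),\pi(A)]]\subset\pi(A)$ handling the other parts, and stability of $\Omega^1$ under $\mathrm{ad}_{[\pi(A),\pi(A)]}$ (via $\mathrm{ad}_{[x,y]}=4[S_x,S_y]$ and the Jordan module structure) together with weak $C_1$ closing the induction. The only differences are organizational: you apply Jacobi once at the top level instead of distributing $\mathrm{ad}_{\pi(a)}$ over all bracket positions, and you spell out the kinematic conditions and the stability lemma that the paper treats as obvious or definitional.
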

\begin{proof} We must prove that each $D+F$ with $F\in\mathcal{F}_D$ is a Dirac operator.  All the required properties are obviously satisfied except $\{[D+F,\pi(a)],a\in A\}\subset \Omega^1$. To see that it also  holds under weak $C_1$, it suffices to show that for all $a\in A$, $[F,\pi(a)]\in \Omega^1$. We can suppose without loss of generality that $F$ is of the form $F=\delta_k\cdot\ldots\cdot \delta_1\cdot D$, with  $\delta_j=T_j-T_j^o$ for $k\ge 1$, and $T_j\in [\pi(A),\pi(A)]$, as above. By linearity we only need to consider the case $F=R_k\cdot\ldots\cdot R_1\cdot D$ with $R_j\in[\pi(A),\pi(A)]$ or $R_j\in [\pi(A),\pi(A)]^o$, with $j=1,\ldots,k$.    If $k=1$ then $[\pi(a),F]=\pi(a)\cdot R_1\cdot D=(\pi(a)\cdot R_1)\cdot D+R_1\cdot(\pi(a)\cdot D)$ by Jacobi's identity. Now $\pi(a)\cdot R_1$ is   an element of $\pi(A)$ if $R_1\in[\pi(A),\pi(A)]$ since $\mathrm{ad}_{R_1}$ is a derivation of $\pi(A)$, and it vanishes by $C_0$ if $R_1\in[\pi(A),\pi(A)]^o$. On the other hand $\pi(a)\cdot D\in\Omega^1$, so that $R_1\cdot (\pi(a)\cdot D)$ also belongs to $\Omega^1$ using the fact that $\Omega^1$ is a both a $[\pi(A),\pi(A)]$ and a $[\pi(A),\pi(A)]^o$-module. Now suppose the property is proved for some $k$. By Jacobi again we have $a\cdot R_{k+1}\cdot\ldots\cdot R_1\cdot D=\sum_{j=1}^{k+1}R_{k+1}\cdot \ldots (a\cdot R_j)\ldots \cdot R_1\cdot D$. If $R_j\in[\pi(A),\pi(A)]^o$, the summand vanishes by $C_0$. If $R_j\in[\pi(A),\pi(A)]$ then $a\cdot R_j\in\pi(A)$, and $(a\cdot R_j)\ldots \cdot R_1\cdot D\in \Omega^1$ by induction. Then the summand belongs to $\Omega^1$ since $\Omega^1$ is a $[\pi(A),\pi(A)]$ and a $[\pi(A),\pi(A)]^o$-module.
\end{proof}

\begin{rem} It is remarkable that proposition \ref{subspaceconfig} holds under weak $C_1$ alone in the Jordan case, whereas an additional condition called weak $C_1'$ is needed in the associative case \cite{weakC1}.
\end{rem}

We have seen that definition \ref{defFluctSpace} entails properties $1,2,3$, and   from proposition \ref{propo7} it follows that $\mathcal{F}_D$ is always automorphism invariant. Now the same will be true for $D+\mathcal{F}_D$ since $U(D+F)U^{-1}=D+(UDU^{-1}-D)+UFU^{-1}$. Thus, under  weak $C_1$,   $D+\mathcal{F}_D$ is an automorphism invariant subspace of the configuration space. This means that  a particle model can consistently be defined on this space. 


 The fluctuation space given by definition \ref{defFluctSpace} is the smallest one which respects the principles we have set forth, among them gauge-invariance. Let us observe that our approach here differs somewhat from what is  usually done in the associative setting. The definition of the fluctuation space given by Connes is not only guided by the physically well-motivated principle of gauge-invariance, but by a deep generalization, namely Morita self-equivalence of spectral triples. Furthermore, Connes' fluctuation space is usually defined in the presence of $C_1$, with the story becoming somewhat more complicated  in the absence of  $C_1$~\cite{walterc1}. As we are dealing with special Jordan triples satisfying weak $C_1$, and it is not clear what the analogue of Morita self-equivalence might be in the Jordan setting,   we opt  for gauge-invariance and minimality. An obvious question is what this same approach would yield in the associative setting. In the next section we consider the construction of more general fluctuations for special Jordan triples that satisfy  $C_1$.

\section{Order conditions and general fluctuations}\label{ordersymmetries}

In previous sections we limited our discussion of  the order conditions $C_0$ and $C_1$. In general we will not restrict attention to geometries satisfying $C_1$, however it is important to understand the implications  that these conditions have, as many physically relevant and interesting geometries  will satisfy both conditions. In this section we take a closer look at  special Jordan representations in the presence of both $C_0$ and $C_1$, focusing in particular on their symmetries. 

We begin with $C_0$. Consider a (real, even) special, Jordan, pre-spectral  triple $B = (A,H,\pi,\chi,J)$. The representations $\pi$ and $\pi^0$ both individually satisfy the properties of an associative specialization.  We have already seen (proposition \ref{symact}) that it is possible to  form a new `symmetrized' action $S =\frac{1}{2}(\pi + \pi^\circ)$   which satisfies all of the properties of a multiplicative representation. Following definition \ref{defjordanmodule}, we therefore  see that  $(H,S)$ is a Jordan module, or equivalently the pre-spectral triple can be viewed as a Jordan algebra $B =A\oplus H$, where the bilinear product  extends $\circ$ on $A$ such that $a\circ h=h\circ a=S_ah$, $h\circ h'=0$, for all $a\in A$, $h,h'\in H$. Note, that both $J$ and $\chi$ will commute with the `symmetrized' representation $S=\frac{1}{2}(\pi+\pi^\circ)$, a fact which has deep implications for the construction of  physical theories with Majorana fermions. In particular, notice that only representations that commute with $J$ are compatible with the Majorana condition $JH = H$. This is a key motivation that underlies the construction of the standard model as a Jordan geometry ~\cite{Boyle:2020}.

The meaning of $C_0$ is clear. Following theorem \ref{JacTh14}, its imposition ensures that the symmetrized action of $A$ on $H$ satisfies all of the properties of a Jordan action. The upshot  is that for a `symmetric' representation satisfying $C_0$, $B =A\oplus H$ will be a  Jordan algebra, and following definition \ref{innerdeff},  its inner derivations will  be of the form 
\begin{align}
\delta = \sum  [L_a,L_b],\label{derivation0}
\end{align}
where $a,b\in B$  and $L_a$ is the Jordan multiplication by $a$ when acting on $A$ and is equal to $S_a$ when acting on $H$. Notice, however, that when either $a$ or $b$ is drawn from $H$, we have $\delta_{a,b}h=0$ for all $h\in H$. Following  definition \ref{iinnerauts}, the inner automorphisms of $B$ acting on $H$ will therefore be of the form $\alpha = e^\delta\in \mathcal{U}(A)$, where $\delta$ is  constructed from elements from $A$, and  not from $B=A+H$ more generally, i.e. $\in \delta\in \mathcal{G}(A)$\footnote{The physical meaning of the derivation of the form $\delta_{ab}$, in which either $a$ or $b$ are elements in $H$, is an interesting question, which is outside the scope of this paper.}. Inner automorphisms of $B$ acting on $H$ will therefore automatically commute with both $J$ and $\chi$ (because the symmetric action on $H$ commutes with both $J$ and $\chi$). In other words, $C_0$ ensures that $S$ is multiplicative, and as such that  the symmetries of $A$ can be `lifted' in a consistent way to the symmetrized representation of $A$ on $H$.

Next, let's consider the meaning of $C_1$.  We restrict attention to expressions of degree one and lower, meaning  we will not  consider the representation of products of forms on $H$. The representation of higher order forms is an involved discussion even in the associative setting, and deserves a paper in its own right. As a representation of higher order forms will not be necessary for deriving any of the results in this paper, we will return to the discussion  in a follow-up paper where we will show that Jordan geometries are somewhat better behaved than associative geometries at higher order (see, however, \cite{Brouder:2015qoa,Boyle:2014wba} for a more involved discussion of Junk forms and the second order condition in the associative setting).

 We begin by  equipping the  pre-spectral triple $B$ with a Dirac operator $D$, to form a special Jordan  triple $\mathcal{T} =  (A,H,\pi,D,\chi,J)$, that satisfies  all of the usual properties.  We observe that we have a map $\pi:  A\oplus \Omega^1_d  A\rightarrow  \mathrm{End}(H)$ such that $\pi$ is an associative representation on $A$ and a module map on $\Omega^1_d  A$. {  This} means that $\pi$ is an associative representation of the split null extension $A\oplus \Omega^1_dA$ \emph{up to degree $1$}. Similarly $\pi^o$ is an associative representation of   $A\oplus \Omega^1_dA$  up to degree $1$, and thanks to $C_0$ and $C_1$, these two ``representations'' commute. Thus, by theorem \ref{JacTh14}, $S = \frac{1}{2}(\pi +{\pi}^\circ)$, will  be a   multiplicative representation up to degree $1$. Since one might feel uncomfortable using this theorem ``up to degree one'', we provide a formal proof below.

\begin{proposition}\label{order1prop} The  `symmetrized' action of forms $S = \frac{1}{2}(\pi +  {\pi}^\circ)$ defined on $A\oplus \Omega^1_dA$ satifies the properties of a  multiplicative specialization for all expressions    up to degree 1. 
\end{proposition}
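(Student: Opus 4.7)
The plan is to run through the proof of Theorem \ref{JacTh14} for the pair $(\pi, \pi^o)$, verifying that each step only invokes identities that remain valid at degree $\le 1$. Inspecting the defining identities \eqref{multrep} of a multiplicative specialization, the occurrence of $a^2$ forces $a \in A$ in both; in the cubic identity $b$ may then lie in $A$ or in $\Omega^1_dA$, since the module structure places $a\circ b$ and $a^2\circ b$ at degree $\le 1$ in either case. When $a, b \in A$ both identities reduce to Theorem \ref{JacTh14} itself, with $C_0$ supplying the commutation, and the first identity $[S(a), S(a^2)] = 0$ is never genuinely new since $a \in A$ forces everything into $A$. So the only substantive check is the cubic identity with $a \in A$ and $b = \omega \in \Omega^1_dA$.

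For this case I would expand $S = \tfrac12(\pi + \pi^o)$ on both sides and group the resulting cubic monomials by type. Pure-$\pi$ (respectively pure-$\pi^o$) terms reproduce, up to the overall prefactor, the multiplicative identity for $\tfrac12\pi$ (respectively $\tfrac12\pi^o$) evaluated on $(a, \omega)$. These hold by the corollary to Theorem \ref{JacTh14} once I use that $\pi$ is an associative representation on $A$ and a module map on $\Omega^1_dA$, so that $\pi(a\circ\omega) = \pi(a)\circ\pi(\omega)$ and $\pi(a^2\circ\omega) = \pi(a^2)\circ\pi(\omega)$, and likewise for $\pi^o$. The mixed-type monomials each contain exactly one factor evaluated on $\omega$ and a mixture of $\pi$ and $\pi^o$ applied to $a$; I would rearrange them using $C_0$ (to commute $\pi(a)$ past $\pi^o(a)$) and $C_1$ (to commute $\pi(a)$ past $\pi^o(\omega)$, or $\pi^o(a)$ past $\pi(\omega)$) so that the mixed contributions on the two sides match.

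The main obstacle is simply bookkeeping the mixed cubic terms: every exchange of a $\pi$ with a $\pi^o$ must fall under $C_0$ or $C_1$, and no step must require commuting two factors both evaluated on $\Omega^1_dA$, since that would amount to a degree-$2$ condition outside the hypotheses. Because $\omega$ enters linearly and $a \in A$, each mixed monomial contains at most one factor touching $\Omega^1_dA$, so all the required exchanges are indeed covered by $C_0$ and $C_1$, and the cancellations then mirror those in the standard proof of Theorem \ref{JacTh14}.
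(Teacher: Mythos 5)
There is a genuine gap of scope. You read ``multiplicative specialization up to degree 1'' as the defining identities \eqref{multrep} restricted to expressions of degree $\le 1$, which indeed forces the quadratic argument $a$ into $A$ and leaves only the cubic identity with $a\in A$, $b=\omega$ to check. But the paper proves the \emph{linearized} identities \eqref{equivmultrep} with the degree-one element in \emph{any} of the three slots, and at degree $\le 1$ the linearized system is strictly stronger than what you verify: polarization of your identities can only move degree-zero elements around, since placing $\omega$ in the quadratic slot would require evaluating \eqref{multrep} at $a+t\omega$, whose pure terms (such as $\omega^2$ and $[S_\omega,S_{\omega^2}]$) are of degree two and hence undefined in this setting. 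Concretely, your argument never establishes the commutator identity $[S_a,S_{b\circ\omega}]+[S_b,S_{a\circ\omega}]+[S_\omega,S_{a\circ b}]=0$ (the paper's first check, which genuinely requires $C_1$ and so is not, as you claim, ``never genuinely new''), nor the cubic identity with $\omega$ in an outer position, $S_aS_bS_\omega+S_\omega S_bS_a+S_{(a\omega)b}=S_aS_{b\omega}+S_bS_{a\omega}+S_\omega S_{ab}$ (the paper's second check). Neither is a formal consequence of what you prove: polarizing your identity in $a\in A$ yields only the middle-slot instance, and even then in the form $S_aS_\omega S_c+S_cS_\omega S_a+S_{(ac)\omega}=S_{a\omega}S_c+S_{c\omega}S_a+S_{ac}S_\omega$, which agrees with the paper's version only modulo the missing commutator identity.

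This matters because the proposition exists to feed equation \eqref{order1lieb}, $[[S_a,S_\omega],S_b]=S_{[\omega,b,a]}$, which is obtained (via \eqref{order0}) by summing the second equation of \eqref{equivmultrep} over cyclic permutations, i.e.\ with $\omega$ occupying every slot; with only the middle-slot instance, the Lie-module structure of $\mathcal{F}_\omega$ and the verification of postulates 2 and 3 in section \ref{ordersymmetries} would not go through. The one case you do treat is handled correctly and by essentially the paper's method: the pure $\pi$ and pure $\pi^o$ monomials cancel because $\frac{1}{2}\pi$ and $\frac{1}{2}\pi^o$ are multiplicative up to degree 1 (using that $\pi$ and $\pi^o$ are module maps on $\Omega^1_dA$), and the mixed monomials vanish by $C_0$ and $C_1$, with $\omega$ entering linearly so that no degree-two commutation is ever needed. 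The fix is simply to run the same expansion for the two remaining trilinear placements of $\omega$, as the paper does, rather than inferring them from the unlinearized identities.
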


\begin{proof} 
	 We need to show that { the linearization of  equations \eqref{multrep}  given in} equations \eqref{equivmultrep}  hold for the case in which a single element $a,b,c\in \Omega A$ is of degree 1, and the remaining elements are of degree 0. We begin with the first equation: 
	\begin{align}
	[S_a,S_{b\circ \omega}]+[S_b,S_{a\circ \omega }]
	+[S_\omega,S_{b\circ a}] &=-\frac{1}{4} [\pi(a),\pi(b\circ \omega)^\circ]+\frac{1}{4}[\pi(a)^\circ,\pi(b\circ \omega)]\nonumber\\
	&\phantom{=} - \frac{1}{4}[\pi(b),\pi(a\circ \omega)^\circ]+\frac{1}{4}[\pi(b)^\circ,\pi(a\circ \omega)]\nonumber\\
	&\phantom{=}+ \frac{1}{4}[\pi(\omega),\pi(a\circ b)^\circ]-\frac{1}{4}[\pi(\omega)^\circ,\pi(a\circ b)]\nonumber\\
	&=  0,
	\end{align}
	for $a,b\in A$, $\omega\in \Omega^1 A$, where the first equality holds because {  as $\pi$ and $\pi^\circ$ are associative specializations}, $\frac{1}{2}\pi$ and $\frac{1}{2}\pi^\circ$ satisfy the equation separately. The second equality holds due to  $C_1$. {  For the second equation we find }
	\begin{align}
	S_aS_bS_\omega+S_\omega S_bS_a + S_{(a\omega)b}-S_aS_{b\omega}-S_{b}S_{a\omega}-S_{\omega}S_{ab}&=\frac{1}{8}[\pi(a),\pi(b)^\circ]\pi(\omega) +\frac{1}{8}[\pi(\omega),\pi(b)^\circ]\pi(a)\nonumber\\
	&+\frac{1}{8}[\pi(b),\pi(\omega)^\circ]\pi(a)^\circ +\frac{1}{8}[\pi(b),\pi(a)^\circ]\pi(\omega)^\circ\nonumber\\
	&+\frac{1}{8}[\pi(\omega)^\circ,\pi(a)\pi(b)]+\frac{1}{8}[\pi(\omega)\pi(b),\pi(a)^\circ]\nonumber\\
	&+\frac{1}{8}[\pi(a)^\circ\pi(b)^\circ,\pi(\omega)]+\frac{1}{8}[\pi(a),\pi(\omega)^\circ\pi(b)^\circ]\nonumber\\
	&=0,
	\end{align}
	for $a,b\in A$, and $\omega\in \Omega^1 A$,	where the first equality holds because  $\frac{1}{2}\pi$ and $\frac{1}{2}\pi^\circ$ satisfy {  the equation} separately, and the second equality holds due to  $C_1$. Similarly
	\begin{align}
	S_aS_\omega S_c+S_c S_\omega S_a + S_{(a c)\omega}-S_aS_{\omega c}-S_{\omega }S_{ac}-S_{c}S_{a \omega}&=\frac{1}{8}[\pi(\omega)^\circ,\pi(a)]\pi(c) +\frac{1}{8}[\pi(\omega)^\circ,\pi(c)]\pi(a)\nonumber\\
	&+\frac{1}{8}[\pi(a)^\circ,\pi(\omega)]\pi(c)^\circ +\frac{1}{8}[\pi(c)^\circ,\pi(\omega)]\pi(a)^\circ\nonumber\\
	&+\frac{1}{8}[\pi(a),\pi(c)^\circ\pi(\omega)^\circ]+\frac{1}{8}[\pi(c),\pi(a)^\circ\pi(\omega)^\circ]\nonumber\\
	&+\frac{1}{8}[\pi(a)\pi(\omega),\pi(c)^\circ]+\frac{1}{8}[\pi(c)\pi(\omega),\pi(a)^\circ]\nonumber\\
	&=0,
	\end{align}
	for $a,c\in A$, $\omega\in \Omega^1 A$. 
\end{proof}

For special Jordan triples that satisfy $C_1$, the symmetrized action of $A$ on $H$ therefore satisfies the properties of a multiplicative specialization for all expressions of degree  $1$ and lower. In effect, just as $C_0$ extends the Jordan product $\circ$ on $A$ to all of $B=A\oplus H$ as a Jordan product, the first order condition $C_1$ extends the product further  as a Jordan product to  $\mathcal{T} = A\oplus \Omega^1_DA\oplus H$, so long as one restricts attention to expression of degree  $1$ or lower. The upshot is that for special Jordan triples satisfying $C_1$, we are able to extend the `degree zero' inner derivations on $B=A\oplus H$ to include `degree one' elements $\delta:A\oplus H\rightarrow \Omega^1_d\oplus H$, of the form
\begin{align}
  \sum[L_a,L_\omega], \label{d1ders}
\end{align}
for $a\in A$, $\omega\in \Omega^1_d A$\footnote{In this paper we will not consider derivations of the form $\delta_{\omega h} = [L_\omega,L_h]$ for $h\in H$  as they act trivially on $H$.}.  Following the same route as in \eqref{path} yields the operator:

\begin{align}
F = \sum [S_a,S_\omega].\label{flucgen}
\end{align}

 Let it be clear that a rigorous justification of this step would require to define the full algebra of forms, the extension of $A$ by this algebra (in the spirit of \cite{Boyle:2014wba}) and the upgrading of \eqref{liftdiagcompleted} to this new setting, which is beyond the scope of this paper. Here we will be happy to observe that things are working at first order, and take it as a motivation to consider operators of the form \eqref{flucgen}.
 It is easy to show that  they satisfy the Leibniz rule and in particular, following Eq.~\eqref{order0}:
\begin{align}
[[S_a,S_\omega],S_b] = S_{[\omega,b,a]}\label{order1lieb}
\end{align}
for all $a,b\in A$, and $\omega\in \Omega^1_d A$. We denote the space of  degree one derivation elements { of the form given in \eqref{flucgen}}
  by ${\mathcal F}_\omega$. Notice, that unlike at  degree zero, these derivations are self-adjoint, commute with $J$, and anti-commute with $\chi$. 

We now turn to the discussion of Fluctuated Dirac operators in the presence of $C_1$. 

\begin{definition}\label{defFluctSpace} Given a special Jordan Triple $\mathcal{T}=(A,H,\pi,D,\chi,J)$,  we define the general fluctuation space to be given by $\mathcal{F}_\omega$.
\end{definition}

In addition to having Hermitian elements that commute with $J$, anti-commute with $\chi$, and map zero forms to one forms through commutation, the fluctuation space $\mathcal{F}_\omega$ satisfies the three postulates that we set out in the preceding section. In particular $\mathcal{F}_\omega$ is a real vector space that contains $\mathcal{F}_\mathcal{D}$, and for all $F\in {\mathcal F}_\omega$, ${\mathcal F}_{D+F}\subset {\mathcal F}_D$. 

\begin{proof}
$\mathcal{F}_\omega$ is a real vector space by definition, so we focus on the other two postulates. 
For the second postulate, we have only to  show that elements of the form $\delta_k\cdot\ldots\cdot \delta_1\cdot D$, for $k\ge 1$,  are in $\mathcal{F}_\omega$. To begin with, for $k= 1$ we have
\begin{align}
[D,[S_a,S_b]] = [S_{d[a]},S_b] + [S_{a},S_{d[b]}]\in \mathcal{F}_\omega, 
\end{align} 
for $a,b\in A$, which follows directly from Eq.~\eqref{da} and the definition of the symmetrized action.  Moreover, for all $F\in \mathcal{F}_\omega$, one has $[[S_a,S_b],F]\in \mathcal{F}_\omega$ by Jacobi's identity and Eq.~\eqref{order1lieb}. It then follows  that   $\mathcal{F}_\omega$ is a Lie module over $ \mathcal{G}(A)$, and as a result all elements of the form $\delta_k\cdot\ldots\cdot \delta_1\cdot D$ are in $\mathcal{F}_\omega$, proving the second postulate. Similarly,  postulate $3$ follows directly from Eq.~\eqref{order1lieb}.
\end{proof}

\begin{rem}
Finally, before closing this section, we make a brief comparison between $\mathcal{F}_\omega$, and the general associative fluctuations given in Eq.~\eqref{DefFluctAssoc}. { 
 The analogue of equation \eqref{derivation0} in the associative setting is given by:
\begin{align}
\delta =\sum  L_a - R_a
\end{align}
where $a\in B$, and where for $a\in \cal{A}$ the `left action' is given by $L_aa' = aa'$ and $L_ah =\pi(a)h$, while the `right action' is given by $R_aa' = a'a$ and $R_ah =\pi^\circ(a)h$, for $a'\in \cal{A}$ and $h\in H$. For associative geometries satisfying $C_1$, the `degree zero' inner derivations on $B=A\oplus H$ can then be extended to include `degree one' elements $\delta:A\oplus H\rightarrow \Omega^1_d\oplus H$, of the form
\begin{align}
\sum L_\omega - R_\omega, \label{d1Jders}
\end{align}
for $\omega\in \Omega^1_d A$. The analogue of Eq~\eqref{flucgen} is then given by:
\begin{align}
\sum \omega+\omega^\circ.
\end{align}
Restricting to Hermitian elements of this form, we obtain Connes' fluctuations Eq.~\eqref{DefFluctAssoc}. Notice that in the Jordan setting we obtain Hermiticity for free, it is not put in by hand.

 Further, comparison is able to be made between the associative and Jordan settings, by expressing Eq.~\eqref{flucgen}} more explicitly in terms of the associative representations $\pi$ and $\pi^\circ$. 
\begin{align}
[S_a,S_\omega] &=[\pi(a)+\pi(a)^\circ,\pi(\omega)-\pi(\omega)^\circ]\nonumber\\
&=[\pi(a),\pi(\omega)]+J[\pi(a),\pi(\omega)]J^{-1}.\label{genfluctC1}
\end{align}
We see that the general fluctuation space of a Jordan geometry is slightly more restrictive than the corresponding fluctuation space for an associative geometry, since $[\pi(a),\pi(\omega)]$ is a traceless $1$-form. Hence unimodularity is an automatic feature of the general fluctuation space. A curious question, is what phenomenological restrictions on the scalar sector of particle theories these additional restrictions will bring.
\end{rem}


\section{Jordan 1-forms and fluctuations for almost-associative special Jordan triples}\label{AAsec}
The tensor product of two Jordan algebras with product $(a\otimes b)\circ(c\otimes d)=a\circ b\otimes c\circ d$ is generally not a Jordan algebra.  However, this works if at least one of the algebras is associative. Here we will consider algebras of the form 
\begin{equation}
A=\mathcal{C}^\infty(M)\otimes A_F=\mathcal{C}(M,A_F)\label{AlmostAssAlg}
\end{equation}
where $M$ is a manifold and $A_F$ is a finite-dimensional Jordan algebra.  We call these algebras \emph{almost-associative} by analogy with almost-commutative ones.

 
Let $\mathcal{B}_M$ be the canonical Jordan background over $M$ and $\mathcal{B}_F$ be a finite Jordan background. The definition of the almost-associative Jordan background $\mathcal{B}_M\hat\otimes \mathcal{B}_F$ is  the same as in the almost-commutative case. The algebra, real structure, and chirality operators are graded tensor products and follow the same rules as given in \cite{BiziBrouderBesnard} {  and~\cite{Farnsworth:2016qbp}}. In particular one has $(T_1\hat\otimes T_2)^o=(-1)^{|T_1||T_2|}T_1^o\hat\otimes T_2^o$, where $|T_{1,2}|$ is the grading of the corresponding operator, defined by its commutation property with $\chi_{1,2}$. All the necessary checks are exactly the same for the almost-associative and almost-commutative cases, except for the module of $1$-forms, which is given by

\begin{equation}
\Omega^1_{M\times F}= \Omega^1_{M}\otimes  \pi(A_F)\oplus \mathcal{C}^\infty(M)\otimes \Omega^1_{F}\label{defomegaprod}
\end{equation}

It is immediate to check that equation \eqref{defomegaprod} defines an odd Jordan $\pi(A)$-module. Moreover, 
let $D$ be a \emph{product Dirac} 
\begin{equation}
D=D_M\hat\otimes 1+1\hat\otimes D_F\label{productDirac}
\end{equation}
where $D_M$ is in the configuration space of $\mathcal{B}_M$ and $D_F$ is in the configuration space of $\mathcal{B}_F$. Then $\Omega^1_D{A}\subset \Omega^1_{M\times F}$, with equality whenever $D_F$ is regular.

Let us turn to fluctuations. We know that $\mathcal{F}_D$ is the Lie module generated by the orbit of $D$ under the action of $[S(A),S(A)]$. Let $A=\mathcal{C}^\infty(M,A_F)$, and $D$ be a product Dirac as in \eqref{productDirac}. We will need the following lemma. We recall that a Lie algebra $\mathcal{G}$ is called \emph{perfect} if $\mathcal{G}=[\mathcal{G},\mathcal{G}]$. Semisimple Lie algebras are perfect.

\begin{lemma} Consider a finite Jordan triple over   $A_F$ and its fluctutation space $\mathcal{F}_{D_F}$. Suppose $[S(A_F),S(A_F)]$ is a perfect Lie algebra and let $\mathcal{F}_{D_F}'$ be the \emph{derived fluctuation} space $\mathcal{F}_{D_F}':=[S(A_F),S(A_F)]\cdot \mathcal{F}_{D_F}$. Then    $\mathcal{F}_{D_F}'=\mathcal{F}_{D_F}$.
\end{lemma}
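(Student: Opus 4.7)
The plan is to prove the two inclusions. The inclusion $\mathcal{F}'_{D_F} \subseteq \mathcal{F}_{D_F}$ is immediate from the definition, since $\mathcal{F}_{D_F}$ is a Lie module over $\mathcal{G} := [S(A_F),S(A_F)]$, so acting on it by elements of $\mathcal{G}$ keeps us inside. The substance of the lemma lies in the opposite inclusion, and the strategy is to show that $\mathcal{F}'_{D_F}$ is itself a Lie submodule of $\mathcal{F}_{D_F}$ which contains all the generators $[\delta,D_F]$, so by minimality it must coincide with $\mathcal{F}_{D_F}$.

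First I would check that $\mathcal{F}'_{D_F}$ is stable under the Lie action of $\mathcal{G}$. For $\delta_1\in\mathcal{G}$ and $\delta_2\cdot F\in\mathcal{F}'_{D_F}$ (with $\delta_2\in\mathcal{G}$, $F\in\mathcal{F}_{D_F}$), Jacobi gives
\begin{equation}
[\delta_1,\delta_2\cdot F]=[\delta_1,\delta_2]\cdot F+\delta_2\cdot[\delta_1,F].
\end{equation}
The first summand lies in $\mathcal{G}\cdot\mathcal{F}_{D_F}=\mathcal{F}'_{D_F}$ because $[\delta_1,\delta_2]\in\mathcal{G}$ (as $\mathcal{G}$ is a Lie algebra), and the second lies in $\mathcal{F}'_{D_F}$ because $[\delta_1,F]\in\mathcal{F}_{D_F}$ by the Lie-module property of the latter.

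Next, this is where the perfectness hypothesis enters. Every $\delta\in\mathcal{G}$ can be written as $\delta=\sum_i[\alpha_i,\beta_i]$ with $\alpha_i,\beta_i\in\mathcal{G}$. Then, using Jacobi again,
\begin{equation}
[\delta,D_F]=\sum_i[[\alpha_i,\beta_i],D_F]=\sum_i\bigl([\alpha_i,[\beta_i,D_F]]-[\beta_i,[\alpha_i,D_F]]\bigr).
\end{equation}
Each $[\beta_i,D_F]$ and $[\alpha_i,D_F]$ is a generator of $\mathcal{F}_{D_F}$, hence lies in $\mathcal{F}_{D_F}$, and bracketing again with $\alpha_i$ or $\beta_i\in\mathcal{G}$ puts the whole expression into $\mathcal{G}\cdot\mathcal{F}_{D_F}=\mathcal{F}'_{D_F}$. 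Thus every generator $[\delta,D_F]$ of $\mathcal{F}_{D_F}$ lies in $\mathcal{F}'_{D_F}$.

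Combining the two steps, $\mathcal{F}'_{D_F}$ is a Lie submodule of $\mathcal{F}_{D_F}$ containing all generators $\{[\delta,D_F]\mid\delta\in\mathcal{G}\}$. Since $\mathcal{F}_{D_F}$ is by definition the smallest such submodule, $\mathcal{F}_{D_F}\subseteq\mathcal{F}'_{D_F}$, and equality follows. There is no real obstacle here: the only nontrivial input is the perfectness of $\mathcal{G}$, which is used in exactly one place to rewrite the generator $[\delta,D_F]$ using the double-bracket identity above, converting a ``depth-one'' element of $\mathcal{F}_{D_F}$ into a ``depth-two'' one lying manifestly in $\mathcal{G}\cdot\mathcal{F}_{D_F}$.
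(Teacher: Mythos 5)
Your proof is correct and uses essentially the same argument as the paper: the easy inclusion is immediate from the module structure, and the reverse inclusion hinges on the identical key step of writing $\delta=\sum_i[\alpha_i,\beta_i]$ by perfectness and applying Jacobi to convert the generator $[\delta,D_F]$ into double-bracket elements manifestly in $\mathcal{G}\cdot\mathcal{F}_{D_F}$. The only (cosmetic) difference is that you close via an abstract minimality argument for the generated Lie module, whereas the paper works directly with the spanning set of iterated actions $T_k\cdots T_1\cdot D_F$ and observes that all terms of depth $k\ge 2$ lie in $\mathcal{F}'_{D_F}$ automatically; note also that your first step is redundant, since $\mathcal{G}\cdot\mathcal{F}'_{D_F}\subseteq\mathcal{G}\cdot\mathcal{F}_{D_F}=\mathcal{F}'_{D_F}$ follows trivially from $\mathcal{F}'_{D_F}\subseteq\mathcal{F}_{D_F}$.
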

\begin{proof}
Since the inclusion $\subset$ is obvious, we only need to prove the converse. Every finite fluctutation is a sum of terms like $T_k\cdots T_1\cdot D_F$ with $T_i\in[S(A_F),S(A_F)]$. We only need to prove that a fluctuation of the form $T\cdot D_F$ with $T\in [S(A_F),S(A_F)]$ can be written as a sum of terms $T_k\cdots T_1\cdot D_F$ with $k\ge 2$. Since  $[S(A_F),S(A_F)]$ is perfect, we can write $T=\sum_i[\alpha_i,\beta_i]$, with $\alpha_i,\beta_i\in [S(A_F),S(A_F)]$. Now we have 
\begin{eqnarray}
T\cdot D_F&=&\sum_i [\alpha_i,\beta_i]\cdot D_F\cr
&=&\sum_i \alpha_i\cdot\beta_i\cdot D_F-\beta_i\cdot\alpha_i\cdot D_F,\mbox{ by Jacobi's identity}\cr
&\in\mathcal{F}_{D_F}'
\end{eqnarray}
\end{proof}

\begin{theorem}\label{thfluct}
Let $D=D_M\hat\otimes 1+1\hat\otimes D_F$ be the product Dirac operator of an almost-associative Jordan triple. Then 
\begin{equation}
\mathcal{F}_D\subset\Omega^1_M\otimes [S(A_F),S(A_F)]\oplus\mathcal{C}^\infty(M,\mathcal{F}_{D_F})\label{ACfluctu}
\end{equation}
with equality if $[S(A_F),S(A_F)]$ is a perfect Lie algebra.
\end{theorem}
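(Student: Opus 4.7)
The plan is to fix the canonical identifications coming from the product structure and then analyze the generators of $\mathcal{F}_D$. First I would observe that the symmetrized representation factors as $S(f\otimes \alpha)=\pi_M(f)\hat\otimes S_F(\alpha)$: this uses that $\pi_M(f)$ is real and self-opposite, so that only the finite factor feels the opposite operation. Because $\mathcal{C}^\infty(M)$ is commutative and $S_F(A_F)$ lives in even degree, the graded commutator reduces to the ordinary one, and a short calculation gives
\[
[S(A),S(A)]=\mathcal{C}^\infty(M)\otimes [S(A_F),S(A_F)],
\]
so that every element of $\mathcal{G}(A)$ is a finite sum of elementary tensors $h\otimes T$ with $T\in[S(A_F),S(A_F)]$.

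Next I would verify the inclusion $\subset$. Using the graded multiplication rule and the fact that $h$, $T$ are even while $D_M,D_F$ are odd, a direct computation yields
\[
[h\otimes T,\,D]=-[D_M,h]\otimes T + h\otimes [T,D_F],
\]
so the first summand lies in $\Omega^1_M\otimes [S(A_F),S(A_F)]$ and the second in $\mathcal{C}^\infty(M,\mathcal{F}_{D_F})$. One then checks that the candidate right-hand side is closed under the adjoint action of $\mathcal{G}(A)$: the identities
\[
[h'\otimes T',\,\omega\otimes T_1]=h'\omega\otimes [T',T_1],\qquad
[h'\otimes T',\,h\otimes F_F]=hh'\otimes [T',F_F]
\]
keep one in the corresponding summand, thanks to the fact that $[S(A_F),S(A_F)]$ is a Lie subalgebra (via \eqref{order0}) and that $\mathcal{F}_{D_F}$ is, by definition, a Lie module over $[S(A_F),S(A_F)]$. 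Since $\mathcal{F}_D$ is the Lie module generated by the operators $[\delta,D]$, this settles $\subset$.

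For the reverse inclusion under perfectness, I would treat the two summands in turn. The containment $\mathcal{C}^\infty(M,\mathcal{F}_{D_F})\subset \mathcal{F}_D$ is direct and does \emph{not} use perfectness: starting from $(1\otimes T_1)\cdot D = 1\otimes [T_1,D_F]$, then applying the operators $1\otimes T_i$ for $i=2,\dots,k-1$ followed by a single $h\otimes T_k$, one produces exactly $h\otimes T_k\cdots T_1\cdot D_F$, which exhausts $\mathcal{C}^\infty(M)\otimes \mathcal{F}_{D_F}$. For the summand $\Omega^1_M\otimes [S(A_F),S(A_F)]$, perfectness enters: write $T=\sum_i[\alpha_i,\beta_i]$ with $\alpha_i,\beta_i\in[S(A_F),S(A_F)]$ and compute
\[
(h\otimes \alpha_i)\cdot\bigl((f\otimes \beta_i)\cdot D\bigr)=-h[D_M,f]\otimes [\alpha_i,\beta_i]+hf\otimes [\alpha_i,[\beta_i,D_F]].
\]
Summing over $i$, the first term becomes $-h[D_M,f]\otimes T$, while the second lies in $\mathcal{C}^\infty(M,\mathcal{F}_{D_F})$, which has already been shown to sit inside $\mathcal{F}_D$. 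Subtraction gives $h[D_M,f]\otimes T\in \mathcal{F}_D$, and since every element of $\Omega^1_M$ is a finite sum of terms of the form $h[D_M,f]$, this yields $\Omega^1_M\otimes [S(A_F),S(A_F)]\subset \mathcal{F}_D$.

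The main obstacle is this last step: one must isolate the $\Omega^1_M$ factor while cancelling a spurious contribution in $\mathcal{C}^\infty(M,\mathcal{F}_{D_F})$, and it is precisely perfectness, in the form given by the preceding lemma, that allows one to produce the commutator $[\alpha_i,\beta_i]$ acting on the $A_F$ side. The rest is careful bookkeeping with the graded tensor product sign rules and with the identities for $S$ and its opposite inherited from $C_0$ via \eqref{order0}.
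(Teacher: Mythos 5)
Your inclusion ``$\subset$'' and your treatment of the summand $\Omega^1_M\otimes[S(A_F),S(A_F)]$ are correct and close to the paper's argument (your version of the latter is in fact slightly more self-contained, since you only subtract terms lying in $\mathcal{C}^\infty(M,\mathcal{F}_{D_F}')$, where $\mathcal{F}_{D_F}':=[S(A_F),S(A_F)]\cdot\mathcal{F}_{D_F}$ is what the paper calls the derived fluctuation space). However, there is a genuine gap in your claim that the containment $\mathcal{C}^\infty(M,\mathcal{F}_{D_F})\subset\mathcal{F}_D$ ``is direct and does not use perfectness.'' Your construction --- start from $(1\otimes T_1)\cdot D$, apply $1\otimes T_2,\dots,1\otimes T_{k-1}$, finish with a single $h\otimes T_k$ --- necessarily involves at least two commutators, so it produces $h\otimes(T_k\cdots T_1\cdot D_F)$ only for $k\ge 2$; that is, it reaches $1\otimes\mathcal{F}_{D_F}$ together with $\mathcal{C}^\infty(M,\mathcal{F}_{D_F}')$, not all of $\mathcal{C}^\infty(M,\mathcal{F}_{D_F})$. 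The missing piece is exactly the single-commutator layer $h\otimes(T\cdot D_F)$ with $h$ non-constant: applying $h\otimes T$ directly to $D$ yields it only together with the spurious term $h\cdot D_M\otimes T$, which at that point of your argument is not yet known to lie in $\mathcal{F}_D$. The claim is actually false without perfectness: if $[S(A_F),S(A_F)]=\RR T$ is abelian (e.g. $A_F=\mathrm{JSpin}(2)$, whose inner derivations form $so(2)$) and $D_F$ satisfies $\mathrm{ad}_T^2D_F=0$ but $[T,D_F]\neq 0$, then every element of $\mathcal{F}_D$ has the mixed form $[f,D_M]\otimes T+f\otimes[T,D_F]$, and equating the first component to zero forces $f$ constant, so no $h\otimes[T,D_F]$ with non-constant $h$ belongs to $\mathcal{F}_D$. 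This is precisely the point of the paper's preparatory lemma, which proves $\mathcal{F}_{D_F}'=\mathcal{F}_{D_F}$ \emph{under the perfectness hypothesis}, and which the paper invokes exactly where you claim perfectness is not needed.

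The gap is repairable within your own scheme, since perfectness is assumed in the equality statement anyway. Either invoke the lemma: writing $T=\sum_i[\alpha_i,\beta_i]$ and using Jacobi, any $T\cdot D_F$ becomes a sum of terms $T_k\cdots T_1\cdot D_F$ with $k\ge 2$, after which your enumeration does exhaust $\mathcal{C}^\infty(M,\mathcal{F}_{D_F})$. Or, since your $\Omega^1_M$ step only used $\mathcal{C}^\infty(M,\mathcal{F}_{D_F}')\subset\mathcal{F}_D$ (which you legitimately established), run that step first and then recover the missing terms a posteriori from
\begin{equation*}
h\otimes(T\cdot D_F)=(h\otimes T)\cdot D-[h,D_M]\otimes T,
\end{equation*}
both summands on the right now being known elements of $\mathcal{F}_D$. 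With either repair your proof is complete; as written, though, the ``$\supset$'' direction is incomplete, and the assertion that perfectness plays no role in the $\mathcal{C}^\infty(M,\mathcal{F}_{D_F})$ inclusion is wrong.
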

\begin{proof}
Let us call $M$ RHS of \eqref{ACfluctu}. To prove that $\mathcal{F}_D\subset M$ it suffices to prove that $M$ is a $[S(A),S(A)]$-module which contains $[S(A),S(A)]\cdot D$. Let $f\otimes \alpha\in [S(A),S(A)]=\mathcal{C}^\infty(M,[S(A_F),S(A_F)])$. Then 
\begin{eqnarray}
(f\otimes\alpha)\cdot D&=&f\cdot D_M\otimes \alpha+f\otimes \alpha\cdot D_F\label{falphaonD}
\end{eqnarray}

  which belongs to the RHS of \eqref{ACfluctu}. Now let $\omega\in \Omega^1_M$, $\beta\in [S(A_F),S(A_F)]$, $g\in \mathcal{C}^\infty(M)$, and $\phi\in \mathcal{F}_{D_F}$. Then $f\otimes \alpha \cdot (\omega\otimes \beta+g\otimes \phi)=f\omega\otimes [\alpha,\beta]+fg\otimes \alpha\cdot \phi$ which also belongs in the right space since $[S(A_F),S(A_F)]$ is a Lie algebra and $\mathcal{F}_{D_F}$ an $[S(A_F),S(A_F)]$-module. 

Let us now prove the converse inclusion when $[S(A_F),S(A_F)]$ is a perfect Lie algebra. First, using $(1\otimes\alpha)\cdot D=1\otimes (\alpha\cdot D_F)$, we see that $1\otimes \mathcal{F}_{D_F}\subset \mathcal{F}_D$. Acting with $f\otimes \beta\in[S(A),S(A)]$ on $1\otimes \phi\in 1\otimes \mathcal{F}_{D_F}$ we obtain $f\otimes \beta\cdot\phi\in\mathcal{F}_{D}$, and we conclude that $\mathcal{C}^\infty(M,\mathcal{F}_{D_F}')\subset \mathcal{F}_D$. Using the lemma this shows that $\mathcal{C}^\infty(M,\mathcal{F}_{D_F})\subset \mathcal{F}_D$. There just remains to prove that $\Omega^1_M\otimes [S(A_F),S(A_F)]\subset \mathcal{F}_D$. From \eqref{falphaonD} we see that $f\cdot D_M\otimes \alpha$ is the difference of two elements of $\mathcal{F}_D$ and it thus also in $\mathcal{F}_D$. Hence any tensor of the form $\omega\otimes \alpha$ with $\omega$ exact and $\alpha\in [S(A_F),S(A_F)]$ belongs to $\mathcal{F}_D$. Now if we act on such a tensor with $g\otimes \beta\in \mathcal{C}^\infty(M)\otimes [S(A_F),S(A_F)]$, we obtain $(g\otimes \beta)\cdot (\omega\otimes \alpha)=g\omega\otimes[\beta,\alpha]$. Using sums of terms like this and the fact that $[S(A_F),S(A_F)]$ is perfect, we see that  $\Omega^1_M\otimes [S(A_F),S(A_F)]\subset\mathcal{F}_D$, and the theorem is proved.
\end{proof}

Let us apply theorem \ref{thfluct} to find the nature of  gauge fields in the case where $A_F=\bigoplus_{i=1}^kJ_i$ with $J_i=H_{n_i}(\mathbb{K})$ or $\mathrm{JSpin}(n_i)$ (see theorem \ref{ClassificationTheorem}).   Using the cube \eqref{liftdiagcompleted} and $\ker\pi=0$, we obtain that
\begin{equation}
[S(A_F),S(A_F)]=\bigoplus_{i=1}^k\mathrm{Der}_{\mathrm{Inn}}(J_i)
\end{equation}
If $J_i=H_{n_i}(\RR), H_{n_i}(\CC), H_{n_i}(\HH)$ or $\mathrm{JSpin(n_i)}$, then $\mathrm{Der}_{\mathrm{Inn}}(J_i)=so(n_i),su(n_i),sp(n_i)$ or $\mathrm{spin}(n_i)=so(n_i)$ respectively. This directly gives the nature of the gauge fields from the algebra $A_F$.  In particular we see that unimodularity is a natural feature of the  Jordan setting.

We now turn our attention to the general fluctuation space.

\begin{theorem}
Let $\mathcal{F}_\omega^{M\times F}$ be the general fluctuation space of the almost-associative background $\mathcal{B}_M\hat\otimes\mathcal{B}_F$. Then
\begin{equation}
\mathcal{F}_\omega^{M\times F}=\Omega^1_M\otimes [S(A_F),S(A_F)]\oplus \mathcal{C}^\infty(M,\mathcal{F}_\omega^F)
\end{equation}
where $\mathcal{F}_\omega^{ F}$ is the general fluctuation space of $\mathcal{B}_F$.
\end{theorem}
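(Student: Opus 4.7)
The approach is to analyze the defining generators $\sum[S_a,S_\omega]$ of $\mathcal{F}_\omega^{M\times F}$ when $a\in A=\mathcal{C}^\infty(M)\otimes A_F$ and $\omega\in\Omega^1_{M\times F}$, decomposing both on elementary tensors. Since $\Omega^1_{M\times F}$ splits as in \eqref{defomegaprod} into the two summands $\Omega^1_M\otimes \pi(A_F)$ and $\mathcal{C}^\infty(M)\otimes \Omega^1_F$, there are essentially two types of brackets to compute. I would prove both inclusions by showing (i) that each generator lies in the announced right-hand side, and (ii) that every elementary tensor of the right-hand side arises as such a bracket.

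The first key step is to factor $S$ across $\hat\otimes$ on the manifold factor. For $f\in\mathcal{C}^\infty(M)$, commutativity implies $\pi(f)^o=\pi(f)$, so $S_{f\hat\otimes\alpha}=f\hat\otimes S^F_\alpha$, where $S^F=\tfrac12(\pi_F+\pi_F^o)$. Using the recalled fact from section \ref{secJBT} that elements of $\Omega^1_M$ are self-opposite, together with the graded rule $(T_1\hat\otimes T_2)^o=(-1)^{|T_1||T_2|}T_1^o\hat\otimes T_2^o$, I would similarly get $S_{\sigma\hat\otimes\beta}=\sigma\hat\otimes S^F_\beta$ for $\sigma\in\Omega^1_M$, $\beta\in A_F$, and $S_{g\hat\otimes\omega_F}=g\hat\otimes S^F_{\omega_F}$ for $g\in\mathcal{C}^\infty(M)$, $\omega_F\in\Omega^1_F$.

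Next, using the graded product rule $(X_1\hat\otimes Y_1)(X_2\hat\otimes Y_2)=(-1)^{|Y_1||X_2|}X_1X_2\hat\otimes Y_1Y_2$, together with the fact that functions and Clifford 1-forms on $M$ commute as operators on spinors, a direct computation yields
\begin{align*}
[S_{f\hat\otimes\alpha},S_{\sigma\hat\otimes\beta}] &= f\sigma\hat\otimes[S^F_\alpha,S^F_\beta],\\
[S_{f\hat\otimes\alpha},S_{g\hat\otimes\omega_F}] &= fg\hat\otimes[S^F_\alpha,S^F_{\omega_F}].
\end{align*}
The first term lies in $\Omega^1_M\otimes[S(A_F),S(A_F)]$ and the second in $\mathcal{C}^\infty(M)\otimes\mathcal{F}_\omega^F$, which establishes $\subseteq$ by linearity. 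Conversely, setting $f=1$ shows that any elementary tensor of either summand of the right-hand side is of the displayed form, giving $\supseteq$; since $\mathcal{F}_\omega^F$ is finite-dimensional, $\mathcal{C}^\infty(M,\mathcal{F}_\omega^F)$ coincides with $\mathcal{C}^\infty(M)\otimes\mathcal{F}_\omega^F$, and the elementary tensors span it.

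The main technical point is the careful bookkeeping of graded signs in the tensor product and verifying that $S$ factors cleanly onto the finite factor, which rests on the self-opposite property of $\Omega^1_M$ and the commutativity of $\mathcal{C}^\infty(M)$. It is worth emphasizing that, in contrast with theorem \ref{thfluct} on $\mathcal{F}_D$, no perfectness hypothesis on $[S(A_F),S(A_F)]$ is needed here: $\mathcal{F}_\omega$ is generated directly by brackets $[S_a,S_\omega]$ rather than via an iterated Lie module action on the Dirac operator, so the reverse inclusion follows immediately from the elementary-tensor computation without any algebraic trick.
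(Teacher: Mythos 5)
Your proof is correct and takes essentially the same route as the paper's: decompose $a$ and $\omega$ into elementary tensors, use $f^o=f$, $\omega_M^o=-\omega_M$ and the graded tensor-product sign rules to reduce each generator $[S_a,S_\omega]$ to $f\omega_M\otimes[S^F_{a_F},S^F_{b_F}]+fg\otimes[S^F_{a_F},S^F_{\omega_F}]$, from which both inclusions follow. Your explicit verification of the reverse inclusion (setting $f=1$) and the identification $\mathcal{C}^\infty(M,\mathcal{F}_\omega^F)=\mathcal{C}^\infty(M)\otimes\mathcal{F}_\omega^F$ merely spell out what the paper compresses into ``the result follows.''
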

\begin{proof}
With the same notations as above, we let $a=f\otimes a_F\in A$ and $\omega=\omega_M\otimes b_F+g\otimes \omega_F\in \Omega^1_{M\times F}$. Then, using $f=f^o$ and $\omega_M^o=-\omega_M$ and suppressing $\pi$ for simplicity, we have
\begin{eqnarray}
[S_a,S_\omega]&=&\frac{1}{4} [a+a^o,\omega-\omega^o]\cr
&=&
\frac{1}{4} [f\otimes (a_F+a_F^o),\omega_M\otimes b_F+g\otimes \omega_F-(-\omega_M\otimes b_F^o+g\otimes \omega_F^o]\cr
&=&\frac{1}{4}f\omega_M\otimes [a_F+a_F^o,b_F+b_F^o]+fg\otimes [a_F+a_F^o,\omega_F-\omega_F^o]\cr
&=&f\omega_M\otimes [S_{a_F},S_{a_F}]+fg\otimes [S_{a_F},S_{\omega_F}]
\end{eqnarray}
The result follows.
\end{proof}

{Notice in particular that both the minimal and general gauge fluctuations will be the same. Differences may arise in the Higgs sector of a model however.
}

\section{Boyle-Farnsworth Model}\label{BFmodel}
\subsection{Definition of the model}
Let us consider the model with finite algebra $ A_F=\mathrm{JSpin}(2)\oplus H_2(\CC)\oplus H_3(\CC)\oplus\RR$ which has been proposed by Boyle and Farnsworth \cite{Boyle:2020}.  We will identify the elements of $\mathrm{JSpin}(2)$ with matrices of the form $\begin{pmatrix}
x&z^*\cr z&x
\end{pmatrix}$, with $x\in \RR$ and $z\in\CC$. The Hilbert space $H_F$  is the same as  for the SM, cf  \eqref{SMHilbertSpace}. The associative representation is 

\begin{equation}
\pi(\lambda,h,m,r)=[\lambda\otimes 1_4,h\otimes 1_4,r\oplus 1_2\otimes m,r\oplus 1_2\otimes m]\label{finitealgelem}
\end{equation}

where $[A,B,C,D]$ denotes a matrix which is block-diagonal with respect to the chirality space.  More generally we will write $a_{XY}$ with $X,Y=R,L,\bar R,\bar L$ for a matrix decomposed into chiral blocks.

\subsection{Gauge fields}

From section \ref{AAsec} we know that gauge fields will take values in some representation of $u(1)\oplus su(2)\oplus su(3)$ (since $u(1)\simeq so(2)$). To work out the precise representations, we first express the symmetrized action $S(A_F)$, which contains general elements that are sums of the  form:
\begin{eqnarray}
2S(\lambda)&=&[\lambda\otimes 1_4,0,\lambda\otimes 1_4,0]\cr
2S(h)&=&[0,h\otimes 1_4,0,h\otimes 1_4]\cr
2S(m)&=&[0\oplus 1_2\otimes m,0\oplus 1_2\otimes m,0\oplus 1_2\otimes m,0\oplus 1_2\otimes m]\cr
2S(r)&=&[r\oplus 0,r\oplus 0,r\oplus 0,r\oplus 0]
\end{eqnarray}
Because elements of different kinds commute with one another, and $S(r)$ commutes with everything, we therefore find that $[S(A_F),S(A_F)]$ is generated as a vector space by elements of the form 
\begin{eqnarray}
T(\lambda')&:=&[\lambda'\otimes 1_4,0,\lambda'\otimes 1_4,0]\cr
T(h')&:=&[0,h'\otimes 1_4,0,h'\otimes 1_4]\cr
T(m')&:=&[0\oplus 1_2\otimes m',0\oplus 1_2\otimes m',0\oplus 1_2\otimes m',0\oplus 1_2\otimes m']\label{gaugegen}
\end{eqnarray}
where edit  $\lambda'\in \RR\begin{pmatrix}
i&0\cr 0&-i
\end{pmatrix}=u(1)$, $h'\in [H_2(\CC),H_2(\CC)]=su(2)$, and $m'\in [H_3(\CC),H_3(\CC)]=su(3)$. 	
	
As noted by the authors in~\cite{Boyle:2020}, even though the Lie algebra is the correct one, the $u(1)$ charges corresponds to a linear combination of hypercharge and B-L.   
 In order to obtain the correct hyper-charges, the gauge symmetries would need to be extended to include the anomaly-free outer automorphisms of the representation. In this case one obtains the correct hypercharges, but the  model is extended by an additional gauged $B-L$ symmetry. We will not consider such an extension in this paper.

\subsection{Higgs sector}

 We recall that for the noncommutative standard model the Dirac operator takes the form:

\begin{equation}
D_F=\begin{pmatrix}
0&Y^\dagger&M^\dagger&0\cr Y&0&0&0\cr M&0&0&Y^T\cr 0&0&Y^*&0 
\end{pmatrix}\label{finiteDirac}
\end{equation}
where $Y=Y_\ell\oplus Y_q$, $Y_\ell=\begin{pmatrix}
Y_\nu&0\cr 0&Y_e
\end{pmatrix}$, $Y_q=\begin{pmatrix}
Y_u&0\cr 0&Y_d
\end{pmatrix}$, and $M=\begin{pmatrix}
m_\nu&0\cr 0&0
\end{pmatrix}\oplus 0$.

We will take this $D_F$ as our starting point.  To check that $C_1$ (resp. weak $C_1$) holds, we need only consider commutators of the form $[(a')^o,[D,a]]$ with $a,a'\in \pi(A_F)$ (resp. $a'\in [\pi(A_F),\pi(A_F)]$).  Consider first the case where $a'\in \pi(A_F)$. Using  the same notations as in \eqref{finitealgelem} for $a$ (and the same with primes for $a'$), we easily find that $[(a')^o,[D,a]]$ is a selfadjoint matrix with all blocks vanishing except the $(R,\bar R)$ and $(\bar R,R)$ ones, the latter being given by $(\lambda'-r')M(\lambda-r)$. Thus $C_1$ is not satisfied unless $M=0$,  in which case both blocks vanish. On the other hand, if $a'\in[\pi(A_F),\pi(A_F)]$, then $r'=0$ and $\lambda'=t\begin{pmatrix}
i&0\cr 0&-i
\end{pmatrix}$, $t\in\RR$. Such an element is easily seen to be in $\Omega^1_{D_F}$. In other words, $D_F$ is seen to satisfy weak $C_1$, while $C_1$ is only satisfied for $M=0$. We therefore see that stronger restrictions arise on the Higgs sector under $C_1$, than occur in the associative NCG SM.

Satisfied that this Dirac operator is compatible with weak $C_1$, let us next determine the minimal finite fluctuations. These are obtained by taking iterated commutators of elements \eqref{gaugegen} with $D_F$. As the matrix $D_F$ commutes with $T(m')$, however, we have only to focus on iterated commutators of  $D_F$ with elements of the form $T(h')$ and $T(\lambda')$. Beginning with $T(h')$ yields:
\begin{equation}
\Phi(q):=\begin{pmatrix}
0&Y(q)^\dagger&0&0\cr Y(q)&0&0&0\cr 0&0&0&Y(q)^T\cr 0&0&Y(q)^*&0 
\end{pmatrix}\label{defPhi}
\end{equation}
where $Y(q)=qY$, $q\in su(2)$ (a pure quaternion). Commuting with $T(\lambda')$  
  boils down to multiplying $q$ with $\begin{pmatrix}
 i&0\cr 0&-i
 \end{pmatrix}$ which is another quaternion, which gives nothing new. The Higgs sector is thus the same as in the Standard Model when we consider the minimal fluctuation space.

 Let us next look at the general fluctuation space. We begin by enforcing $C_1$, which sets $M=0$. We then need to determine the form of the Jordan module of finite $1$-forms.  Using hermiticity, we only need to consider the $(L,R)$-block. Now, if $a=[a_R,a_L,a_{\bar L},a_{\bar L}]$ and $b=[b_R,b_L,\ldots,\ldots]$, then the $(L,R)$-block of $a\circ [D,b]$ is
 \begin{equation}
 (a\circ [D_F,b])_{L,R}=a_LYb_R-a_Lb_LY+Yb_Ra_R-b_LYa_R
 \end{equation}
It is then easy to see that a general $1$-form $\omega$ will have a block $\omega_{LR}=\sum_{i,j}a_iYb_j$ where $a_i$ is in the associative $\RR$-algebra generated by $\mathrm{JSpin}(2)$ and $b_j$ is in the associative $\RR$-algebra generated by $H_2(\CC)$, that is $M_2(\CC)$. Now let us introduce the notation $\tilde Y:=Y\begin{pmatrix}
0&1\cr 1&0
\end{pmatrix}$. Using $Y\begin{pmatrix}
a&b\cr c&d
\end{pmatrix}=\begin{pmatrix}
a&0\cr 0&d
\end{pmatrix}Y+\begin{pmatrix}
b&0\cr 0&c
\end{pmatrix}\tilde{Y}$, we can rewrite $\omega_{LR}$ in the form $\sum x_iY+\tilde x_i\tilde Y$ where $x_i,\tilde{x_i}$ belong to the algebra generated by $\mathrm{JSpin}(2)$ and diagonal matrices, which is $M_2(\CC)$. 
Thus, $\Omega^1_F$ is 
  the set of matrices of the form 
\begin{equation}
\omega_g=\begin{pmatrix}
0&Z^\dagger g^\dagger &0&0\cr gZ&0&0&0\cr 0&0&0&0\cr 0&0&0&0 
\end{pmatrix}\label{finiteformsMzero}
\end{equation}
where $g\in M_2(\CC)$ and $Z$ is either equal to $Y$ or to $\tilde Y$.  From \eqref{genfluctC1}, the general fluctuations are thus of the form $[\pi(a),\omega_g]+J_F[\pi(a),\omega_g]J_F^{-1}$. Using the fact that $M_2(\CC)=\mathbb{H}\oplus i\mathbb{H}$, this yields four $SU(2)$-doublets $\Phi(q),\Phi(iq'),\tilde \Phi(p),\tilde\Phi(ip')$, where $q,q',p,p'$ are quaternions, $\Phi$ is defined by \eqref{defPhi} and $\tilde{\Phi}$ is the same as $\Phi$ with $\tilde{Y}$ replacing $Y$.  These fields will be independent iff the up and down components of $Y$ are. This example provides a proof that the general fluctuation space and the minimal one are in general different.

\section{The Pati-Salam model}\label{PSmodel}

\subsection{Definition of the model}
The finite Hilbert space $H_F$ is still the SM one, and $D_F$ still has the form \eqref{finiteDirac}. We take $A_F=A_{\rm PS}:=H_2(\CC)\oplus H_2(\CC)\oplus H_4(\CC)$ represented as

\begin{equation}
\pi(p,q,m)=[p\otimes 1_4,q\otimes 1_4,1_2\otimes m,1_2\otimes m]\otimes 1_N
\end{equation}

We have 

\begin{equation}
\pi(p,q,m)^o=[1_2\otimes m^*,1_2\otimes m^*,p^*\otimes 1_4,q^*\otimes 1_4]\otimes 1_N
\end{equation}

hence

\begin{equation}
2S(p,q,m)=[p\otimes 1_4+1_2\otimes m^*,q\otimes 1_4+1_2\otimes m^*,c.c.,c.c.]\otimes 1_N\label{SrepPS}
\end{equation}

\subsection{Gauge fields}
Taking the commutator of two elements like \eqref{SrepPS} we obtain that 

\begin{equation}
[S(A_F),S(A_F)]=su(2)_R\oplus su(2)_L\oplus su(4)\label{ScomSPS}
\end{equation}

 represented as

\begin{equation}
T(g_R,g_L,g)=[g_R\otimes 1_4+1_2\otimes g,g_L\otimes 1_4+1_2\otimes g,c.c,c.c]\otimes 1_N\label{genScomSPS}
\end{equation}

\subsection{Higgs sector}

The finite fluctuations are iterated commutators of elements of the form \eqref{genScomSPS} with $D_F$. More precisely, since the summands of \eqref{ScomSPS} commute among one another, a finite fluctuation has the form $T_1\cdot\ldots\cdot T_i\cdots S_1\cdot\ldots \cdot S_j\cdot R_1\cdot\ldots \cdot R_k\cdot D_F$, with $T_1,\ldots,T_i$ in $su(4)$, $S_1,\ldots,S_j\in su(2)_L$, and $R_1\cdots R_k\in su(2)_R$ (with the now usual convention that   $R_k$ operates first and $T_1$ last). The action of $R_1$ to $R_k$ on the $L-R$ sector of \eqref{finiteDirac} replaces $Y$ with $Y g_R^1\ldots g_R^k\otimes 1_4$ (up to an irrelevant sign). Then we act with $S_1,\ldots,S_j$  and get $g_L^j\ldots g_L^1\otimes 1_4 Y  g_R^1\ldots g_R^k\otimes 1_4$. Finally we act with $T_1,\ldots,T_i$, which in the $(L,R)$-sector amounts to taking the  commutators with $g_1,\ldots,g_i$. Hence the $L-R$ sector of a finite fluctuation contains a linear combination of elements of the form

\begin{equation}
1_2\otimes g_1\cdot \ldots g_i\cdot (p\otimes 1_4)Y(q\otimes 1_4)\label{PSHiggs}
\end{equation}

where $p$ and $q$ are products of elements of $su(2)$. Such products are just generic elements of $M_2(\CC)$.  Now let us write $Y_{\alpha,i}$ for the elements of the tensor $Y$, where $\alpha$ runs through $uu,ud,du,dd$, and $i$ through $\{\ell,r,g,b\}^2$. Hence, for each $i$, $Y_{.,i}$ is a $2\times 2$ matrix acting on $\CC^2_{\rm weak}$. If there are $N$ generations, $Y_{\alpha,i}$ will be a $N\times N$-matrix).  For each $i$, if $Y_{.,i}\not=0$, the products $pY_{.,i}q$ generate $M_2(\CC)$. Moreover, $M_4(\CC)$ decomposes as $\CC\oplus su(4)\oplus isu(4)$ under the adjoint action of $su(4)$. Hence \eqref{PSHiggs} will be a completely generic traceless element of $M_2(\CC)\otimes M_4(\CC)$ unless $Y_{\alpha,.}$ belongs to one of the submodules $\CC,su(4),isu(4),\CC\oplus su(4),\ldots$ for every $\alpha$. Given that $pY_{.,i}q$ generate $M_2(\CC)$ unless $Y_{.,i}$ vanishes, this is impossible for $Y\not=0$.
In conclusion a Higgs field takes values in  general traceless elements   $M_2(\CC)\otimes M_4(\CC)$. The  action of $su(2)_R$ is by multiplication on the right of $M_2(\CC)$, the action of $su(2)_L$ is by multiplication on the left of $M_2(\CC)$ and the action of $su(4)$ is the adjoint action on $M_4(\CC)$. This coincides with the results in \cite{Chamseddine:2013uq}.

\section{Beyond the special case}\label{secbeyond}

Our discussion has focussed on special Jordan coordinate algebras with associative specializations. We have looked specifically   at  real, even, special Jordan spectral triples (i.e. special Jordan triples equipped with real structure and grading operators), as these are the geometries that are of most interest when constructing physical theories. In this section we discuss the generalization to the exceptional setting, in which it is no longer possible to construct associative specializations. Our goal is not to give a complete and rigorous account, but rather to outline the most natural path forward that we see for constructing exceptional Jordan geometries. For this purpose we will focus on geometries without real structure, which will allow us to highlight  the primary distinctions between the special and exceptional cases without becoming tied down by interesting but tangential discussions concerning real structure, and the (weak) order conditions. 

Consider a special, Jordan, spectral triple without real structure  $\mathcal{T}=(A,H,\pi,D,\chi)$. A key feature of our construction is the extension of the associative  specialization $\pi$ of $A$ on $H$ to all expressions of degree 1 or less, such that
\begin{align}
\pi(a\circ b) &= \pi(a)\circ\pi(b)\\
\pi(a\circ \omega) &= \pi(a)\circ\pi(\omega)\label{assoiat1form}
\end{align}
for all $a,b\in A$, and $\omega\in \Omega^1 A$. Associative specializations readily allow for the representation  of general 1-forms once the representation of exact one forms is defined. Notice, however, that all other features of our construction actually derived from the properties of multiplicative specializations. Stated explicitly, an immediate consequence of theorem \ref{JacTh14} is that $\rho = \frac{1}{2}\pi$ satisfies all of the properties of a multiplicative specialization for expressions of degree 1 or less. The relationship between multiplicative specializations and Jordan modules then ensures that $(H,\rho)$ satisfies all the properties of a Jordan module over $A\oplus \Omega^1 A$ or equivalently that $\mathcal{T}$ can be viewed as a Jordan algebra (i.e. equations \eqref{equivmultrep} and \eqref{order0} are satisfied up to degree 1). It is the  Jordan module structure that ensures the form of the inner derivations of  $\mathcal{T}$ restricted to $H$, the form that inner fluctuations of $D$ take, and the Lie module structure of $\Omega^1_DA$.

The question of this section is how to extend the construction to accommodate exceptional Jordan coordinate algebras.  Exceptional Jordan algebras do not have  associative specializations,  but they do have multiplicative specializations, and as such it is natural to consider  Jordan spectral triples of the form  $\mathcal{T}=(A,H,\rho,D,\chi)$, where
\begin{enumerate}
	\item $A$ is a Jordan algebra with a JB algebra completion $\overline{A}$,
	\item $H$ is a Hilbert space,
	\item $\rho$ is a  faithful multiplicative  representation  of $A$,
	\item $D$ and $\chi$ satisfy all the usual properties,
	\item For all $a\in A$, $ [D,\rho(a)]$ is well-defined and bounded.
\end{enumerate}

Let us unpack this construction, beginning with the coordinate algebra and its representation on the Hilbert space  $(A,H,\rho)$. It should be stressed  that in the general Jordan setting we have in general that
\begin{align}
\rho_{a\circ b}\neq \rho_a\circ \rho_b,
\end{align}
for $a,b\in A$. Equation \eqref{assocrep} only holds for the special case in which one has an associative specialization, which is the key assumption that we are dropping in this section. In other words, the algebra generated by elements of the form $\rho_a$ for $a\in A$, equipped with the symmetrized product $\circ$, will be a special Jordan algebra that will not in  general be  isomorphic to $A$. Nonetheless, the properties of $\rho$ ensure that $B=A\oplus H$ is a Jordan algebra, and as such the form of its inner derivations are known. In particular, following equation \eqref{derivation0}, when restricted to  $H$ the inner derivations of $B$ are given by  $ \delta_{ab} = [\rho_a,\rho_b]$, for $a,b\in A$. Furthermore, following Equation~\eqref{order0}, we have
\begin{align}
[\delta_{ab},\rho_c] = \rho_{[b,c,a]}
\end{align}
for $c\in A$. The `lift' of the inner derivations of $A$ therefore remain exactly as in the special Jordan setting with no alteration.   At order zero our construction therefore ports readily to the exceptional setting. The same observation applies to     the minimal space $\cal{F}_D$ defined in section \ref{fluctuationsection}: it continues to make sense    as  the Lie module over  $\mathcal{G}(A)$ generated by elements of the form $[\delta,D]$, for $\delta\in \mathcal{G}(A)$.  Hence, a spectral triple over a non-special Jordan algebra together with its minimal fluctuation space could in principle be constructed.

In order to define an algebraic background and/or a general fluctuation space, one would need a general theory for the representation of Jordan $1$-forms. Let us say a few words about why such a construction is difficult. Assuming one is able to obtain a module of universal 1-forms with the universal	property~\eqref{UPomega1}\footnote{See~\cite{DV:2016} for an explicit construction of differential calculi over exceptional Jordan algebras.}, it would then be natural to consider the action of exact one forms  on $H$ following equation \eqref{da}:
\begin{align}
\rho_{d[a]}= [D,\rho_a],
\end{align}
where $a\in A$. Unfortunately,  the degree 1 extension of a multiplicative representation will have in general $\rho_{a\circ \omega}\neq \rho_a\circ \rho_\omega$, for $a\in A$, $\omega\in \Omega^1_d A$. It is therefore not possible to use our approach for special Jordan algebras to define the representation of general 1-forms, or to prove in general that the  map $a\mapsto [D,\rho(a)]$ is a derivation into $\rho(A)$.  This  is the primary obstruction to developing a completely general framework for Jordan coordinate algebras. Fortunately, while a general theory of Jordan $1$-forms seems to be difficult to reach, an approach tailored to the particular properties of the Albert algebra appears easier to obtain. Work in this direction is currently in progress, and is planned for an upcoming paper.

%
%
%
%

%
%
    
\section{Conclusion}

%

 In this paper we have proposed a definition for spectral triples and backgrounds based on Jordan coordinate algebras. Our treatment  focuses primarily on special Jordan algebras with associative representations. We have also outlined what we view to be the most natural path towards a generalization from the special Jordan case to the  exceptional setting. The key elements outlined in the extension of the spectral triple formalism to nonassociative geometries of Jordan type are:
\begin{enumerate}
	\item The definition of coordinate algebra representations, and the generalization of order conditions.
	\item The definition of internal symmetries, and their lifts to the representation space.
	\item The construction of differential forms and their representations.
	\item Inner fluctuations of the Dirac operator.	 
\end{enumerate}
Regarding the fluctuation space, we have shown that under the assumption of weak  $C_1$  the minimal space of fluctuated Dirac operators provides a  subspace of the configuration space which is automorphism invariant and thus can be used to define a consistent particle model. We have also explained the relationship of the general fluctuation space { to the fluctuation space given by Connes} in the associative setting.  An insight we have used is the construction of fluctuations as general derivation elements of degree 1. This was also seen when constructing nonassociative geometries of alternative type~\cite{Farnsworth:2013nza} (see equation 3.23 in that paper).  This observation should   allow one to extend the spectral formalism to general non-associative geometries, of which Jordan geometries are just one example.
 
We highlight a number of points in which Jordan geometries appear to do a better job of describing gauge theories than in the associative noncommutative setting:
\begin{enumerate}
\item As outlined in~\cite{Boyle:2020},  the Jordan setting appears to provide a more natural setting for describing Majorana fermions (and by extension provides a natural solution to the fermion doubling problem).
\item  The infinitesimal automorphisms of the coordinate algebra (inner derivations) are faithfully represented as operators on the Hilbert space.
\item { Because the infinitesimal automorphisms of the coordinate algebra are represented by commutators, unimodularity is naturally implemented.}
\item Jordan Banach are   more natural candidates for coordinatizing geometric spaces than real $C^*$ algebras.
\end{enumerate}

Despite these benefits there does appear to be one (potential) downside. In particular, the representation of $U(1)$ factors appears to be far more restrictive in the Jordan setting. This, however, might also be seen as a benefit from the perspective of construing realistic gauge theories, as it limits the allowable models that are able to be built (which is one of the key reasons for the interest in spectral geometry in the first place).

\end{document}